\documentclass[a4paper,USenglish]{lipics-v2021}
\pdfoutput=1
\hideLIPIcs

\title{FinWhale: An Optimally Resilient Two-Round Terminating DAG Protocol} %TODO Please add

\author{Razya Ladelsky}
{Technion, Israel}
{ladelskyr@cs.technion.ac.il}
{}
{}

\author{Roy Friedman}{Technion, Israel}{roy@cs.technion.ac.il}{}{}

\authorrunning{R. Ladelsky and R. Friedman}

\Copyright{Razya Ladelsky and Roy Friedman}

\ccsdesc[500]{Computer systems organization~Dependable and fault-tolerant systems and networks}

\keywords{DAG protocols, Byzantine Atomic Broadcast, Consensus, Fast Termination} %TODO mandatory; please add comma-separated list of keywords

\category{} %optional, e.g. invited paper

\relatedversion{} %optional, e.g. full version hosted on arXiv, HAL, or other respository/website
\nolinenumbers %uncomment to disable line numbering

\EventEditors{Keren Censor-Hillel}
\EventNoEds{1}
\EventLongTitle{40th International Symposium on Distributed Computing (DISC 2026)}
\EventShortTitle{DISC 2026}
\EventAcronym{DISC}
\EventYear{2026}
\EventDate{November 9--13, 2026}
\EventLocation{Rome, Italy}
\EventLogo{}
\SeriesVolume{40}
\ArticleNo{}
\usepackage[ruled]{algorithm}
\usepackage[noend]{algpseudocode}
\newif\ifthesis

\thesisfalse % paper submission version

\begin{document}

\maketitle

%TODO mandatory: add short abstract of the document

\begin{abstract}
  DAG-based Byzantine Fault Tolerant (BFT) protocols provide high-throughput consensus under partial synchrony, but existing DAG protocols still require at least three message delays to commit decisions. 
In contrast, fast-path BFT protocols can achieve optimal two-message-delay termination under favorable conditions, though they do not naturally extend to DAGs.

We present FinWhale, the first DAG-based BFT protocol with a two-message-delay fast path. 
FinWhale extends Mysticeti with a novel fast-path commit mechanism that safely coexists with the protocol’s original slow-path rules. 
To preserve safety across different local DAG views, we introduce new commit structures based on FP-evidence blocks, enabling validators to combine fast-path and slow-path reasoning consistently.

FinWhale operates in the partially synchronous model with $n=3f+2p-1$ validators, matching the known lower bound for fast Byzantine consensus. 
The protocol tolerates up to $f$ Byzantine faults and achieves fast termination whenever at most $1\leq~p\leq~f$ validators fail during the fast path.
Our results show that optimal-latency fast paths can be integrated into uncertified DAG consensus protocols.

\begin{comment}

We introduce \sysnamec, the first DAG-based Byzantine consensus protocol to achieve the lower bounds of latency of 3 message rounds.
Since \sysnamec is built over DAGs it also achieves high resource efficiency and censorship resistance.
% 
\sysnamec achieves this latency improvement by avoiding explicit certification of the DAG blocks and by proposing a novel commit rule such that every block can be committed without delays, resulting in optimal latency in the steady state and under crash failures.
%
We further extend \sysnamec to \sysnamefpc, which incorporates a fast commit path that achieves even lower latency for transferring assets. Unlike prior fast commit path protocols, \sysnamefpc minimizes the number of signatures and messages by weaving the fast path transactions into the DAG. This frees up resources, which subsequently result in better performance.
%
We prove the safety and liveness in a Byzantine context. We evaluate both \sysname protocols and compare them with state-of-the-art consensus and fast path protocols to demonstrate their low latency and resource efficiency, as well as their more graceful degradation under crash failures. \sysnamec is the first Byzantine consensus protocol to achieve WAN latency of 0.5s for consensus commit while simultaneously maintaining state-of-the-art throughput of over 200k TPS. Finally, we report on integrating \sysnamec as the consensus protocol into \realblockchain, resulting in over 4x latency reduction.
\end{comment}

\end{abstract}
\section{Introduction} \label{sec:overview}

Directed Acyclic Graph (DAG)-based protocols aim to provide a Byzantine Fault Tolerance (BFT) replicated ledger, or blockchain, abstraction, i.e., to maintain a consistent total order of transactions across distributed validators, even in the presence of malicious and faulty nodes~\cite{shoal++, mysticeti,  HashGraph, narwhal-tusk, Aleph, mahi-mahi, dag-rider,Cordial-Miners,  starfish, sailfish,  bullshark, shoal}.
DAG-protocols are claimed to offer higher throughput even under asynchrony and failures compared to classical leader-based protocols.
However, many DAG protocols suffer from high latency. 
Mysticeti is a recent DAG-based BFT protocol that can achieve the latency lower bound of three message delays and maintains low latency even in the presence of certain crash failures.
As an indication of its practical relevance, Mysticeti has been recently adopted by both the Sui and IOTA blockchains.

Interestingly, it was shown that BFT consensus can be accelerated with a \emph{fast path}, which allows validators to commit values in just two message delays under favorable conditions. 
This matches the optimal latency achievable in crash fault-tolerant systems. 
Early work by Martin and Alvisi, who introduced the FaB protocol~\cite{fab}, presented a two-round termination algorithm built on PBFT~\cite{PBFT} with \( n \ge 5f + 1 \) validators, along with a parameterized variant requiring \( n = 3f + 2p + 1 \), where \( f \) denotes the total number of faulty validators tolerated, and \( p \le f \) represents the number of faulty validators that need not participate for the fast path to succeed.

Subsequent works by Kuznetsov et al.~\cite{revisiting-optimial-resilience} and Abraham et al.~\cite{good-case-broadcast}
further reduced these thresholds to $n = 5f - 1$ and $n = max(3f+2p-1, 3f+1)$ (for $ p \le f$), 
which have been shown to constitute lower bounds for fast-path BFT partially synchronous protocols.
Banyan~\cite{banyan} introduced a fast path that achieves fast termination in just two message delays, while departing from traditional PBFT-style designs: it is built on the Internet Computer Consensus (ICC) protocol~\cite{ICC} and operates with $n = 3f+2p-1$ validators, where $p \ge 1$ . 

In this work, we address the question of whether decisions within two message delays can be achieved in DAG-based protocols.  
In DAG protocols, the graph continuously grows as rounds advance, and the challenge lies in identifying commit patterns within an already constructed DAG. 
Our approach integrates fast-path commit patterns with those used for slow-path commits, enabling both paths to coexist consistently within the DAG\footnote{The Mysticeti paper~\cite{mysticeti} introduces Mysticeti-FPC, which provides a fast path for transactions that do not require consensus. In contrast, our work adds a fast path to the consensus protocol itself, covering all types of transactions.}.

\noindent \textbf{Our Contribution:} In this work, we introduce \textit{FinWhale}, a 
fast-path mechanism for the Mysticeti protocol, making it the first DAG-based protocol to support termination in just two message delays. 
We prove both safety and liveness under partial synchrony with $n = 3f + 2p - 1$ validators, matching the lower bound. 
The protocol tolerates up to $f$ Byzantine failures and achieves fast termination when the number of actual faults does not exceed $p$ ($1 \le p \le f$), the round leader is honest, and the system has passed GST. This is while preserving the core guarantees and benefits of Mysticeti.

The rest of the paper is organized as follows. 
Section~\ref{sec:model} describes the system model and problem definition. 
Section~\ref{sec:Mysticeti} reviews the Mysticeti protocol. 
Section~\ref{sec:FinWhale} presents the FinWhale protocol. 
An execution example of FinWhale is illustrated in Section~\ref{sec:run_example}. 
Formal safety and liveness proofs are provided in Section~\ref{sec:security} and Section~\ref{sec:more-security}.
Section~\ref{sec:related-work} discusses related work, and Section~\ref{sec:conclusion} concludes the paper. 
\section{Model}
\label{sec:model}
As is common in many BFT works, we assume a message-passing network among a set of $n$ validators, connected via reliable and authenticated point-to-point links. 
Messages sent by honest validators are eventually delivered and cannot be forged, duplicated, or altered by the adversary. 
The adversary is computationally bounded and cannot break standard cryptographic primitives such as digital signatures or hash functions.
In particular, validators cannot impersonate each other.

The network operates under the \emph{partially synchronous} model~\cite{dls88}. 
That is, in each run of the system, there exists a Global Stabilization Time (GST), which is unknown to the validators.
Before GST, message delays are arbitrary and unbounded.
After GST, the network becomes synchronous: all messages between honest validators are guaranteed to be delivered within a known bounded delay~$\Delta$.

We assume that the system can have up to $f$ \emph{Byzantine validators}, i.e., validators that may behave arbitrarily, including maliciously, by sending conflicting or incorrect messages to disrupt the protocol. The remaining validators are considered \emph{honest}, meaning they follow the protocol exactly as specified.

The total number of validators participating in the protocol is
$n \geq 3f + 2p - 1$,
where $p$ is a parameter satisfying $1 \le p \le f$. 
Intuitively, $p$ serves as a flexible threshold for enabling a \emph{fast-path decision}: if the number of faults in the system is at most $p$, the protocol can potentially reach consensus quickly using the fast path. However, even when the number of Byzantine faults exceeds this threshold, the protocol still tolerates up to $f$ Byzantine faults overall by relying on a slower, more conservative decision process to ensure safety and liveness.
Setting $p=1$ enables the protocol to operate with $3f+1$ validators, achieving the optimal resilience bound~\cite{dls88}. In this setting, the protocol takes the fast path after GST under an honest leader provided that at most one validator fails. In contrast, setting $p=f$ guarantees that, under an honest leader after GST, the fast path is always taken.

As in prior DAG works~\cite{mysticeti,dag-rider,bullshark}, the protocol's goal is to solve the \emph{Byzantine Atomic Broadcast} (BAB) problem.
Specifically, a validator $v_k$ can invoke $\mathsf{a\_bcast}_k(m, id)$ to broadcast a message $m$ with a unique sequence number $id \in \mathbb{N}$. 
Each validator $v_i$ outputs delivered messages via $\mathsf{a\_deliver}_i(m, id, v_k)$, indicating that $v_i$ has delivered the message $m$ associated with $id$ that was broadcast by $v_k$.
A protocol solving the BAB problem must satisfy the following properties:

\begin{enumerate}
    \item \textbf{Agreement:} If an honest validator $v_i$ outputs $\mathsf{a\_deliver}_i(m, id, v_k)$, then every other honest validator $v_j$ eventually outputs $\mathsf{a\_deliver}_j(m, id, v_k)$.
    
    \item \textbf{Integrity:} For each sequence number $id \in \mathbb{N}$ and validator $v_k$, there exists at most one message $m$ such that an honest validator $v_i$ outputs $\text{deliver}_i(m, id, v_k)$.
    
    \item \textbf{Validity:} If an honest validator $v_k$ calls $\mathsf{a\_bcast}_k(m, id)$, then every honest validator $v_i$ eventually outputs $\mathsf{a\_deliver}_i(m, id, v_k)$.

    \item \textbf{Total Order:} 
    If an honest validator $v_i$ outputs $\mathsf{a\_deliver}_i(m, id ,v_k)$ 
    before $\mathsf{a\_deliver}_i(m', id', v_k')$, 
    then no honest validator $v_j$ outputs $\mathsf{a\_deliver}_j(m', id', v_k')$ 
    before $\mathsf{a\_deliver}_j(m, id, v_k)$.
\end{enumerate}

\section{Mysticeti Overview} \label{sec:Mysticeti}
The Mysticeti paper~\cite{mysticeti} refers to the core DAG-based atomic broadcast protocol as Mysticeti-C.
For brevity, in the rest of this work, we simply refer to both the system and the protocol as Mysticeti.

Below, we present the basic building blocks of Mysticeti, which also constitute the slow path of our protocol.
The original construction as described in~\cite{mysticeti} was shown to suffer from liveness issues~\cite{liveness-issues, sailfish}, which were subsequently resolved in~\cite{starfish}.
We therefore adopt the version of Mysticeti as presented in~\cite{starfish} , which augments the protocol with a push-based pacemaker and provides provable liveness guarantees. 
%The details of this mechanism are described later.

\noindent \textbf{DAG structure:} The protocol proceeds in consecutive rounds. 
In each round $r$, every honest validator broadcasts a single signed block containing its identifier (author), the round number $r$, payload (users' transactions), and a set of $n$ edges to the latest blocks created by distinct validators in rounds up to $r-1$. 
Among these edges, at least $n - f$ refer to blocks created in round $r-1$.

Each edge includes the hash of a previously received block, and the referenced block is called a \emph{parent} of the new block.
In addition, every block includes an edge that references the previous block created by the same validator. 
%% All the blocks known to a validator form a directed acyclic graph (DAG). %% repeated below

A block is said to be \emph{valid} if it conforms to the above block structure and its signature is correctly verified under the public key of its author. 
Each honest validator maintains a local DAG consisting of the blocks it has created and the valid blocks it has received.
%Only valid blocks are inserted into the local DAG. 
A block is inserted only when all its parent blocks are present; otherwise, it is buffered until the missing parents are received.
The conditions under which a validator creates a block and advances to the next round are defined below when we discuss the pacemaker.

When an $\text{a\_bcast}_i(m, id)$ is invoked at validator $v_i$, the pair $(m,id)$ is pushed to the DAG layer.
Validator $v_i$ will include $(m, id)$ in the payload field of its next created~block.

%All the blocks known to a validator form a directed acyclic graph (DAG), and each honest validator maintains its own local DAG.

\noindent \textbf{Leader slots:}
The protocol deterministically designates a \emph{leader} for each round, according to a round-robin schedule over the validators. Each leader together with its round defines a \emph{leader slot}.  
Note that if the leader equivocates, multiple blocks may be associated with the leader slot.

%Mysticeti defines three possible states for a leader slot: \texttt{to-commit}, \texttt{to-skip}, and \texttt{undecided}. 
Each leader slot is initially marked as \texttt{undecided}. 
Each validator analyzes its local DAG and decides, for every leader slot, whether its state should be changed to \texttt{to-commit} or \texttt{to-skip}.
Once taken, such decisions are non-reversible.
%A slot marked \texttt{undecided} has not yet been assigned a final decision.

A slot marked \texttt{to-commit} designates its associated leader block as \texttt{committed}, and the decision rules ensure that at most one block is committed per slot. 
The committed leader blocks determine the ordering of all blocks in the DAG.

\begin{comment}
    
A slot marked as \emph{to-commit} indicates that the leader block  becomes part of the \emph{commit sequence}, which is the ordered list of leader blocks that determines the ordering of all blocks in the DAG, as explained~below.
A slot marked as to-commit 

A Byzantine round leader may produce multiple versions of its block, in which case at most one of them (or none) can be marked as to-commit. 
A slot marked as to-skip means that none of the leader’s blocks will ever be committed. 
A slot in the undecided state will be resolved later, either to to-commit or to-skip
%—and no subsequent leader can be added to the commit sequence until this decision is made.

\noindent \textbf{Patterns:}  
The DAG is analyzed for specific patterns, the presence of which in turn determines the state of each leader slot.  
Throughout the rest of this paper, we use the convention that whenever a pattern requires a specific number of blocks satisfying a given property, only blocks from distinct validators are considered.
\end{comment}

We now define the notion of voting, a key concept in identifying DAG-based patterns underlying the decision rules.

\noindent \textbf{Voting:}
A block $b'$ in round $r+1$ \emph{votes for} a leader block $b$ in round $r$ if $b$ is one of its parents.
Clearly, a block created by a Byzantine validator may reference more than one leader block; in this case, it votes for the first leader block in its parent set.
\begin{comment}
\noindent
\textbf{Remark.} 
\sysnamec introduces the notion of a \emph{wave}, defined as a sequence of three or more consecutive rounds, after which a decision is made regarding the proposer slots of the first round in the wave. 
The paper discusses the trade-offs associated with different wave lengths in Sections~??. 
In practice, the wave length is fixed to three rounds. 
In this work, we adopt the same configuration, which allows us to employ a simplified definition of support. 
For the same reason, the following definition of a \emph{SP-certificate} (called a certificate in Mysticeti) is also presented in its simplified form.
\end{comment}

Since Mysticeti serves as the slow path of our protocol, we retain its certificate-based decision mechanism, and refer to it as the \emph{slow-path certificate (SP-certificate)}, and refer to its skip condition as the \emph{slow-path skip pattern (SP-skip)}.

\noindent \textbf{SP-certificate:}
Given a leader block $b$ at round $r$, a block $b'$ at round $r+2$ is called an \emph{SP-certificate} for $b$ if it references a quorum of $\lceil \frac{n+f+1}{2} \rceil$ blocks from distinct validators that vote for $b$. 
%\textcolor{red}{-should I mention what a quorum is? No}

\noindent \textbf{SP-skip:}
An \emph{SP-skip pattern} occurs when, for each block $b$ of the leader slot (there may be multiple such blocks if the leader equivocates), there exists a quorum of blocks from distinct validators in round $r+1$ that do not vote for $b$. Particularly, if a block $b$ of the leader slot is not in the local DAG, then all observed blocks in round $r+1$ trivially do not vote for $b$.

Note that in case of conflicting leader blocks for the same slot, at most one of them can have an SP-certificate, since by quorum intersection, at most one can garner the required quorum of blocks from distinct validators that vote for it.
This property is crucial for the correctness of the decision rules.

\noindent
\textbf{Decision rules:}
Each validator attempts to decide each leader slot by first applying the direct decision rule, and then applying the indirect commit rule to earlier undecided slots. 
This process proceeds from the highest  round to the lowest undecided round.

The \emph{direct decision rule} marks a leader slot at round~$r$  as \texttt{to-commit} if there exists a quorum of SP-certificates for its leader block at round~$r+2$, in which case that block is \emph{committed}(see Fig.~\ref{fig:SP-direct}). 
The slot is marked \texttt{to-skip} if the DAG contains an SP-skip pattern for the slot.

The direct commit rule enables committing leader blocks within three message delays when there are no failures and the system is fully synchronized. 
The direct skip rule allows slots to be decided quickly when the leader has crashed or failed to produce a valid block, thereby reducing the number of undecided slots.

If the direct decision rule does not decide the leader slot at round~$r$, the validator applies the \emph{indirect decision rule}:
The validator first searches for an \emph{anchor}. The anchor is the earliest leader slot with round number greater than $r+2$ that is marked as either \texttt{to-commit} or \texttt{undecided}. 
If the anchor is \texttt{undecided}, then the slot remains \texttt{undecided}. 
If the anchor is marked \texttt{to-commit}, the validator checks whether there exists a path from the anchor to an SP-certificate for a block from the leader slot. If such a path exists, the slot is marked \texttt{to-commit} and the corresponding block is \emph{committed} (see Fig.~\ref{fig:SP-indirect-commit}); otherwise, it is marked \texttt{to-skip}.

\noindent \textbf{Commit sequence:}
After applying the decision rules to all leader slots, the validator extends its \emph{commit sequence} in ascending order of round numbers. Leader slots marked as \texttt{to-skip} are omitted, while committed blocks from slots marked as \texttt{to-commit} are included in the sequence.
The sequence terminates at the first \texttt{undecided} slot.
The safety and liveness proofs of Mysticeti establish that all honest validators commit a consistent sequence of leader blocks and that, eventually, every slot is~decided.

\noindent \textbf{Total order:}
Validators obtain a total order over all blocks in the DAG by deterministically sorting the causal histories of leader blocks in the commit sequence.
Specifically, for each leader block in the commit sequence, the latter part of its causal history that has not been already ordered by previous leader blocks is deterministically sorted and added to the total ordering sequence.
\begin{comment}
If a slot contains multiple blocks due to equivocation, exactly one block is (deterministically) selected.
\end{comment}
For each ordered block $b$, validators deliver $(m,id,b.author)$ tuples derived from the $(m,id)$ pairs in the payload of $b$. A tuple is delivered only if no tuple with the same $id$ and $b.author$ has been delivered previously.

\noindent\textbf{Safety intuition:}
If a validator commits a block $b$ for a given slot, then no other validator can skip that slot or commit a conflicting block for the same slot.
To that end, a key property proved in \cite{mysticeti,starfish} and by our Lemma~\ref{lemma:recursion-safety_a} states that: if $b$ is committed directly by some validator, then a path from the anchor to an SP-certificate for $b$ exists in every validator’s DAG. 
This property rules out indirect skipping by definition.
Moreover, the presence of an SP-certificate, whether $b$ is committed directly or indirectly, excludes both direct skipping and committing a conflicting block, by quorum intersection.
The remaining case is when $b$ is committed indirectly by different validators.
In this case, all validators agree on the same committed anchor block, as shown in~\cite{mysticeti,starfish} and by our Lemma~\ref{lemma:consistent-slots-1}.
Consequently, they observe the same causal history and reach the same decision for that slot.

\noindent\textbf{Push Pacemaker:}
The Push Pacemaker is responsible for ensuring liveness, by specifying the conditions for advancing rounds, creating blocks, and broadcasting unknown parts of the history.
It consists of the following aspects:
%We adopt the conditions with their notations as follows:

\noindent\textbf{Advancing rounds:}
A validator advances from round $r-1$ to round $r$ when the following two conditions are satisfied:
\begin{description}
    \item[\textbf{A1:}] There are $n-f$ round $r-1$ blocks from distinct validators in the local DAG.
    \item[\textbf{A2:}] The validator has created its own block in round $r-1$.
\end{description}

\noindent\textbf{Creating blocks:}
After advancing to round $r$, a validator records its local time.
We set $\delta_{LT} = 2\Delta$ as the timeout duration.
The validator then creates its block when one of the following three conditions holds:
\begin{description}
  \item[\textbf{C1:}] The following two requirements are satisfied:
    \begin{description}
      \item[\textbf{L1:}] The local DAG contains a leader block from round $r-1$.
      \item[\textbf{L2:}] The local DAG contains either (i) a quorum of voters from
        distinct validators for the leader of $r-2$, or (ii) an SP-skip
        pattern for the leader of $r-2$.
    \end{description}

  \item[\textbf{C2:}] The $\delta_{LT}$ timeout has expired.

  \item[\textbf{C3:}] The local DAG contains $n-f$ blocks from distinct validators
    in round $r$.
\end{description}

When \textbf{C1} triggers block creation, the validator should choose to include in its parents the set of blocks that were used to satisfy requirement \textbf{L2} (it is important in cases where there are multiple versions of $r-1$ blocks belonging to Byzantine validators).
Waiting for \textbf{C1} to be satisfied ensures that after GST, honest leaders get committed.
\textbf{C2} requires that upon expiration of a timeout, the validator creates a block, even if condition \textbf{C1} is not satisfied.
\textbf{C3}  ensures that slower validators can catch up: if \textbf{C1} is not satisfied but $n-f$ blocks have already been produced in round $r$, a validator may proceed without waiting for the timeout to expire.

\noindent\textbf{Broadcasting unknown history:}
When a validator receives a block, it attempts to add it to its local DAG only if all its parents (and, transitively, its entire causal history) are already present in the DAG. Since Byzantine validators may send blocks to only a subset of validators or send conflicting versions to different validators, when an honest validator creates and sends its block, its parents may include blocks created by Byzantine validators. 
To ensure that such blocks can be added to other validators’ DAGs, validators follow the principle of cordial dissemination~\cite{Grassroots}, by which a validator $v_i$ sends to another validator $v_j$ the blocks in its local DAG that it believes $v_j$ did not yet receive.
Formally, let $\mathrm{DAG}_i$ denote the set of blocks in the local DAG of $v_i$. From the perspective of $v_i$, the set of blocks known to $v_j$ is defined as $\mathrm{known}_i(j) = H_{i,j} \cup S_{i,j}$, where $H_{i,j}$ is the union of the causal histories of all blocks in $\mathrm{DAG}_i$ created by $v_j$, and $S_{i,j}$ is the set of all blocks sent from $v_i$ to $v_j$. Accordingly, the set of blocks unknown to $v_j$ is $\mathrm{unknown}_i(j) = \mathrm{DAG}_i \setminus \mathrm{known}_i(j)$.

Upon any of the following pacemaker triggers, a validator $v_i$ sends the unknown portion of its history to its peers:
\begin{description}
    \item[\textbf{B1:}] creating a block.
    \item [\textbf{B2:}] advancing to a new round.
\end{description}

The Push pacemaker ensures synchronicity after GST: when a fast honest validator advances to a new round, its history is broadcast to all validators via \textbf{B2}. As a result, they can create blocks in all rounds up to that round via \textbf{C3} and advance to the new round according to \textbf{A1} and \textbf{A2}. 
Moreover, \textbf{B1} ensures that if a fast honest validator creates a block under \textbf{C1}, then all honest validators observe the same conditions and can therefore also create their blocks under \textbf{C1}.

\noindent\textbf{Liveness intuition:}
With the Push pacemaker, \cite{starfish} establishes synchronicity after GST: validators enter each new round within $\Delta$ and create blocks within $\Delta$ of one another. This property was assumed without proof in \cite{mysticeti}. Building on this, \cite{mysticeti,starfish} show that, after GST, any honest leader slot is eventually marked \texttt{to-commit} by the direct decision rule, and that all leader slots are eventually decided. Consequently, every honest leader block is included in the commit sequence. Moreover, any block created by an honest validator eventually becomes part of the causal history of some committed honest leader block, and is therefore ordered and delivered.

\section{\textit{FinWhale}}
\label{sec:FinWhale}
\subsection{Protocol Description}
In addition to the Byzantine Atomic Broadcast properties, FinWhale satisfies an additional \emph{Fast Termination} property.
We adapt its definition from Banyan~\cite{banyan} to our setting.
\begin{comment}
Martin and Alvisi [23] present FaB Paxos– a fast Byzantine consensus protocol with $n=5f+1$. Moreover, they present a parameterized version of the protocol: it runs on $n=3f+2t+1$ processes ($t\leqf$), tolerates $f$ Byzantine failures, and is able to commit after just two steps in the common case when the leader is correct, the network is synchronous, and at most $t$ processes are Byzantine.
\end{comment}

\begin{definition}
If the network is momentarily synchronous, the leader is honest, and at least $n - p$ validators behave momentarily honestly, then the leader block is committed (its slot is marked \texttt{to-commit}) within two message delays.
\end{definition}

To support the Fast Termination property, we introduce an additional direct commit rule: A leader block $b$ of round $r$ is \emph{fast-directly committed} if there exist $n-p$ blocks from distinct validators in round $r+1$ that reference $b$, i.e., if there are $n-p$ voters for $b$. 
With this rule, a leader block can be committed within two message delays when network conditions permit.

In Mysticeti, the presence of a direct commit pattern within a given DAG guarantees that any other DAG will contain a corresponding pattern serving as evidence that neither a skip can be determined for the slot nor a conflicting block can be committed.
Similarly, we require that if a fast direct commit of the leader block occurs in the DAG of some validator, then every other validator must observe in its DAG a corresponding pattern that rules out both conflicting commits and skip decisions for the same slot. 

We introduce a new notion, where blocks in round $r+2$ serve as an \emph{FP-evidence} for a leader block $b$ of round $r$.
Intuitively, when a block is committed directly via the fast path in a DAG of validator $v_i$ during round $r+1$, it leaves evidence in round $r+2$ in all other DAGs in the form of FP-evidence blocks.

We note that a single FP-evidence block represents a weaker form of voting evidence than an SP-certificate, and the DAG may simultaneously contain FP-evidence for conflicting leader blocks.
However, the presence of a quorum of such FP-evidence blocks suffices to guarantee consistent behavior across all honest validators’ DAGs.

%We now turn to the exact details of our protocol.

\noindent \textbf{DAG construction:}
In Mysticeti, advancing from the current round to the next round according to condition \textbf{A1} requires observing $n - f$ blocks from different validators.
To enable the fast path under the tighter bound $n = 3f + 2p - 1$, an honest validator must take into account not only the blocks from the current round, but also the behavior of the leader of the previous round\footnote{
This additional requirement does not arise under the looser bound $n = 3f + 2p + 1$, as discussed later.
}.

Specifically, a Byzantine leader in the previous round may equivocate by broadcasting multiple conflicting versions of its block. 
%Honest validators can detect such misbehavior by inspecting the blocks they receive in the current round.
However, if an honest validator receives blocks that reference more than one version of the leader block, then this is an evidence that the previous round leader is Byzantine. 
In this case, the validator can safely wait for $n-f$ blocks not including the block authored by the  Byzantine leader without blocking.
%(i.e., if this block is received, and there are $n-f$ blocks from distinct validators, the validator needs to wait for exactly one more block).
Consequently, the validator observes and references at least $n - f$ blocks, among which at most $f - 1$ are from Byzantine validators, because the block authored by the detected equivocating leader is not included as a parent when the validator constructs its next~block.
%A block~$b$ at round~$r+1$ is said to \emph{vote} for a leader’s block~$b'$ from round~$r$ if it includes an edge to~$b'$ (or rather a reference to its hash).

Throughout this section, we denote $L_r$ the leader of round $r$.
In particular, $L_{r-1}$ and $L_{r-2}$ denote the leaders of rounds $r-1$ and $r-2$, respectively.

\noindent\textbf{Leader-consistent:}
A set of round-$r$ blocks $B_r$ is said to be \emph{leader-consistent} with respect to $L_{r-1}$ if all blocks in $B_r$ vote for at most one block proposed by $L_{r-1}$.

%Next, we define the conditions under which a block is considered valid. 
%As before, only valid blocks are inserted into the DAG.
%In particular, even blocks produced by Byzantine validators must satisfy the block validity rules.
%%%% The above is redundant.

\noindent \textbf{Block validity:}
%We require that all blocks obey the block validity rules; otherwise, they are not added to the DAG of an honest validator.
As before, blocks are added to the DAG of an honest validator only if it obeys the block validity rules.
FinWhale's definition of a valid block extends Mysticeti's validity rules by adding the following requirements:
\begin{enumerate}
    \item Each block has at most a single parent per validator. We do not allow malformed blocks (e.g., those created by Byzantine validators) that contain multiple edges to blocks of the same validator.
    \item A block in round $r$ must be valid with respect to round $r-2$. That is, it links to at least $n-f$ blocks from round $r-1$, each from a distinct validator, and satisfies one of the following conditions:
    \begin{alphaenumerate}
        \item the set of parent blocks is leader-consistent with respect to $L_{r-2}$, or
        \item the round-$r-1$ block created by $L_{r-2}$ is not included among the parents.
    \end{alphaenumerate}
\end{enumerate}

\noindent
\textbf{FP-evidence, Non-FP-evidence:}
We say that a block $b'$ of round $r+2$ \emph{exposes equivocation by $L_r$} if its parent set is not leader-consistent, i.e., its causal history contains multiple conflicting versions of $L_r$'s block.
%This notion is used in the FP-evidence~definition.
%

A block $b'$ of round $r+2$ is an \emph{FP-evidence} for a leader block $b$ of round $r$ if one of the following conditions holds:
\begin{itemize}
    \item \textbf{Equivocating case:} $b'$ exposes equivocation of $L_r$. It references at least $f+p$ parent blocks that vote for $b$, while fewer than $f+p$ of its parent blocks vote for any block conflicting with $b$.

\item \textbf{Non-equivocating case:} $b'$ does not expose equivocation of $L_r$. It references at least $f+p-1$ parent blocks that vote for $b$.
\end{itemize}

Finally, a block at round $r+2$ is a \emph{Non-FP-evidence} if it is not an FP-evidence for any block proposed by $L_r$. 

Examples of FP-evidence and Non-FP-evidence blocks are in Figure~\ref{fig:FP-evidence}.

\begin{figure*}[t]
    \centering
    \vspace{-0.4cm}

    % Set height once for all figures
    \newcommand{\figH}{5.0cm}

    \begin{subfigure}[t]{0.45\textwidth}
        \centering
        \includegraphics[height=\figH, keepaspectratio]{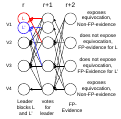}
        \hspace{0.25\textwidth}\textbf{(a)}
       %\caption{}
       %\label{fig:FP-evidence}
    \end{subfigure}
    \hfill
    \begin{subfigure}[t]{0.45\textwidth}
        \centering
        \includegraphics[height=\figH, keepaspectratio]{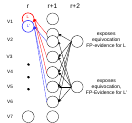}
        \hspace{0.25\textwidth}\textbf{(b)}
        %\caption{}
        %\label{fig:FP-evidence-b}
    \end{subfigure}
    \vspace{-0.2cm}
    \caption{\footnotesize FP-evidence and Non-FP-evidence examples under equivocation. $V_1$ is an equivocating leader in round $r$, and votes in round $r+1$ are colored according to the leader block they support.    
\newline\textbf{(a)}\quad FP-evidence and Non-FP-evidence examples illustrating the cases where equivocation is exposed and not exposed with $f=p=1$. If a block in round $r+2$ exposes the equivocation, it qualifies as FP-evidence for $L$ if it has at least two ($f+p$) parents voting for $L$ and at most one ($f+p-1$) parent voting for $L'$. If the block does not expose the equivocation, a single (f+p-1) parent voting for $L$ is sufficient to qualify as FP-evidence for $L$. Symmetrically, the same conditions apply for $L'$. Any block that neither qualifies as an FP-evidence for L nor for L' is a Non-FP-evidence block. 
\newline\textbf{(b)}\quad FP-evidence for conflicting leader blocks while exposing equivocation, with $f=2, p=1$. Both round-$(r+2)$ blocks expose the equivocation. The round-$(r+2)$ block created by $V_2$ has three ($f+p$) parents voting for $L$ and two ($f+p-1$) parents voting for $L'$, and therefore qualifies as an FP-evidence for $L$. Symmetrically, the round-$(r+2)$ block created by $V_5$ qualifies as an FP-evidence for $L'$.
}
\label{fig:FP-evidence}
\end{figure*}

A key property we prove in Lemma~\ref{lemma:fast-commit-cert} below is that if a leader block of round $r$ is directly committed via the fast path, then every block at round $r+2$ in any DAG is an FP-evidence block for $b$.
 
%We continue to use the previously defined notion of SP-certificate and SP-skip pattern: 
%A block in round $r+2$ is an SP-certificate for a leader block $b$ in round $r$ if it references a quorum of blocks (from distinct validators) voting for $b$.
%An SP-skip pattern occurs when, for each leader block $b$ of round $r$, there exists a quorum of round $r+1$ blocks from distinct validators that do not vote for $b$.
FinWhale relies on the same definitions of SP-certificate and SP-skip patterns as Mysticeti.
Recall that in our setting, a quorum of size $\left\lceil \frac{n+f+1}{2} \right\rceil$ equals $2f+p$.
With the additional definitions of FP-evidence blocks and Non-FP-evidence blocks, we are now ready to formally define FinWhale's decision rules. 
Here again, each validator iterates over rounds from the highest to the lowest undecided round and applies the decision rules.

\noindent
\textbf{Direct decision rule:}  
The validator marks a leader slot of round $r$ as \texttt{to-commit} if it observes \emph{one of} the following cases:
\begin{enumerate}
    \item There exists a quorum of $\left\lceil \frac{n+f+1}{2} \right\rceil = 2f+p$ SP-certificates from distinct validators in round $r+2$ for a block $b$ of the~slot. $b$ is then \emph{committed}.
    \item There exist $n-p$ blocks from distinct validators in round $r+1$ that vote for a block $b$ of the slot. $b$ is then \emph{committed}.
\end{enumerate}

The validator marks a slot as \texttt{to-skip} when \emph{both of} the following conditions hold:
\begin{enumerate}
    \item The SP-skip pattern is observed in round $r+1$.
    \item There are at least $\left\lceil \frac{n+f+1}{2} \right\rceil = 2f+p$ Non-FP-evidence blocks from distinct validators in round $r+2$.
\end{enumerate}

\noindent \textbf{Indirect decision rule:}
If during the reverse pass over undecided round, the direct decision rules did not decide a given slot, the validator tries to apply the indirect decision rules:
First it searches for an anchor, which is defined as in Mysticeti, to be the first slot with round number greater than $r+2$ that is marked either \texttt{to-commit} or \texttt{undecided}.
If the anchor is \texttt{undecided} then the slot is marked \texttt{undecided} as well.
If the anchor is marked \texttt{to-commit}, the validator marks the slot as \texttt{to-commit} if one of the following conditions holds:
\begin{enumerate}
    \item There exists a path from the anchor to an SP-certificate for a block $b$ of the slot, in which case $b$ is \emph{committed}.
    \item There exist paths from the anchor to a quorum of
    $\left\lceil \frac{n+f+1}{2} \right\rceil = 2f+p$ FP-evidence blocks from distinct validators, all for the same block $b$ of the slot, in which case $b$ is \emph{committed}.
\end{enumerate}
Otherwise, the slot is marked \texttt{to-skip}.

The decision rules are illustrated in Figure~\ref{fig:decision-rules}.
Note that both of the above conditions may hold simultaneously for conflicting blocks of the same slot. 
In such a case, one of the blocks is selected for commitment according to a deterministic rule. 
Such a pattern cannot arise if any validator can decide the slot directly, either by direct skip or direct commit: a direct skip rules out the conditions for an indirect commit, and a direct commit rules out the conditions for an indirect commit of a conflicting block. Hence, all validators must decide the slot indirectly, in which case they all reach the same decision. The patterns induced by direct decisions, as well as the consistency of indirect decisions, are described in the safety intuition and proved in the safety section.

\begin{figure*}[t]
    \centering
    \vspace{-0.4cm}

    % Set height once for all figures
    \newcommand{\figH}{4.0cm}

    \begin{subfigure}[t]{0.3\textwidth}
        \centering
        \includegraphics[height=\figH, keepaspectratio]{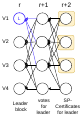}
        \caption{\footnotesize Direct commit via the slow path. There is a quorum of three SP-certificates in round $r+2$, each containing a quorum of votes for the leader block.}
        \label{fig:SP-direct}
    \end{subfigure}
    \hfill
    \begin{subfigure}[t]{0.3\textwidth}
        \centering
        \includegraphics[height=\figH, keepaspectratio]{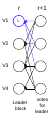}
        \caption{\footnotesize Direct commit via the fast path. There are  three ($n-p =3 $) votes in round $r+1$.}
        \label{fig:FP-direct}
    \end{subfigure}
    \hfill
    \begin{subfigure}[t]{0.3\textwidth}
        \centering
        \includegraphics[height=\figH, keepaspectratio]{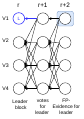}
        \caption{\footnotesize  Direct skip: There is a  quorum of non-voters in $r+1$ and a quorum of Non-FP-evidence blocks for the leader in $r+2$.}
        \label{fig:direct-skip}
    \end{subfigure}

    \vspace{0.4cm}

    \begin{subfigure}[t]{0.3\textwidth}
        \centering
        \includegraphics[height=\figH, keepaspectratio]{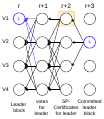}
        \caption{\footnotesize Indirect commit: There is a path from anchor to an SP-certificate for the leader block.}
        \label{fig:SP-indirect-commit}
    \end{subfigure}
    \hfill
    \begin{subfigure}[t]{0.3\textwidth}
        \centering
        \includegraphics[height=\figH, keepaspectratio]{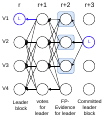}
        \caption{\footnotesize  Indirect commit: There are paths from the anchor to a quorum of FP-evidence blocks for the leader block.}
        \label{fig:FP-indirect-commit}
    \end{subfigure}
    \hfill
    \begin{subfigure}[t]{0.3\textwidth}
        \centering
        \includegraphics[height=\figH, keepaspectratio]{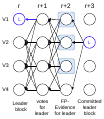}
        \caption{\footnotesize Indirect skip: No path from the anchor to an SP-certificate. No paths to a quorum of FP-evidence blocks for the leader block.}
    \end{subfigure}

    \vspace{-0.2cm}
    \caption{\footnotesize FinWhale decision rules for $f=p=1$. Votes for the leader block are colored in blue. In this setting, there are four validators and the quorum size is $3$.}
\label{fig:decision-rules}
\end{figure*}

\noindent\textbf{Commit sequence:}
The algorithm proceeds similarly to Mysticeti. It examines leader slots from the highest round down to the lowest undecided round. For each slot, the validator applies the commit rules to determine its state.
The commit sequence is then extended in ascending order of round numbers by including slots marked \texttt{to-commit} together with their committed blocks, while omitting slots marked \texttt{to-skip}. The process stops upon reaching the first \texttt{undecided} slot.
Figure~\ref{fig:commit-sequence} illustrates how the commit sequence is extended by including committed leader blocks until reaching the first \texttt{undecided} slot.

\begin{figure}[t]
    \centering    \includegraphics[width=0.6\textwidth]{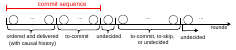}
    \caption{\footnotesize Leader slots over a timeline of rounds (r, r+1, r+2, …); the state is annotated beneath each corresponding interval. The commit sequence is extended by committed leader blocks up to the first undecided slot.}
    \label{fig:commit-sequence}
\end{figure}

\noindent\textbf{Total order:}
As in Mysticeti, the final total order is obtained by deterministically sorting the causal histories of the committed leader blocks. For each ordered block $b$, each $(m,id)$ pair in its payload is delivered as $(m,id,b.\text{author})$, provided that no tuple with the same $id$ and $b.\text{author}$ has been delivered previously.

\noindent\textbf{Safety intuition:}
validators cannot reach conflicting decisions: they either both commit the same block for a leader slot $s$ or both skip $s$.
We rely on the following structural properties of the protocol:
\begin{enumerate}
    \item A slow-path direct commit of block $b$ yields a path from the anchor to an SP-certificate for $b$ in every DAG (Lemma~\ref{lemma:recursion-safety_a}).
    \item A fast-path direct commit yields a path from the anchor to a quorum of FP-evidence blocks for $b$ in every DAG (Lemma~\ref{lemma:recursion-safety_b}).
    \item An SP-certificate for $b$ is also an FP-evidence for $b$ (Lemma~\ref{lemma:SP-FP-relation}).
\end{enumerate}

We claim that if a validator directly skips a leader slot $s$, no other validator can commit a block for the slot (Lemma~\ref{lemma:direct-skip-a}). 
A direct skip requires collecting a quorum of blocks not voting for any block of $s$ together with a quorum of Non-FP-evidence blocks.
By quorum intersection, this precludes the existence of an SP-certificate or a quorum of FP-evidence blocks for any block of slot $s$ in any other DAG. 
Since either of these is necessary for an indirect commit, the latter is impossible.
Further, the presence of a quorum of non-voters for any leader blocks also rules out a direct commit in any other~DAG. 

If a validator directly commits a block $b$ of slot $s$, then no other validator can skip $s$ (Lemma~\ref{lemma:direct-commit-safety-a}) or commit a conflicting block (Lemma~\ref{lemma:unique-cert_2} and Lemma~\ref{lemma:unique-cert_3}).
When a validator directly commits $b$, depending on whether the slow or fast path is taken, any other DAG contains either a path from the anchor to an SP-certificate for $b$ or paths from the anchor to a quorum of FP-evidence blocks for $b$ (Properties~1 and~2, respectively). This already rules out indirect skipping in any other DAG by definition.
To show consistency with respect to a conflicting block $b'$, observe that a direct commit of $b$ implies a quorum of voters for $b$, which by quorum intersection rules out the existence of an SP-certificate for $b'$. 
It also rules out a quorum of FP-evidence blocks for $b'$.
Indeed, if $b$ is committed via the fast path, then by Property~2 every DAG contains a quorum of FP-evidence blocks for $b$. 
If $b$ is committed via the slow path, the SP-certificates for $b$ form a quorum, which in turn induces a quorum of FP-evidence blocks for $b$ by Property~3. 
In both cases, quorum intersection prevents the existence of an FP-evidence quorum for $b'$.
The absence of either an SP-certificate or a quorum of FP-evidence blocks for $b'$ precludes an indirect commit of $b'$.  
Moreover, by a simple quorum intersection argument, a direct commit of $b'$ is also impossible. 

Finally, if validators decide $s$ indirectly, they rely on the same committed anchor block, and therefore observe the same causal history and reach the same decision (Lemma~\ref{lemma:consistent-slots-1}).

\noindent \textbf{Pacemaker conditions:}
Similarly to Mysticeti, we define pacemaker conditions for FinWhale that ensure both liveness and the creation of valid blocks. 
%These conditions closely follow those of Mysticeti, except for condition $\mathbf{A1'}$, which determines advancement from round $r-1$ to round $r$ based on whether $L_{r-2}$ equivocates.

\noindent \textbf{Advancing rounds:}
An honest validator advances from round $r-1$ to round $r$ when two conditions are satisfied: $\textbf{A1'}$ and $\textbf{A2}$. 
Condition $\textbf{A2}$ is inherited from Mysticeti and is restated here for completeness. Condition $\textbf{A1'}$ determines the threshold on the number of round $r-1$ blocks (from distinct validators) that are required for advancement, as in Mysticeti’s $\textbf{A1}$.
However, here the value depends on whether a leader-consistent set is observed.
\begin{description}
    \item[\textbf{A1'}] The local DAG contains either (i) a leader-consistent set of round-$(r-1)$ blocks from $n-f$ distinct validators, or (ii) a set of round-$(r-1)$ blocks from $n-f$ distinct validators that excludes blocks authored by $L_{r-2}$.
    
    \item[\textbf{A2}] The validator has created its own block in round $r-1$.
\end{description}

\noindent \textbf{Creating Blocks:}
%We adopt the same block creation conditions as Mysticeti (C1, C2, C3), which are restated here for completeness. 
A block in round $r$ is created if at least one of the following conditions is satisfied:
%\begin{enumerate}[label=\textbf{C\arabic*}, leftmargin=*]
\begin{description}
    %\item Wait for $L_{r-1}$ and voters for $L_{r-2}$ block:
    \item[\textbf{C1}] The following two requirements are satisfied:
    %\begin{enumerate}[label=\textbf{L\arabic*}, leftmargin=2em]
    \begin{description}
        \item[\textbf{L1}] The local DAG contains a block of $L_{r-1}$. 
        \item[\textbf{L2}] The local DAG contains either $(i)$ a quorum of voters from distinct validators for $L_{r-2}$ block, or $(ii)$  contains an SP-skip pattern for $L_{r-2}$
    \end{description}
    %\end{enumerate}

    \item[\textbf{C2}] The timeout $\delta_{LT} = 2\Delta$ has expired.

    \item[\textbf{C3}] The local DAG contains $n-f$ blocks from distinct validators in round $r$.
\end{description}
%\end{enumerate}

\noindent\textbf{Selecting the parents:}
When a block is created in round $r$, the validator must select its parent set from round $r-1$ blocks. Byzantine validators may produce multiple blocks in round $r-1$, but the validator includes at most one parent per validator, while selecting the blocks that satisfy conditions $\textbf{L1}$ and $\textbf{L2}$  if condition $\textbf{C1}$ triggers block creation.

Whenever the parent set is not leader-consistent with respect to $L_{r-2}$, the round-$(r-1)$ block created by $L_{r-2}$ must be excluded. Accordingly, the parent selection algorithm needs to exclude this block whenever the resulting parent set violates leader-consistency with respect to $L_{r-2}$.
However, when there are  round-$(r-1)$ blocks from exactly $n-f$ validators, excluding this block would leave only $n-f-1$ parents, violating the requirement that a parent set contains at least $n-f$ blocks. Therefore, in this case, we must guarantee that multiple round-$(r-1)$ blocks produced by Byzantine validators do not prevent the construction of a leader-consistent parent set. Condition \textbf{A1'} ensures the existence of such a leader-consistent set of size $n-f$.

Formally, the parent selection proceeds as follows:

\begin{enumerate}
    \item Initialize the candidate set with the blocks in the local view of $DAG[r-1]$.

    \item If the round-$(r-1)$ block proposed by $L_{r-2}$ is present and round-$(r-1)$ blocks from exactly $n-f$ distinct validators are available, construct a leader-consistent set with respect to $L_{r-2}$ that contains at least one block from each of these validators. Such a set exists by condition \textbf{A1'}. Restrict the candidate set to this leader-consistent set.

    \item Construct the parent set from the candidate set such that:
(1) for each validator $v$, the parent set contains exactly one block authored by $v$, and
(2) if condition $\textbf{C1}$ triggers block creation, all blocks in the candidate set that satisfy condition $\textbf{L1}$ and $\textbf{L2}$ are included in the parent set.
    
    \item 
    If the current parent set is not leader consistent w.r.t. $L_{r-2}$, exclude the round-$(r-1)$ parent block proposed by $L_{r-2}$ from the parent set, even if it was previously included (by the previous step).
\end{enumerate}

Finally, the validator includes blocks from rounds older than $r-1$ in its parent set as follows: the latest block from each author that has not yet been referenced by a previous block of the validator is included in the parent set.
This ensures that every block created by an honest validator appears in the causal history of some honest leader block, and is therefore ordered and its payload delivered.
%\todo{weak links}

\noindent\textbf{Broadcasting history:}
A validator broadcasts its unknown history under the same $\textbf{B1}$ and $\textbf{B2}$ conditions as in~Mysticeti.
%\begin{description}
%    \item[$\mathbf{B1}$] Upon creating a block.
%    \item[$\mathbf{B2}$] Upon advancing to a new round.
%\end{description}

\noindent \textbf{Liveness intuition:}
Liveness in FinWhale is based on the liveness of the slow path. After GST, every leader block created by an honest validator is eventually included in the commit sequence, and its causal history is delivered. When conditions allow the fast path to be taken, leader blocks can be marked \texttt{to-commit} more quickly, thereby accelerating total ordering and delivery.

\noindent \textbf{Additional Discussion Points:}
%We described the modifications to the block construction and rules needed to incorporate a fast path under the proven lower bound of $n = 3f + 2p - 1$ validators.  
%Simpler definitions and a simpler construction would have been possible had we relaxed this requirement and instead assumed $n = 3f + 2p + 1$ validators.  
It is possible to simplify FinWhale by relaxing the resilience threshold from the lower bound of $n = 3f + 2p - 1$ validators to $n = 3f + 2p + 1$.
For instance, this would enable a simpler DAG construction, in which it is not necessary to wait for an additional block in round $r+1$ when the round-$r$ leader is proven to be Byzantine.
Moreover, eliminating the need to distinguish between equivocating and non-equivocating cases would simplify the definition of FP-evidence.

Under the relaxed assumption $n = 3f + 2p + 1$, there exist scenarios in which Byzantine leader blocks may be directly committed despite equivocation.
In contrast, under the tighter bound of $n = 3f + 2p - 1$, enforcing that at most $f-1$ Byzantine blocks are selected as parents prevents such direct commits.
These scenarios are rare and arise only under highly specific conditions.

Mysticeti~\cite{mysticeti} targets direct skipping as a mechanism to efficiently handle benign crashed leaders by promptly skipping them and reducing their impact on performance.
When the leader is Byzantine, however, it may attempt to prevent direct skipping, thereby leaving the slot undecided. In particular, to preclude the formation of a quorum of non-voters after GST, a Byzantine leader must ensure that its block is disseminated to sufficiently many honest validators so that they produce voting blocks; otherwise, a quorum of non-voters may form, leading to a direct skip.
On the other hand, once honest validators create voting blocks, they broadcast their history, which may cause other honest validators to receive the leader’s block and form an SP-certificate, resulting in a direct commit. Therefore, a Byzantine leader must carefully and selectively disseminate its block so that enough honest validators vote for it, while ensuring that their history broadcasts do not propagate in time to enable SP-certificate formation. This leaves the leader slot undecided.
In \textsc{FinWhale}, however, direct skipping  requires a quorum of non-FP-evidence blocks in addition to a quorum of non-voters. This makes the Byzantine attack simpler in the optimal-resilience setting. In particular, the leader may withhold its block from all honest validators, while all $f$ Byzantine validators send their round-$r+1$ blocks voting for the leader. When $p=1$, these $f$ Byzantine votes are sufficient to cause honest validators to produce FP-evidence blocks, thereby preventing a direct skip.
As in Mysticeti, when the system is configured with $n = 3f + 2p + 1$ validators, this simple attack is no longer feasible. It remains an open question whether achieving the lower bound on $n$ inherently facilitates this simple attack that prevents direct skipping.

FinWhale differs from Mysticeti in the timing at which the direct-skip decision is made. In our protocol, a direct skip can occur only at the end of round~$r+2$, whereas in Mysticeti, a direct skip may occur already in round~$r+1$.
However, this difference does not actually delay the commit sequence.
The purpose of the direct skip rule is to decide slots as early as possible, so as not to delay the commit of subsequent leaders.
Since in Mysticeti the next leader can be committed only at round~$r+3$, our modified direct skip decision does not introduce any additional delay in the overall commit sequence.

\subsection{Algorithm and Pseudo Code} \label{sec:algorithms}

\begin{algorithm}[t]
\caption{DAG Construction for validator $v_{i}$, Part 1}
\label{alg:construction1}
\footnotesize
\begin{algorithmic}[1]

\State \textbf{Global variables:}  \State $buffer \gets \emptyset$
\State $blocksToPropose \gets \emptyset$ \Comment{Valid blocks of transactions from clients}
\State $\delta_{LT} \gets 2\Delta$
\State $n \gets 3f + 2p - 1$ 
\State $r_{decided} \gets 0$ \Comment{The most recent round in the sequence of decided leaders}
\State $r_{highest} \gets 0$ \Comment{The highest block round observed in the local DAG}
%\State $r_{last} \gets 0$
\State $DAG \gets \Call{GenesisBlocks}{\,}$
\State $\forall j:\; Known[j] \gets \Call{GenesisBlocks}{\,}$

%\Statex
%\Upon {$a\_bcast_i(m, id)$} 
%    \State blocksToPropose.enqueue($m, id$)
%\EndUpon

\Statex
\State
\textbf{upon} $a\_bcast_i(m, id)$
\State blocksToPropose.enqueue$(m, id)$

\Statex
\Procedure{Execute}{}
    \For{$r = 1,2,\ldots$}
        \State \Call{ExecuteRound}{$r$}
    \EndFor
\EndProcedure

\Statex
\Procedure{ExecuteRound}{$r$}
    \State \Call{BroadcastUnknownHistory}{\,}
    \State $\tau_{enter}(r) \gets$ \Call{LocalTime}{\,}
    
    \While{\Call{TryCreateBlock}{$r,\tau_{enter}(r)$} = \textbf{false}}
        \State \Call{ReceiveMessages}{\,}
        \State \Call{TryDecide}{$r_{decided}, r_{highest}$}
    \EndWhile

    %\State $r_{last} \gets r$
    \State \Call{BroadcastUnknownHistory}{\,}

    \While{\Call{TryAdvanceRound}{$r$} = \textbf{false}}
        \State \Call{ReceiveMessages}{}
        \State \Call{TryDecide}{$r_{decided}, r_{highest}$}
    \EndWhile

\EndProcedure

\Statex
\Procedure{BroadcastUnknownHistory}{\,}
    
    \State $DAG_{snap} \gets DAG$

    \For{$j \in \{1,\ldots,n\} \setminus \{i\}$}
        \State $Unknown \gets DAG_{snap} \setminus Known[j]$
        \State $msgs \gets Unknown$
        \State \Call{Send}{$v_{j}, msgs$}
        \State $Known[j] \gets Known[j] \cup \{msgs\}$
    \EndFor

\EndProcedure

\Statex
\Procedure{ReceiveMessages}{\,}
    \State $b \gets \Call{ReceiveNetworkMessage}{\,}$
            \State \Call{TryAddToDAG}{b}

\EndProcedure

\Statex
\Function{TryAdvanceRound}{r}
\If{$\Call{ValidatorsInRound}{r} < n-f $}
    \State \Return \texttt{false}
    \EndIf
\If{$\Call{ValidatorsInRound}{r} > n-f $}
    \State \Return \texttt{true}
\EndIf
\State $consistentSet \gets \Call{TryGetLeaderConsistentSet}{r}$ 
\If{$consistentSet \neq \bot$}
    \State \Return \texttt{true}
\EndIf
\State $leader_{r-1} \gets \Call{GetPredefinedLeader}{r-1}$
\If{$\forall b \in DAG[r],\; b.author \neq leader_{r-1}$}
\State \Return \texttt{true}
\EndIf
\State \Return \texttt{false}
\EndFunction

\Statex 
\Function{TryCreateBlock}{$r, \tau_{\text{enter}}(r)$}
    \If{\Call{LeaderConditionsMet}{$r$} \textbf{ or } 
        $\Call{LocalTime}{\,} \ge \tau_{enter}(r) + \delta_{LT}$ 
        \textbf{ or } 
        \Call{ValidatorsInRound}{$r$} $\ge n-f$}
        
        \State \Call{CreateBlock}{$r$}
        \State \Return true
    \EndIf
    \State \Return false

\EndFunction

\Statex
\Function{LeaderConditionsMet}{$r$}
    \State $leader_{r-1} \gets \Call{GetPredefinedLeader}{r-1}$
    \State $Leaders \gets \{ b \in DAG[r-1] : b.\text{author} = leader_{r-1} \}$
    \State $L1 \gets |Leaders| > 0$

    \State $leader_{r-2} \gets \Call{GetPredefinedLeader}{r-2}$
    \State $Proposals \gets \{ b \in DAG[r-2] \mid b.\text{author} = leader_{r-2} \}$
    \State $Votes \gets DAG[r-1]$

    \State $VoteQuorum \gets \exists b_{leader} \in Proposals \text{ such that }$
    \State \hspace{1.5em} $\left| \{ b.\text{author} \mid b \in Votes \land \Call{IsVote}{b,b_{leader}} \} \right| \ge 2f + p$ %\Comment {quorum of votes}

    \State $SkipPattern \gets \Call{IsSPSkipPattern}{r-2}$
    \State $L2 \gets (VoteQuorum \lor SkipPattern)$

    \State \Return $L1 \land L2$
\EndFunction
\end{algorithmic}
\end{algorithm}

\begin{algorithm}[t]
\caption{DAG construction for validator $v_i$, part 2}
\label{alg:construction2}
\footnotesize
\begin{algorithmic}[1]

\Statex
\Function{TryGetLeaderConsistentSet}{r}
\State $leader_{r-1} \gets \Call{GetPredefinedLeader}{r-1}$
\State $Proposals \gets \{\,b \in DAG[r-1] \mid b.\text{author} = leader_{r-1} \,\}$

\For{$b_{leader} \in Proposals$}
    \State $ConflictingBlocks \gets \Call{GetConflictingBlocks}{b_{leader}}$
    \State $ConflictingVotes \gets \{ b \in DAG[r] \mid \exists b' \in ConflictingBlocks : \Call{IsVote}{b, b'} \}$
    \State $NonConflictingVotes \gets DAG[r] \setminus ConflictingVotes $
    \State $numValidators \gets \left| \{\, b.\text{author} \mid b \in NonConflictingVotes \,\} \right|$
    \If{$numValidators = \Call{ValidatorsInRound}{r}$}
        \State \Return $NonConflictingVotes$
    \EndIf
\EndFor
\If{$|Proposals| = 0$}
    \State \Return $(DAG[r])$
\EndIf
\State \Return $(\bot)$
\EndFunction

\Statex
\Function{SelectParents}{r}
  % \State $leader \leftarrow \textsc{GetPredefinedLeader}(r-2) $
   \State $CandidateParents \gets DAG[r-1]$
   \State $leader_{r-2} \gets \Call{GetPredefinedLeader}{r-2}$
   \If{$\exists b \in DAG[r-1] \text{ s.t. } b.author = leader_{r-2}
    \;\textbf{and}\;
    \Call{ValidatorsInRound}{r-1} = n-f$}
  \State $ConsistentSet \gets \Call{TryGetLeaderConsistentSet}{r-1}$ \Comment{Such a set exists by  \textbf{A1'}.}
%\If{$ConsistentSet \neq \bot$}
    \State $CandidateParents \gets ConsistentSet$
\EndIf
    
    \State $Parents \subseteq CandidateParents$ such that:
\State \hspace{1.5em} (1) for each validator $v$, $Parents$ contains exactly one block authored by $v$, and
\State \hspace{1.5em} (2) if C1 triggered block creation, all blocks in $CandidateParents$ satisfying L1\&L2 are included in $Parents$

    %\If{$ConsistentSet = \bot$}
    \State $Proposals_{r-2} \gets \{u \in DAG[r-2] \mid u.author = leader_{r-2} \}$
    \State $ProposalsWithVotes \gets \{\, u \in Proposals_{r-2}
    \mid \exists u' \in Parents : \Call{IsVote}{u', u} \,\}$
    \If{$|ProposalsWithVotes| > 1$} \Comment{Parent set is not leader-consistent}   
    \State $ExcludedBlocks \gets \{\, b \in DAG[r-1] \mid b.\text{author} = leader_{r-2} \,\}$
    \State  $Parents \gets Parents \setminus ExcludedBlocks$
    \EndIf
    \State \Return $Parents$
\EndFunction

\Statex
\Procedure{CreateBlock}{r}
    \State $b.payload \gets blocksToPropose.dequeueAll() \;\text{if non-empty, else null}$
    \State $b.round \gets r$;
    $b.author \gets v_i$
    \State $P \gets \Call{SelectParents}{r}$
    %\State $P \gets P \cup \{ u \in DAG[1..r-2] \mid \neg \Call{IsPath}{u, b},\ \text{at most one per validator} \}$

    %\For{$r_{old} = r-2$ down to $1$}
    %\For{$u \in DAG[r_{old}]$ such that $\neg \Call{IsPath}{u, b}$}
    \State $U \gets \{\, u \mid u \in DAG[r'],\ 1 \le r' \le r-2 \,\}$
\State \hspace{1.2em} filtered by: $u$ is the latest block from its validator
\State \hspace{1.2em} and $u$ is not referenced by any previous block of $v_i$
        \State $P \gets P \cup U$
    %\EndFor
%\EndFor
    
    \State $b.parents \gets \{HASH(u) \mid u \in P \}$
    \State \Call{SignBlock}{$b$}
    \State \Call{TryAddToDag}{$b$}; 
\EndProcedure

\Statex

\Function{IsValidBlock}{$b$}
    \State \Call{VerifySignature}{$b$}
    \State $P \gets \{ u \in DAG \mid HASH(u) \in b.parents \}$
    \State $P_{r-1} \gets \{ u \in P \mid u.round = b.round-1 \} $
    \If{$|P_{r-1}| < n-f \ \textbf{or} \ \exists u \neq v \in P \mid u.\text{author} = v.\text{author}$}
    \State \Return false
\EndIf
    %\If{$|R| < n-f$ \textbf{or} $\exists u \neq %u' \in R : u.author = u'.author$}
    %\State $R \gets \{ u \in DAG_i[r-1] \mid HASH(u) \in b.parents \}$
    %\If{$|R| < n-f$ \textbf{or} $\exists u \neq u' \in R : u.author = u'.author$}
    %  \State \Return false
    %\EndIf
    \State $leader_{r-2} \gets  \Call{GetPredefinedLeader}{r-2}$
    \If{$\neg \Call{ExposeEquivocation}{b, leader_{r-2}}$}
        \State \Return true
    \EndIf
    \If{$\neg \exists u \in P_{r-1} \mid u.\text{author} = leader_{r-2}$}
    \State \Return true
\EndIf
    \State \Return false
\EndFunction

\Statex
\Procedure{TryAddToDAG}{$b$}
    \State $buffer \gets buffer \cup \{b\}$
    \While{$\exists b' \in buffer \text{ s.t. } \Call{CausalAvailable}{b'} = \texttt{true}$}
    \State choose any $b' \in buffer$ s.t. $\Call{CausalAvailable}{b'} =\texttt{true}$
        \State $buffer \gets buffer \setminus \{b'\}$
        \If{$ \Call{IsValidBlock}{b'}$}
        \State $DAG \gets DAG \cup \{b'\}$ 
        \State $r_{highest} \gets \max(r_{highest}, b'.round)$
        \State $Known[b'.author] \gets Known[b'.author] \cup \Call{GetReachableBlocks}{b'}$
        %\Comment{Update known for the author}
        \EndIf
    \EndWhile
\EndProcedure

\end{algorithmic}
\end{algorithm}

\begin{algorithm}[t]
    \caption{Helper functions}
    \label{alg:consensus-utils}
    \footnotesize

    \begin{algorithmic}[1]

    \Statex
    \Function{GetPredefinedLeader}{$r$}
        \State \Return $(r \bmod n) + 1$
    \EndFunction
    
    \Statex
    \Function{ValidatorsInRound}{$r$}
        \State \Return $\left| \{ b.\text{author} \mid b \in DAG[r] \} \right|$
    \EndFunction

    \Statex
    \Function{CausalAvailable}{$b$}
        \State \Return $\forall h \in b.parents,\ \exists b' \in \text{DAG} \text{ s.t. } \text{HASH}(b') = h$
    \EndFunction

    \Statex
    \Function{IsPath}{$b_{\text{old}}, b_{\text{new}}$}
        \State \Return $\exists\; b_1, \dots, b_k \in DAG,\; k \in \mathbb{N},\ \text{s.t.}$
        \State $b_1 = b_{old},\ b_k = b_{new}
        $ \text{and}
        \State \hspace{1em} $\forall j \in \{2, \dots, k\}:\ \mathrm{HASH}(b_{j-1}) \in b_j.parents$
    \EndFunction

    \Statex
    \Function{GetReachableBlocks}{$b$}
      \State \Return $\{\, b' \in DAG \mid \Call{IsPath}{b',b} = \texttt{true} \,\}$
    \EndFunction

    \Statex
    \Function{IsVote}{$b_{vote}, b_{leader}$}
        \State \Return $\Call{Hash}{b_{leader}} \in b_{vote}.parents$
    \EndFunction

    \Statex
    \Function{IsSPCert}{$b_{cert},b_{leader}$}      
        \State $Parents \gets \{ b \in DAG \mid \Call{HASH}{b} \in b_{cert}.parents \} $
        \State $res \gets |\{b \in Parents \mid \Call{IsVote}{b, b_{leader}}\}|$
        \State \Return $res \geq 2f+p$
    \EndFunction
  
    \Statex
    \Function{IsSPSkipPattern}{$r$}
        \State $leader \gets \Call
        {GetPredefinedLeader}{r}$
        \State $Votes \gets DAG[r+1]$
        \State $Proposals \gets \{ b \in DAG[r] \mid b.author = leader \}$
        \For{$b_{leader} \in Proposals$}
            \State $res \gets \left| \{ b.author \mid b \in Votes,\Call{IsVote}{b,b_{leader}} = \text{false} \} \right|$
            \If{$res < 2f + p$}
            \State \Return \textbf{false}
        \EndIf
    \EndFor

        \If{$|Proposals| = 0 \ \textbf{and} \ \Call{ValidatorsInRound}{r+1} < 2f + p$}
            \State \Return \textbf{false}
        \EndIf
        \State \Return \textbf{true}
    \EndFunction

    \Statex
    \Function{IsDirectlySkippedSlot}{$r$}
        \State $B_{r+2} \gets DAG[r+2]$
        \State $NonFPEvidenceBlocks \gets \{b \in B_{r+2} \mid \Call{IsNonFPEvidence}{b}\}$
        \State  $res_{nonFP} \gets | \{ b.author \mid b \in NonFPEvidenceBlocks \}|$
        \State $FP_{cond} \gets res_{nonFP} \geq  2f + p$
        \State $SP_{cond} \gets \Call{IsSPSkipPattern}{r}$    
    \State \Return $SP_{cond} \ \textbf{and} \ FP_{cond}$
    \EndFunction

    \Statex
    \Function{DirectlyCommittedLeaderBlock}{$r$}
       \State $leader \gets \Call {GetPredefinedLeader}{r}$
        \State $Proposals \gets \{ b \in DAG[r] \mid b.author = leader \}$
        \State $B_{r+1} \gets DAG[r+1]$
        \State $B_{r+2} \gets DAG[r+2]$
        \For{$b_{leader} \in Proposals$}
            \State $res_{SP} \gets \left| \{ b.\text{author} \mid b \in B_{r+2},\ \Call{IsSPCert}{b,b_{leader}} = \text{true} \} \right|$
            \State $res_{FP} \gets \left| \{ b.\text{author} \mid b \in B_{r+1},\ \Call{IsVote}{b,b_{leader}} = \text{true} \} \right|$
            \If{$res_{SP} \ge 2f + p \ \text{or} \ res_{FP} \ge n - p$}
                \State \Return $b_{leader}$
            \EndIf
        \EndFor
        \State \Return \texttt{Undecided}(r)
        \EndFunction

        \Statex
        \Function{IsSPCertifiedPath}{$b_{anchor}, b_{leader}$}
        \State $r \gets b_{leader}.round $
        \State $B_{r+2} \gets DAG[r+2]$
        \State \Return $\exists b \in B_{r+2} :
\Call{IsSPCert}{b, b_{\mathrm{leader}}} \land
\Call{IsPath}{b, b_{\mathrm{anchor}}}$
        \EndFunction
        
    \end{algorithmic}
\end{algorithm}

\begin{algorithm}[t]
    \caption{Helper functions Fast Path}
    \label{alg:consensus-utils1}
    \footnotesize

    \begin{algorithmic}[1]

        \Statex
        \Function{GetConflictingBlocks}{$b_{leader}$}
        \State $B \gets \{b \in DAG[b_{leader}.round] : b.author = b_{leader}.author\ \:\land b \neq b_{leader}\}$
         \State \Return $B$
        \EndFunction

    \Statex

\Function{ExposeEquivocation}{$b$}
    \State $leader_{r-2} \gets \Call{GetPredefinedLeader}{b.round-2}$
    \State $Proposals_{r-2} \gets \{u \in DAG[r-2] \mid u.author = leader_{r-2} \}$
    \State $Parents \gets \{ u \in DAG \mid \Call{HASH}{u} \in b.parents \} $
    \State $ProposalsWithVotes \gets \{ u \in Proposals_{r-2} \mid \exists u' \in Parents : \Call{IsVote}{u', u}$ \}
    \State \Return $|ProposalsWithVotes| > 1$
\EndFunction

    \Statex      
    \Function{IsFPEvidence}{$b_{evid}, b_{leader}$}  
        \State $Parents \gets \{ b \in DAG \mid \Call{HASH}{b} \in b_{evid}.parents \} $
        \State $res_{vote} \gets |\{b \in Parents 
        : \Call{IsVote}{b, b_{leader}}\}| $
        \If{$\neg \Call{ExposeEquivocation}{b_{evid}}$}     
            \State \Return $res_{vote} \geq f+p-1$
        \Else
            \State $B_{conflicts} \gets \Call{GetConflictingBlocks}{b_{leader}}$
            \State $res_{conflict} \gets \left| \left\{ b \in Parents \mid \exists b' \in B_{conflicts}, \Call{IsVote}{b, b'} \right\} \right|$
            \State \Return $(res_{vote} \geq f + p) \wedge (res_{conflict} < f + p)$
        \EndIf
    \EndFunction

    \Statex
    \Function{IsNonFPEvidence}{$b_{evid}$}
        \State $leader_{r-2} \gets \Call{GetPredefinedLeader}{b_{evid}.round-2}$
        \State $Proposals_{r-2} \gets \{u \in DAG[r-2] \mid u.author = leader_{r-2} \}$
        \For{$b_{leader} \in Proposals$}
            \If{$\Call{IsFPEvidence}{b_{evid},b_{leader}}$}
                \State \Return \texttt{false}
            \EndIf
        \EndFor
        \State \Return \texttt{true}
    \EndFunction

    \Statex      
    \Function{PathsToFPEvidenceQuorum}{$b_{anchor},b_{leader}$}
        \State $r \gets b_{leader}.round$
        \State $B_{r+2} \gets DAG[r+2]$
        \State $res \gets \left| \left\{ b.\mathrm{author} \;\middle|\;
    b \in B_{r+2} \land
    \Call{IsFPEvidence}{b, b_{leader}} \land
    \Call{IsPath}{b, b_{anchor}}
\right\} \right|$
        \State \Return $res \geq 2f + p$
    \EndFunction

  \Statex

    \end{algorithmic}
    
\end{algorithm}

\begin{algorithm}[t]
    \caption{Committing and ordering}
    \label{alg:universal-committer}
    \footnotesize

    \begin{algorithmic}[1]
        \State \textbf{Global variables:}
        \State $Sequenced \gets \emptyset$ \Comment{$(id, author)$ pairs whose message was delivered}

    \Statex
    \Procedure{TryDecide}{$r_{decided}$, $r_{highest}$} \label{alg:line:try-decide}
        \State $sequence \gets [\;]$
        \For{$r \in [r_{highest} \text{ down to } r_{decided}+1]$}
            \State $status \gets \Call{TryDirectDecide}{r}$ \label{alg:line:universal:try-direct-decide}
        \If{$status = \texttt{Undecided(r)}$} \label{alg:line:universal:direct-decide-failed}
            \State $status \gets \Call{TryIndirectDecide}{r, sequence}$ \label{alg:line:universal:try-indirect-decide}
        \EndIf
        \State $sequence \gets status || sequence$
        \EndFor
       
        \For{$status \in Sequence$ in ascending round order}
            \If{$status = \texttt{Undecided(r)}$}
                \State \textbf{break}
            \EndIf
            \If{$status = \texttt{Commit}(b_{leader})$}
                \State \Call{OnCommitLeader}{$b_{leader}$}
            \EndIf
            \State $r_{decided} \gets r_{decided} + 1$
        \EndFor       
    \EndProcedure

\iffalse
        \Statex

       \Procedure{TryIndirectDecide\textcolor{red}{Old}}{$c, w, sequence$} \label{alg:line:try-indirect-decide}
        \State $r_{decision} \gets c.\Call{DecisionRound}{w}$
        \State $anchors \gets [s \in sequence \text{ s.t. } r_{decision} < s.round]$
        \For{$a \in anchors$}
        \If{$a = \perp$} \Return $\perp$ \EndIf \label{alg:line:universal:indirect-decide-failed}
        \If{$a = \texttt{Commit}(b_{anchor})$}
        \State $b_{proposer} \gets c.\Call{GetProposerBlock}{w}$
        \If{$c.\Call{CertifiedLink}{b_{anchor}, b_{proposer}}$} \label{alg:line:universal:certified-link}
        \State \Return $\texttt{Commit}(b_{proposer})$ \label{alg:line:universal:indirect-commit}
        \Else
        \State \Return $\texttt{Skip}(w)$ \label{alg:line:universal:indirect-skip}
        \EndIf
        \EndIf
        \EndFor
        \State \Return $\perp$
        \EndProcedure
\fi
        
        \Statex
        \Procedure{TryDirectDecide}{$r$}
            \If{$\Call{IsDirectlySkippedSlot}{r}$}
             \State \Return $\texttt{Skip}(r)$
            \EndIf
        \label{alg:line:baseline:skipped-proposer}
            \State $b_{leader} \gets \Call{DirectlyCommittedLeaderBlock}{r}$
            \If{$b_{leader} \neq \texttt{Undecided}(r)$}
               \State \Return $\texttt{Commit}(b_{leader})$
            \EndIf
            \State \Return $\texttt{Undecided}(r)$
        \EndProcedure
        
    \Statex        
    \Procedure{TryIndirectDecide}{$r, sequence$} \label{alg:line:try-indirect-decide}
    \State $Anchors \gets [s \in sequence \text{ s.t. } s.round > r+2]$
    \For{$a \in Anchors$ in ascending round order}
        \If{$a = \texttt{Undecided(r)}$} 
            \State \Return $\texttt{Undecided(r)}$ 
        \EndIf \label{alg:line:universal:indirect-decide-failed}
        \If{$a = \texttt{Commit}(b_{anchor})$}

        \State $leader \gets \Call {GetPredefinedLeader}{r}$
        \State $Proposals \gets \{ b \in DAG[r] \mid b.author = leader \}$
            
            \State $SortedProposals \gets \Call{Sort}{Proposals}$
            \For{$b_{leader} \in SortedProposals$}
                \State $SP_{cond} \gets \Call{IsSPCertifiedPath}{b_{anchor}, b_{leader}}$
                \State $FP_{cond} \gets \Call{PathsToFPEvidenceQuorum}{b_{anchor}, b_{leader}}$
                \If{
                $SP_{cond}$
                \textbf{or} 
                $FP_{cond}$} \label{alg:line:universal:certified-link} 
                    \State \Return $\texttt{Commit}(b_{leader})$ \label{alg:line:universal:indirect-commit}
                \EndIf
            \EndFor
            \State \Return $\texttt{Skip}(r)$
            \label{alg:line:universal:indirect-skip}
        \EndIf
    
    \EndFor
    \State \Return $\texttt{Undecided(r)}$

\EndProcedure

\Statex
\Procedure{OnCommitLeader}{$b_{leader}$}
    \State $Reachable \gets \textsc{Sort}(\textsc{GetReachableBlocks}(b_{leader}))$
    \For{$b \in Reachable$}
        \State \Call{AppendPayload}{b}
    \EndFor
\EndProcedure

\Statex
\Procedure{AppendPayload}{$b$}
        \If{$b.\text{payload} \neq \text{null}$}
            \For{$(m, id) \in b.payload$}
                \If{$(id, b.author ) \notin Sequenced$}
                    \State $a\_deliver_i(m, id, b.author)$ 
                    \State $Sequenced \gets Sequenced \cup \{(id, b.\text{author})\}$
                \EndIf
            \EndFor
        \EndIf
            
\EndProcedure

    \end{algorithmic}
\end{algorithm}

The pseudocode for FinWhale is given in Algorithms~\ref{alg:construction1}--\ref{alg:universal-committer}. It is based on the Mysticeti design, with the necessary modifications to enable the fast path. Algorithms~\ref{alg:construction1} and~\ref{alg:construction2} describe the construction of the DAG, including executing rounds, creating blocks, advancing rounds, and broadcasting unknown history. Algorithm~\ref{alg:consensus-utils} presents the helper functions, while Algorithm~\ref{alg:consensus-utils1} introduces new helper functions required specifically for enabling the fast path. Algorithm~\ref{alg:universal-committer} describes the logic for deciding leader slots, ordering blocks in the DAG, and delivering payloads.
In the code, we use $DAG[r]$ to refer to the set of all blocks from round $r$ stored in the local DAG.

\section{An Example of FinWhale Execution} \label{sec:run_example}

Figure~\ref{fig:run-example} illustrates an example of the DAG with four validators at a specific moment. 
The protocol iterates the rounds from highest, round 5, to the lowest undecided round, round 1. Initially they are all \texttt{undecided}. The leader of round 5 remains  \texttt{undecided} as there are no blocks in subsequent rounds voting for it.
The leader of round $4$ is directly committed via the fast path, since it has more than $n-p=3$ voters in round $5$.

The leader of round $3$ is directly skipped due to the following two conditions: (a) it has a quorum ($3$) of non-voters in round $4$; and (b) in round $5$, there is a quorum of Non-FP-evidence blocks: each such block references no voters for the round-$3$ leader block and therefore qualifies as a Non-FP-evidence block.

The leader of round $2$ remains \texttt{undecided}. It has only two voters, and is therefore neither directly skipped (which requires at least three non-voters) nor directly committed (which requires at least three voters). Moreover, its anchor in round $5$ is still \texttt{undecided}.

The leader block of round~$1$ cannot be decided directly. It is not directly committed because there is only one vote for it in round~$2$. It is also not directly skipped because, in round~$3$, there are no Non-FP-evidence blocks. All round-$3$ blocks are FP-evidence for the round-$1$ leader block, as each references the round-$2$ block of $V_1$, which votes for that leader block. In the absence of equivocation, referencing $f+p-1=1$ voter is sufficient to qualify as FP-evidence.

However, the leader block of round~$1$ can be indirectly committed. The round-$4$ leader block is marked \texttt{to-commit} and therefore serves as the anchor. Moreover, it has paths to a quorum ($3$) of round-$3$ blocks that are FP-evidence for the round-$1$ leader block. Thus, the round-$1$ leader block is indirectly committed.

The commit sequence is extended to include all leader blocks committed up to the first \texttt{undecided} slot. Therefore, the leader of round~$1$ is included in the commit sequence, all blocks in its causal history are ordered (if they have not already been ordered by a previous leader), and their payloads are delivered.

\begin{figure}[t]
    \centering    \includegraphics[width=0.35\textwidth]{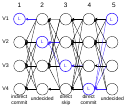}
    \caption{\footnotesize Example of FinWhale execution with four validators. Votes for the leader block are colored in blue.}
    \label{fig:run-example}
\end{figure}
\section{FinWhale's Safety Proof} \label{sec:security}
\label{sec:proof}
In this section, we establish the \emph{safety} of our protocol. Specifically, we prove that the commit sequence is consistent among all honest validators, and consequently that the integrity and total order properties of Byzantine Atomic Broadcast (BAB) hold. To account for the fast path, we adapt existing statements and proofs where necessary, introducing the additional arguments required for fast-path commits while preserving the overall proof structure.

\begin{lemma}
\label{lemma:SP-FP-relation}
Any SP-certificate for block $b$ is also an FP-evidence for $b$.
\end{lemma}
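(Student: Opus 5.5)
The plan is a direct case analysis on the definition of FP-evidence. Let $b'$ be an SP-certificate for a leader block $b$ of round $r$. Then $b'$ is a round-$(r+2)$ block that references at least $2f+p$ parents from distinct validators that vote for $b$. We split according to whether $b'$ exposes equivocation by $L_r$, and in each case check the corresponding condition of the FP-evidence definition.

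\emph{Non-equivocating case.} If $b'$ does not expose equivocation of $L_r$, FP-evidence only requires at least $f+p-1$ parents voting for $b$. Since $2f+p \ge f+p-1$ (indeed $f\ge 1$), this is immediate.

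\emph{Equivocating case.} If $b'$ exposes equivocation, two conditions must hold. First, $b'$ needs at least $f+p$ parents voting for $b$; this follows from $2f+p\ge f+p$. Second, fewer than $f+p$ of its parents may vote for any block conflicting with $b$. For this we use the block-validity rule that a block has at most one parent per validator, so the parents of $b'$ in round $r+1$ come from distinct validators. We also use the voting convention that each round-$(r+1)$ block votes for at most one block of $L_r$. Hence the parents voting for $b$ and those voting for a conflicting block are disjoint subsets of at most $n$ parents. The number voting for conflicting blocks is therefore at most $n-(2f+p) = 3f+2p-1-2f-p = f+p-1 < f+p$, as required. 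If the paper's definition counts total votes over all conflicting blocks, this bound already covers the sum; if it counts per conflicting block, the bound applies a fortiori.

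The only delicate point is the disjointness step. It relies on each parent of $b'$ being counted as a voter for at most one leader block and on parents coming from distinct validators. Both are guaranteed by the validity rule (at most one parent per validator) and the ``first leader block in its parent set'' voting convention, so I would state these explicitly before the arithmetic. Note that this step uses $n=3f+2p-1$ exactly; for larger $n$ the cleaner statement is $n-(2f+p)$ computed with the quorum $\lceil (n+f+1)/2\rceil$.
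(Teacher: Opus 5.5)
Your proof is correct and follows the paper's route: a case split on whether the SP-certificate exposes equivocation. In the non-equivocating case you use $2f+p \ge f+p-1$, and in the equivocating case $2f+p \ge f+p$ together with $n-(2f+p)=f+p-1$. You also spell out the disjointness of voters for $b$ and for conflicting blocks, via the one-parent-per-validator validity rule, which the paper leaves implicit in its ``consequently''.
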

\begin{proof}
Whenever there is an equivocation, an FP-evidence for $b$ includes at least $f+p$ parents voting for $b$ and at most $f+p-1$ voting for any conflicting block $b'$.  
An SP-certificate for $b$ has $\lceil \frac{n+f+1}{2} \rceil = 2f+p$ parents voting for $b$. 
Consequently, there can be at most $f+p-1$ parents voting for any conflicting block $b'$.
In the absence of equivocation, the requirement for an FP-evidence, i.e., at least $f+p-1$ parents voting for $b$, is trivially satisfied.
\end{proof}

\begin{lemma} \label{lemma:recursion-safety_a}
Let $b$ be a leader block in round $r$. If there are 
$\left\lceil \tfrac{n+f+1}{2} \right\rceil = 2f+p$ SP-certificates for $b$ from distinct validators in round $r+2$, 
then every block created in any round $r' > r+2$ (in any DAG) will have a path to an SP-certificate for $b$ from round $r+2$.
\end{lemma}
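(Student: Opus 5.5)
The plan is a quorum-intersection argument at round $r+3$, followed by induction on the round number for rounds beyond $r+3$.

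\textbf{Counting.} Let $C$ be the set of at least $2f+p$ distinct validators that authored SP-certificates for $b$ in round $r+2$. At most $f$ of them are Byzantine, so at least $f+p$ are honest. An honest validator creates exactly one block per round, so each honest member of $C$ has a unique round-$(r+2)$ block, and that block is an SP-certificate for $b$. Because a block's SP-certificate status depends only on its own parent set, this holds in every DAG that contains the block.

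\textbf{Base case, $r'=r+3$.} Every valid block in round $r+3$ references round-$(r+2)$ blocks from at least $n-f$ distinct validators. This holds under block validity rule 2, even after the exclusion of $L_{r+1}$'s block, and applies to Byzantine authors too, since honest validators insert only valid blocks. Validity rule 1 allows only one parent per validator.

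Now intersect. We have $(n-f)+(2f+p)-n=f+p$, so any such parent set contains blocks from at least $f+p\ge f+1$ validators of $C$. Hence at least one of these validators is honest.

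The parent from an honest validator in $C$ must be its unique round-$(r+2)$ block, which is an SP-certificate for $b$. So every round-$(r+3)$ block has an edge, and therefore a path, to an SP-certificate for $b$.

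\textbf{Inductive step, $r'>r+3$.} Every block in round $r'$ references at least $n-f\ge 1$ blocks from round $r'-1$. By the induction hypothesis each of these has a path to an SP-certificate for $b$. Concatenating the edge with that path finishes the step.

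Blocks in round $r'$ may also have parents from older rounds, but this is harmless. We only need the existence of one round-$(r'-1)$ parent, and the block validity rules guarantee it.

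\textbf{Main obstacle.} The delicate step is making sure the intersection argument really lands on a certificate that exists identically in every DAG. This requires three things:
\begin{itemize}
\item the quorum arithmetic $2f+p+(n-f)-n\ge f+1$ with $n=3f+2p-1$;
\item honest validators not equivocating, so their round-$(r+2)$ block is unique;
\item the validity rules being enforced in every honest DAG, so that even Byzantine-authored round-$(r+3)$ blocks carry $n-f$ distinct round-$(r+2)$ parents.
\end{itemize}
The FinWhale exclusion rule removes at most the block of $L_{r+1}$, and the parent set still keeps $n-f$ members. The intersection bound therefore still holds without modification.
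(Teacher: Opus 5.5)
Your proof is correct and takes the same approach as the paper. Quorum intersection at round $r+3$ (with $(n-f)+(2f+p)-n=f+p\ge f+1$) yields an honest validator whose unique round-$(r+2)$ block is an SP-certificate for $b$, and every later round inherits a path to it. You get the later rounds by induction through a round-$(r'-1)$ parent, while the paper points directly to round-$(r+3)$ blocks in the causal history; this is only a cosmetic difference.
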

\begin{proof}
Consider, first, blocks in round $r+3$.
Each such block has at least $n-f$ references to blocks from round $r+2$.  
Since $n-f \geq 2f+p$, by quorum intersection the set of validators whose round-r+2 blocks are referenced intersects the set of validators contributing the 2f+p SP-certificates for b in at least one honest validator.
Hence every block in round $r+3$ references at least one SP-certificate for $b$ from round $r+2$.
Now consider any block in a round $r' > r+3$.  
Its causal history contains at least $n-f$ blocks from round $r+3$, and by the previous case each of these blocks already references an SP-certificate from round $r+2$.  
Therefore, any block in round $r'>r+3$ has a path to an SP-certificate for $b$ from round $r+2$.
\end{proof}

\begin{lemma} \label{lemma:fast-commit-cert}
Let $b$ be a leader block in round $r$.  
If a validator $v_i$ observes $n-p$ blocks from distinct validators voting for $b$ in round $r+1$,  
then every block created in round $r+2$ (in any DAG) will be an FP-evidence for $b$.
\end{lemma}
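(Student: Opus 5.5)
The plan is a direct counting argument on the parent set of an arbitrary round-$(r+2)$ block $b'$. I will use $n=3f+2p-1$, the fact that honest validators create a single block per round, and the block validity rules of FinWhale. Fix the set $V$ of $n-p$ distinct validators whose round-$(r+1)$ blocks, as observed by $v_i$, vote for $b$. Let $H\subseteq V$ be its honest members. Since an honest validator creates only one round-$(r+1)$ block, that block is the same in every DAG, so it votes for $b$ in every DAG. Only Byzantine members of $V$ can have round-$(r+1)$ blocks that vote elsewhere, through equivocation in round $r+1$. This gives $|H|\ge n-p-f$. Moreover, at most $p$ validators lie outside $V$.

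\textbf{Non-equivocating case.} Suppose $b'$ does not expose equivocation by $L_r$. By block validity, $b'$ has round-$(r+1)$ parents from at least $n-f$ distinct validators. Among these, at most $p$ lie outside $V$, and at most $f$ are Byzantine. Hence at least $n-f-p-f = f+p-1$ parents are blocks of validators in $H$, and each of them votes for $b$. This is exactly the threshold required for FP-evidence in the non-equivocating case.

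\textbf{Equivocating case.} Suppose $b'$ exposes equivocation by $L_r$. Then its parent set is not leader-consistent, so validity rule 2(b) forces the round-$(r+1)$ block of $L_r$ to be excluded from the parents, while at least $n-f$ round-$(r+1)$ parents from distinct validators are still present. Since $L_r$ equivocated, $L_r$ is Byzantine and is not among these parents. So the parents contain at most $f-1$ Byzantine authors. The count from the previous case then improves by one: at least $n-f-p-(f-1)=f+p$ parents come from $H$ and vote for $b$.

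For the conflicting side, consider any block conflicting with $b$. A parent voting for it is either Byzantine, giving at most $f-1$ such parents, or honest. An honest parent that votes for a conflicting block does not vote for $b$, so its author is outside $H$. Since honest members of $V$ do vote for $b$, such an author lies outside $V$, giving at most $p$ such parents. Hence fewer than $f+p$ parents vote for any conflicting block, and $b'$ satisfies the equivocating-case condition.

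I expect the main obstacle to be this second case: making rigorous that exposure of equivocation in $b'$ removes one Byzantine author from the parent count. This needs validity rule 2(b), which guarantees $n-f$ parents even after excluding $L_r$'s block. It also needs the check that the ``first leader block'' voting convention cannot make a Byzantine parent count as a vote for both $b$ and a conflicting block in a way that breaks the bound. Finally, I will note that the argument is independent of which DAG $b'$ appears in, since it uses only the honest blocks in $H$, which are identical across DAGs.
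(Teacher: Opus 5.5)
Your proposal is correct and follows essentially the same route as the paper: a two-case count over the at least $n-f$ round-$(r+1)$ parents of an arbitrary round-$(r+2)$ block. In the equivocating case, validity rule 2(b) excludes $L_r$'s block, so at most $f-1$ Byzantine parents remain and at least $f+p$ parents vote for $b$; in the non-equivocating case the count gives at least $f+p-1$. Your version is more careful than the paper's: it restricts to the honest voters $H$, and it explicitly justifies the $(f-1)+p$ bound on conflicting votes, which also holds in aggregate over all conflicting blocks. The voting-convention concern you raise is settled by validity rule 1, which allows at most one parent per validator, so no valid round-$(r+1)$ block can vote for two versions of $L_r$'s block.
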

\begin{proof}
Every block in round $r+2$ references at least $n-f$ blocks from round $r+1$.  
Now suppose validator $v_i$ observed $n-p$ blocks from distinct validators voting for $b$ in round $r+1$:  
\begin{itemize}
    \item If the round-$r+2$ block exposes equivocation by the round-$r$ leader, it does not reference the $r-1$ block of the Byzantine leader, so at most $f-1$ of its parents may be Byzantine. Consequently, it references at least $n-p-f-(f-1) = f+p$ blocks voting for $b$ in round $r+1$, and at most $f+p-1$ blocks voting for a conflicting block. This satisfies the equivocating case of the FP-evidence definition.

    \item If the round-$r+2$ block does not expose equivocation, then among its $n-f$ references it must include at least $n-p-2f=f+p-1$
    blocks voting for $b$ in round $r+1$, which falls under the Non-equivocating case of the FP-evidence definition. 
\end{itemize}
In either case, every block in round $r+2$ qualifies as an FP-evidence for $b$.
\end{proof}

\begin{lemma} \label{lemma:recursion-safety_b}
Let $b$ be a leader block in round $r$.  
If a validator $v_i$ observes $n-p$ blocks from distinct validators voting for $b$ in round $r+1$,
then every block created in round $r'>r+2$ (in any DAG) will link to at least $n-f$ FP-evidence blocks for $b$ from round $r+2$.
\end{lemma}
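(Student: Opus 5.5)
The plan is to combine Lemma~\ref{lemma:fast-commit-cert} with the DAG-structure requirement that every valid block references at least $n-f$ blocks from distinct validators in the preceding round. I will then argue by induction on $r'$, in the same way as the proof of Lemma~\ref{lemma:recursion-safety_a}.

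\textbf{Base case $r' = r+3$.} By block validity, any block $b''$ created in round $r+3$, in any DAG, has at least $n-f$ parents in round $r+2$ from distinct validators. Each of these parents is a round-$(r+2)$ block. Since $v_i$ observed $n-p$ voters for $b$, Lemma~\ref{lemma:fast-commit-cert} makes every round-$(r+2)$ block, in any DAG, an FP-evidence for $b$. So all $n-f$ of these parents are FP-evidence blocks for $b$, and $b''$ links to at least $n-f$ of them.

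I want to note one subtlety for this step. A Byzantine validator may equivocate in round $r+2$, and different DAGs may hold different versions of its block. Lemma~\ref{lemma:fast-commit-cert} is stated for every block in round $r+2$ in any DAG, so each version is individually an FP-evidence for $b$. Because the counting is over distinct authors, equivocation cannot inflate or reduce the count below $n-f$.

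\textbf{Inductive step $r' > r+3$.} Let $b''$ be a block of round $r'$. It has at least one parent in round $r'-1$; in fact it has $n-f$ of them. Any such parent $c$ is valid, so its causal history is present in the DAG, and by the induction hypothesis $c$ links to at least $n-f$ FP-evidence blocks for $b$ from distinct validators in round $r+2$. Paths compose, so $b''$ has paths to those same $n-f$ blocks. Here ``link'' is read as ``has a path to'', which is what the indirect commit rule \textsc{PathsToFPEvidenceQuorum} uses.

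I would add the remark that $n-f = 2f+2p-1 \ge 2f+p$ because $p\ge 1$. This is what later lets the lemma be used for the indirect FP-evidence quorum condition.

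\textbf{Main obstacle.} I expect the base case to be the only delicate point: making sure Lemma~\ref{lemma:fast-commit-cert} applies to round-$(r+2)$ blocks that appear in a DAG other than $v_i$'s. Its statement already covers blocks in any DAG. The only dependence is that each block's own parent set determines its FP-evidence status, so the property is intrinsic to the block and does not depend on which DAG contains it. I would state this explicitly so the transfer across DAGs is clear. The remaining steps are routine path composition.
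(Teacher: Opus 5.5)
Your proposal is correct and follows the paper's proof: any block in round $r'>r+2$ has, in its causal history, at least $n-f$ round-$(r+2)$ blocks from distinct validators, and Lemma~\ref{lemma:fast-commit-cert} makes each of these an FP-evidence for $b$. Your induction on $r'$ (base case $r+3$ via block validity, then path composition) spells out the causal-history claim that the paper asserts in one line, mirroring the structure of the proof of Lemma~\ref{lemma:recursion-safety_a}.
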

\begin{proof}   
Every block in a round $r' > r+2$ contains in its causal history at least $n-f$ blocks created by distinct validators in round $r+2$.
By Lemma~\ref{lemma:fast-commit-cert}, each block in round $r+2$ serves as an FP-evidence for block $b$.
Therefore, any block in rounds greater than $r+2$ must reference at least $n-f$ FP-evidence blocks for $b$.
\end{proof}

\begin{lemma} \label{lemma:direct-skip-a}
If an honest validator $v_i$ commits a leader block  $b$ of slot $s$, no honest validator directly skips $s$.
\end{lemma}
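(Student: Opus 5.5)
We prove the contrapositive. Suppose some honest validator $v_j$ directly skips $s$ (round $r$). Then every honest validator is unable to commit any block $b$ of $s$, whether directly or indirectly. The direct skip at $v_j$ supplies two facts.
\begin{enumerate}
\item For every leader block of $s$, $v_j$'s DAG contains a quorum of $2f+p$ round-$(r+1)$ blocks from distinct validators that do not vote for it.
\item $v_j$'s DAG contains $2f+p$ Non-FP-evidence blocks in round $r+2$ from distinct validators.
\end{enumerate}
For a block $b$ not in $v_j$'s DAG, fact~1 holds trivially. To make this case airtight we use that any block voting for $b$ has $b$ as a parent. Hence any round-$(r+1)$ block in $v_j$'s DAG that votes for some $b$ forces $b$ into $v_j$'s DAG, so non-voters there are genuine non-voters for $b$.

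The plan has four steps, one per way a block of $s$ could be committed.

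\textbf{Direct fast commit.} Suppose $v_i$ sees $n-p$ voters for $b$. These intersect the $2f+p$ non-voters in at least $(n-p)+(2f+p)-n=2f+p-1\ge f+1$ validators, hence in an honest one. That honest validator would have two different round-$(r+1)$ blocks, a contradiction.

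\textbf{Direct slow commit.} Suppose $v_i$ sees $2f+p$ SP-certificates for $b$. Each certificate references $2f+p$ voters for $b$, and these intersect $v_j$'s non-voter quorum in an honest validator, again a contradiction. Alternatively, apply Lemma~\ref{lemma:SP-FP-relation}: the $2f+p$ certificates are FP-evidence for $b$ and intersect the $2f+p$ Non-FP-evidence blocks in at least $f+p\ge f+1$ validators, hence in an honest one. An honest validator's round-$(r+2)$ block is the same block in every DAG, and its parents are the same. So whether it is FP-evidence is a deterministic function of the block. That block cannot be both FP-evidence for $b$ and Non-FP-evidence.

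\textbf{Indirect commit via an SP-certificate.} A path from the anchor to an SP-certificate for $b$ gives a round-$(r+2)$ block referencing $2f+p$ voters for $b$. Intersecting these with the non-voter quorum of fact~1 yields an honest validator that both votes and does not vote for $b$, a contradiction.

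\textbf{Indirect commit via an FP-evidence quorum.} Paths to $2f+p$ FP-evidence blocks for $b$ intersect the $2f+p$ Non-FP-evidence blocks of fact~2 in an honest validator, and we conclude as in the slow direct case.

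The main obstacle is justifying that FP-evidence status is DAG-independent, since \textsc{IsFPEvidence} inspects parents and \textsc{GetConflictingBlocks} inspects the local DAG. The check must also hold when validators see different sets of equivocating leader versions. The parents of a block are fixed by its hashes, and they are present in any DAG containing it. Exposing equivocation, and vote counts for $b$ or for conflicting blocks, depend only on those parents and their parent hashes. So the predicate is intrinsic to the block. We would state this as a short auxiliary observation before the case analysis.

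The other subtle point is the "non-voter" status for blocks absent from $v_j$'s DAG, which is handled by the parent-closure argument above.
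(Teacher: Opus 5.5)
Your proof is correct and follows the paper's route: a case split over the ways $b$ can be committed, each refuted by intersecting $v_i$'s voter quorum (or FP-evidence quorum) with $v_j$'s non-voter quorum (or Non-FP-evidence quorum) in an honest validator. Your DAG-independence observation for FP-evidence makes rigorous what the paper uses implicitly, but two small fixes are needed: the intersection sizes are $(n-p)+(2f+p)-n=2f$ and $2(2f+p)-n=f+1$, not $2f+p-1$ and $f+p$ (both still exceed $f$, so your conclusions stand); and in the FP-evidence case you should add, as the paper does, that an FP-evidence block for $b$ has at least $f+p-1\ge 1$ parents voting for $b$, so $b$ lies in $v_j$'s DAG and is among the proposals against which $v_j$ checks Non-FP-evidence.
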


\begin{proof}
Suppose, by contradiction, that an honest validator $v_j$ decides to directly skip slot $s$. Then both of the following conditions hold:
\begin{enumerate}
    \item For each leader block of $s$ (if any) in the local DAG of $v_j$, there exists a quorum of $2f+p$ blocks from distinct validators in round $r+1$ that do not vote for that leader block.
    
    \item At least $2f+p$ blocks from distinct validators in round $r+2$ are Non-FP-evidence blocks.
\end{enumerate}

Validator $v_i$ can commit $b$ in one of two ways:

\begin{itemize}
    \item \textbf{Direct commit:}
    
    If $v_i$ observes either $n-p$ blocks voting for $b$ or $2f+p$ SP-certificates for $b$, then in both cases there exists a quorum of blocks from distinct validators voting for $b$.
    
    Any quorum of non-voters obtained by $v_j$ intersects this quorum in at least one honest validator. Hence, $b$ must appear in the local DAG of $v_j$.
    
    Therefore, $v_j$ gathered a quorum of blocks that do not vote for $b$, while $v_i$ observed a quorum voting for $b$, which is impossible by quorum intersection.

    \item \textbf{Indirect commit:}
    
    $v_i$ observes a path from the anchor either to an SP-certificate for $b$, or to a quorum of FP-evidence blocks for $b$ from distinct validators.
    
    \begin{itemize}
        \item If there exists an SP-certificate for $b$, then a quorum of blocks from distinct validators vote for $b$. The contradiction then follows exactly as in the direct commit case.
        
        \item If there exists a quorum of FP-evidence blocks for $b$, then by quorum intersection, at least one of the Non-FP-evidence blocks observed by $v_j$ is itself an FP-evidence block for $b$.
        
        Hence, $b$ appears in the local DAG of $v_j$. Since Non-FP-evidence is defined with respect to all leader blocks of slot $s$ in the local DAG of $v_j$, including $b$, this yields a contradiction.
    \end{itemize}
\end{itemize}
Therefore, if a validator commits a leader block, either directly or indirectly, no other validator can directly skip the slot.
\end{proof}

\begin{lemma} \label{lemma:direct-commit-safety-a}
    Let $b$ be a block of leader slot $s$ in round $r$. If a correct validator $v_i$ directly commits $b$, then no correct validator decides to skip slot $s$.
\end{lemma}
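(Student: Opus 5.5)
The plan is to split on the two ways $v_j$ can skip $s$: directly or indirectly.

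\emph{Direct skip.} This case is already covered. A direct commit is a commit, so Lemma~\ref{lemma:direct-skip-a} says that no honest validator directly skips $s$.

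\emph{Indirect skip.} Suppose $v_j$ marks $s$ as \texttt{to-skip} through the indirect rule. Then $v_j$ has an anchor: a slot at some round $r_a > r+2$ marked \texttt{to-commit}, with a committed block $b_a$. The indirect rule only skips if, for every leader block of $s$ in $v_j$'s DAG, there is no path from $b_a$ to an SP-certificate and no paths from $b_a$ to a quorum of $2f+p$ FP-evidence blocks for that block. We contradict this by splitting on how $v_i$ directly committed $b$.

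\textbf{Slow path.} Here $v_i$ saw $2f+p$ SP-certificates for $b$ in round $r+2$. Since $b_a$ is a valid block in round $r_a>r+2$, Lemma~\ref{lemma:recursion-safety_a} gives a path from $b_a$ to some SP-certificate for $b$ in $v_j$'s DAG. That certificate's causal history contains $b$, so $b$ is in $v_j$'s DAG and is among the proposals the indirect rule examines. The first indirect-commit condition therefore holds for $b$.

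\textbf{Fast path.} Here $v_i$ saw $n-p$ voters for $b$ in round $r+1$. By Lemma~\ref{lemma:recursion-safety_b}, $b_a$ links to at least $n-f$ round-$(r+2)$ blocks from distinct validators, and each is FP-evidence for $b$. Since $n-f = 2f+2p-1 \ge 2f+p$ (because $p\ge1$), this is a quorum. Again $b$ lies in the causal history of these blocks, hence in $v_j$'s DAG. The second indirect-commit condition therefore holds for $b$.

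In both cases the rule finds a block satisfying a commit condition and returns \texttt{Commit}, never \texttt{Skip}. This contradicts the assumption. The rule might in principle commit a different block of $s$ that comes earlier in the deterministic order, but the lemma only needs that $v_j$ does not skip, so this does not matter here. Consistency of which block is committed is left to Lemmas~\ref{lemma:unique-cert_2} and~\ref{lemma:unique-cert_3}.

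\emph{Main obstacle.} The delicate point is the fast path, specifically that Lemma~\ref{lemma:fast-commit-cert} applies to round-$(r+2)$ blocks in $v_j$'s view, which may differ from $v_i$'s. The FP-evidence predicate is evaluated against the local DAG, whereas the lemma is stated per block. Because a block's parent set is fixed by its hash, and the equivocation-exposure status depends only on those parents, the evaluation is the same in every DAG, and the lemma applies.

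A second point to check is that ``link to'' in Lemma~\ref{lemma:recursion-safety_b} means a path exists. This follows because every block in round $r'>r+2$ has at least $n-f$ round-$(r'-1)$ parents, and induction down to round $r+2$ keeps at least $n-f$ distinct authors reachable.
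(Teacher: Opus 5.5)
Your proposal is correct and follows the paper's proof. Lemma~\ref{lemma:direct-skip-a} rules out a direct skip; for an indirect skip you split on slow versus fast path and use Lemma~\ref{lemma:recursion-safety_a} and Lemma~\ref{lemma:recursion-safety_b}, with $n-f \ge 2f+p$, to show the committed anchor meets an indirect-commit condition, while your extra remarks (that $b$ lies in $v_j$'s DAG, and that the FP-evidence predicate depends only on a block's causal history) make explicit details the paper leaves implicit.
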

\begin{proof}
By Lemma~\ref{lemma:direct-skip-a}, if $v_i$ directly commits block $b$ of leader slot $s$, no other validator can directly skip $s$.
The remaining case is that a validator attempts to \emph{indirectly skip} $s$. 
This would require that an anchor block in round $r'>r+2$ is marked \emph{to-commit} and satisfies the following two~conditions:
\begin{itemize}
    \item it has no path to to an SP-certificate for $b$, and
    \item it has no paths to any quorum of FP-evidence blocks for $b$ produced by distinct validators.
\end{itemize}

If $v_i$ directly committed $b$ via the slow path, then by Lemma~\ref{lemma:recursion-safety_a} there will be a path from the anchor to an SP-certificate for $b$ in round $r+2$, which contradicts the first~bullet.  
If $v_i$ directly committed $b$ via the fast path, then according to Lemma~\ref{lemma:recursion-safety_b} the anchor block has paths to at least $n-f$ FP-evidence blocks for $b$ (from distinct validators), which contradicts the second bullet above (as $n-f \geq quorum$).
Hence, no correct validator can directly or indirectly skip slot $s$ once $v_i$ has directly committed $b$.
\end{proof}

\begin{lemma} \label{lemma:unique-cert-1}
    For any leader slot $s$, at most one block can gather a quorum ($2f+p$) of votes from distinct validators.
\end{lemma}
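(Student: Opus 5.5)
The plan is a standard quorum-intersection argument by contradiction. Suppose two distinct blocks $b \neq b'$ of the same leader slot $s$ at round $r$ each gather votes from a set of $2f+p$ distinct validators in round $r+1$; call these validator sets $Q$ and $Q'$. Both sets are subsets of the $n = 3f+2p-1$ validators. We will count their overlap and show it must contain an honest validator.

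First, bound the intersection: $|Q \cap Q'| \ge 2(2f+p) - n = 4f+2p-3f-2p+1 = f+1$. Since at most $f$ validators are Byzantine, $Q \cap Q'$ contains at least one honest validator $v$.

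Second, rule out that $v$ votes for both blocks. An honest validator creates exactly one block in round $r+1$, so the same block of $v$ would have to vote for both $b$ and $b'$. This is impossible for two reasons.
\begin{itemize}
\item By the definition of voting, a block votes for at most one leader block: if it references several, it votes for the first in its parent set.
\item The block validity rule allowing at most a single parent per validator forbids referencing both $b$ and $b'$, which share the same author $L_r$.
\end{itemize}
Hence $v$'s unique round-$(r+1)$ block cannot be counted as a vote for both $b$ and $b'$, a contradiction.

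The only point needing care is that "votes" are counted per validator, not per block. Byzantine validators in $Q \cap Q'$ may issue several round-$(r+1)$ blocks voting for different leader blocks, which is exactly why the argument must locate an honest validator in the intersection rather than stop at $|Q \cap Q'| \ge 1$. Beyond this, I expect no real obstacle: the arithmetic $2(2f+p) - (3f+2p-1) = f+1$ is the whole content.
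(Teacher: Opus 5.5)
Your proposal is correct and follows the paper's own argument: the two $2f+p$ voter sets intersect in an honest validator, and that validator's single round-$(r+1)$ block cannot vote for two conflicting leader blocks. Your explicit count $2(2f+p)-(3f+2p-1)=f+1$, and your appeal to the voting definition and the one-parent-per-validator validity rule, fill in steps the paper leaves implicit.
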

\begin{proof}
 Suppose, by way of contradiction, that two conflicting leader blocks gather a quorum of blocks from distinct validators that vote for each of them. The two quorums intersect in at least one honest validator. However, an honest validator cannot vote for two conflicting blocks of the same slot. A contradiction.
\end{proof}

\begin{lemma} \label{lemma:unique-cert_2}
    Consider a block $b$ of slot $s$ that is directly committed via the fast path (i.e., it gathers $n-p$ voters from distinct validators). No other block $b'$ of the same slot $s$ can be committed.
\end{lemma}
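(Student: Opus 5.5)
We argue by contradiction: assume some honest validator $v_j$ commits a block $b' \neq b$ of slot $s$. We case-split on how $b'$ is committed in $v_j$'s DAG: a direct fast-path commit, a direct slow-path commit, an indirect commit via an SP-certificate, or an indirect commit via a quorum of FP-evidence blocks. The first three cases all produce a quorum of $2f+p$ voters for $b'$ from distinct validators. The fast-path commit of $b$ gives $n-p \ge 2f+p$ voters for $b$, since $n-p = 3f+p-1 \ge 2f+p$ whenever $f\ge 1$.

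\emph{Direct commit of $b'$, or indirect commit of $b'$ via an SP-certificate.} Here both $b$ and $b'$ gather a quorum of votes from distinct validators. Lemma~\ref{lemma:unique-cert-1} rules this out directly. In the SP-certificate case, the certificate itself references $2f+p$ distinct voters for $b'$. A direct slow-path commit likewise yields an SP-certificate, and a direct fast-path commit yields $n-p$ voters.

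\emph{Indirect commit of $b'$ via a quorum of FP-evidence blocks.} This is the main case. In $v_j$'s DAG, the anchor has paths to $2f+p$ round-$(r+2)$ blocks from distinct validators, each FP-evidence for $b'$. By Lemma~\ref{lemma:fast-commit-cert}, every round-$(r+2)$ block, in any DAG, is FP-evidence for $b$. So it suffices to show that no single round-$(r+2)$ block $c$ can be FP-evidence for both $b$ and $b'$. This quorum argument is not even needed, since one such block already gives a contradiction.

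\begin{itemize}
\item If $c$ exposes equivocation, the equivocating clause applies to both blocks. Being FP-evidence for $b'$ requires fewer than $f+p$ parents voting for $b$. But the computation in Lemma~\ref{lemma:fast-commit-cert} shows that $c$ has at least $f+p$ parents voting for $b$: $c$ excludes $L_r$'s round-$(r+1)$ block, so at most $f-1$ of its parents are Byzantine, and $n-p-f-(f-1)=f+p$. This is a contradiction.
\item If $c$ does not expose equivocation, then all its parents vote for at most one version of $L_r$'s block. The fast-path quorum forces $c$ to reference at least $f+p-1 \ge 1$ voters for $b$, since $p\ge1$ and $f\ge1$. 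Hence $c$ references no voter for $b'$. Being FP-evidence for $b'$ in the non-equivocating case requires at least $f+p-1\ge 1$ voters for $b'$, which is a contradiction.
\end{itemize}

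The main obstacle is the non-equivocating subcase when $f+p-1$ could be $0$. This cannot happen, because $1\le p\le f$ gives $f+p-1\ge 1$. One must also check that ``exposes equivocation'' is a property of $c$'s parent set alone, not of the local DAG. This holds because FP-evidence is evaluated on $c$'s own parents, so the classification agrees across DAGs. With that in place, no block $b'\neq b$ of slot $s$ can be committed in any honest DAG.
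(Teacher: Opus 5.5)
Your proof is correct and follows the paper's own case split: Lemma~\ref{lemma:unique-cert-1} rules out direct commits of $b'$ and SP-certificate commits of $b'$, and Lemma~\ref{lemma:fast-commit-cert} rules out the FP-evidence-quorum case. You also spell out a step the paper leaves implicit, namely that no round-$(r+2)$ block can be FP-evidence for both $b$ and $b'$; because your argument uses only block validity, it covers evidence blocks created by any validator, whereas the paper speaks only of blocks produced by ``honest validators''.
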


\begin{proof}
Suppose, for the sake of contradiction, that there exists another block $b'$ of slot $s$ that is also committed.
We consider all possible ways in which $b'$ can be committed:
\begin{enumerate}
    \item $b'$ is committed directly via either the fast or the slow path, each of which requires at least $2f+p$ votes from distinct validators.

Since $n-p$ blocks from distinct validators vote for $b$, and $n-p \geq 2f+p$, $b'$ cannot gather a sufficient number of votes, by Lemma~\ref{lemma:unique-cert-1}.

\item $b'$ is indirectly committed via a path to an SP-certificate for $b'$.

An SP-certificate for $b'$ requires at least $2f+p$ votes from distinct validators. Since $b$ already gathered at least $2f+p$ votes from distinct validators, this is impossible by Lemma~\ref{lemma:unique-cert-1}.
    \item 
    $b'$ is indirectly committed via a path to a quorum of FP-evidence blocks for $b'$ from distinct validators.

    Since $b$ was directly committed via the fast path by validator $v_i$, any honest validator can produce FP-evidence blocks only for $b$, and not for the conflicting block $b'$, by Lemma~\ref{lemma:fast-commit-cert}.
    \end{enumerate}

In all cases, we reach a contradiction.
    Hence if a leader block is directly committed via the fast path, no other block of the same slot can be committed. 
\end{proof}

\begin{lemma}\label{lemma:unique-cert_3}
Consider a block $b$ of slot $s$ that is directly committed via the slow path. No other block $b'$ of slot $s$ can be committed.
\end{lemma}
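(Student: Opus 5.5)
The plan is to fix a block $b$ of slot $s$ that some honest validator $v_i$ directly commits via the slow path. Then $v_i$'s DAG contains $2f+p$ SP-certificates for $b$ from distinct validators in round $r+2$. In particular, at least $2f+p$ round-$(r+1)$ blocks from distinct validators vote for $b$. Suppose, for contradiction, that a conflicting block $b'$ of $s$ is committed by some honest validator. I would follow the case structure of Lemma~\ref{lemma:unique-cert_2} and enumerate the three ways $b'$ can be committed.

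\textbf{Case 1 (direct commit of $b'$) and Case 2 (indirect commit via an SP-certificate for $b'$).} Each of these requires a quorum of $2f+p$ votes for $b'$ from distinct validators. A fast-path commit needs $n-p\ge 2f+p$ votes, a slow-path commit needs an SP-certificate, and so does Case 2. Since $b$ already has a quorum of votes, Lemma~\ref{lemma:unique-cert-1} rules both cases out immediately.

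\textbf{Case 3 (indirect commit via a quorum of FP-evidence blocks for $b'$).} This is the main case, because a single DAG may contain FP-evidence for conflicting blocks. Some DAG then contains $2f+p$ round-$(r+2)$ FP-evidence blocks for $b'$ from distinct validators. The $2f+p$ SP-certificates for $b$ also come from distinct validators. Two quorums of size $2f+p$ out of $n=3f+2p-1$ overlap in at least $f+1$ validators, so they share an honest validator $w$. Since $w$ creates only one block in round $r+2$, a single block $c$ is simultaneously an SP-certificate for $b$ and an FP-evidence for $b'$. The key observation is that both properties depend only on $c$'s parents, which are fixed by hash, so they do not depend on which DAG evaluates them. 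By Lemma~\ref{lemma:SP-FP-relation}, $c$ is also FP-evidence for $b$, but I would argue directly that it cannot be FP-evidence for $b'$.

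\begin{itemize}
\item \emph{If $c$ does not expose equivocation of $L_r$,} then all of its parents vote for at most one leader block. Since $c$ has $2f+p\ge 1$ parents voting for $b$, it has no parents voting for $b'$. FP-evidence for $b'$ would need at least $f+p-1\ge 1$ such parents, because $f,p\ge 1$.
\item \emph{If $c$ exposes equivocation,} then FP-evidence for $b'$ requires fewer than $f+p$ parents voting for any block conflicting with $b'$. But $b$ conflicts with $b'$, and $c$ has $2f+p\ge f+p$ parents voting for $b$.
\end{itemize}

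Either way we get a contradiction, which completes Case 3 and the proof.

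The step I expect to need the most care is Case 3. The delicate point is justifying that the FP-evidence status of $c$ with respect to $b'$ is the same in every validator's view. In the equivocating case the conflicting-vote count is taken over $c$'s parents, so it suffices that $b$ lies in $c$'s causal history, which it does because $c$ references votes for $b$. I would make that remark explicit.
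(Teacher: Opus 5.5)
Your proposal is correct and follows the paper's proof: you use the same three-case split, dispatch Cases 1 and 2 by Lemma~\ref{lemma:unique-cert-1}, and handle Case 3 by intersecting the $2f+p$ SP-certificates for $b$ with the $2f+p$ FP-evidence blocks for $b'$. The only difference is that the paper routes Case 3 through Lemma~\ref{lemma:SP-FP-relation} and leaves implicit that a single honest round-$(r+2)$ block cannot be FP-evidence for two conflicting blocks, whereas your direct two-subcase check makes that step explicit, together with your remark that these properties depend only on $c$'s causal history.
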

\begin{proof}
Suppose, for the sake of contradiction, that there exists another block $b'$ of slot $s$ that is also committed.
We consider all possible ways in which $b'$ can be committed:
\begin{enumerate}
    \item $b'$ is directly committed via either the fast or the slow path.
    
    Since $b$ is directly committed via the slow path, there exist at least $2f+p$ votes for $b$ from distinct validators.
    
    Any direct commit of $b'$ also requires at least $2f+p$ votes from distinct validators, which is impossible by Lemma~\ref{lemma:unique-cert-1}.
    
    \item $b'$ is indirectly committed via a path to an SP-certificate for $b'$.
    
    An SP-certificate for $b'$ requires at least $2f+p$ votes for $b'$ from distinct validators.
    
    Since $b$ is directly committed via the slow path, there already exist at least $2f+p$ votes for $b$ from distinct validators. Hence, by Lemma~\ref{lemma:unique-cert-1}, such an SP-certificate for $b'$ cannot exist.
    
    \item $b'$ is indirectly committed via  paths to a quorum of FP-evidence blocks for $b'$ from distinct validators.

    Since $b$ is directly committed via the slow path, there exist $2f+p$ SP-certificates for $b$ from distinct validators. By Lemma~\ref{lemma:SP-FP-relation}, any set of $2f+p$ SP-certificates for $b$ also constitutes a set of $2f+p$ FP-evidence blocks for $b$.

    Therefore, by quorum intersection, there cannot exist $2f+p$ FP-evidence blocks for a conflicting block $b'$ from distinct validators.
    \end{enumerate}
    In all cases, we reach a contradiction.
    Hence if a leader block is directly committed via the slow path, no other block of the same slot can be committed. 
\end{proof}

Combining Lemma~\ref{lemma:unique-cert_2} and Lemma~\ref{lemma:unique-cert_3} implies the following:
\begin{corollary}
    \label{corollary:unique-commit-a}
    If a validator directly commits a leader block,
    no other correct validator commits a different block for the same slot.
\end{corollary}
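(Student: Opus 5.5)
The corollary follows by a case split on which direct commit rule produced $b$. Suppose an honest validator $v_i$ directly commits a block $b$ of slot $s$. By the direct decision rule, this happens in exactly one of two ways. Either $v_i$ observes $n-p$ voters for $b$ in round $r+1$ (the fast path), or it observes $2f+p$ SP-certificates for $b$ in round $r+2$ (the slow path). Fix a correct validator $v_j$, possibly $v_j=v_i$, and suppose toward contradiction that $v_j$ commits a block $b'\neq b$ of the same slot, whether directly or indirectly.

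In the fast-path case, Lemma~\ref{lemma:unique-cert_2} applies verbatim. Its hypothesis is only that $b$ gathered $n-p$ voters from distinct validators, which holds because these voters exist in $v_i$'s DAG. Its conclusion excludes every way $b'$ could be committed: directly through either path, indirectly through an SP-certificate, or indirectly through an FP-evidence quorum. In the slow-path case, Lemma~\ref{lemma:unique-cert_3} applies in the same way and excludes the same three modes. Since the two cases are exhaustive, no correct validator commits $b'$.

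The main point to check is that both lemmas are statements about the global set of blocks, not about $v_i$'s local view. I would stress that votes and SP-certificates are signed blocks, so their existence constrains every validator's DAG. Quorum intersection (Lemma~\ref{lemma:unique-cert-1}) therefore operates on the global vote sets. Lemma~\ref{lemma:fast-commit-cert} explicitly concerns round-$(r+2)$ blocks ``in any DAG'', so $v_j$ cannot find FP-evidence for $b'$ in its own view. With these observations the corollary is immediate and requires no further computation.
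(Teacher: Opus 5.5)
Your proposal is correct and follows the paper's own argument: the paper obtains the corollary directly by combining Lemma~\ref{lemma:unique-cert_2} (fast-path direct commit) and Lemma~\ref{lemma:unique-cert_3} (slow-path direct commit) through the same case split. One small precision: both direct commit rules can hold at once, so ``at least one'' of your two cases applies rather than ``exactly one,'' which does not affect the argument.
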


\begingroup
\endgroup

\begin{lemma}
\label{lemma:consistent-slots-1}
All honest validators decide a consistent state for each leader slot. Specifically, for any leader slot $s$, if two honest validators decide the state of $s$, then they either both skip $s$ or both commit the same leader block.
\end{lemma}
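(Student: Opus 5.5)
We argue by strong induction on the round of the slot, proceeding from higher rounds downward (equivalently, induction on the distance from the highest round decided by both validators), mirroring the structure of the Mysticeti/Starfish argument. Fix a slot $s$ at round $r$ and two honest validators $v_i,v_j$ that both decide $s$. We split into cases according to how each of them decided.

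\emph{Case 1: at least one validator decides $s$ directly.} Suppose $v_i$ directly commits a block $b$. By Lemma~\ref{lemma:direct-commit-safety-a}, $v_j$ does not skip $s$, either directly or indirectly. By Corollary~\ref{corollary:unique-commit-a}, $v_j$ does not commit any $b'\neq b$. Hence $v_j$ commits $b$. Now suppose $v_i$ directly skips $s$. By Lemma~\ref{lemma:direct-skip-a}, no honest validator commits any block of $s$, so $v_j$ must skip as well. The roles of $v_i$ and $v_j$ are symmetric, so this covers every case in which someone decides directly.

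\emph{Case 2: both validators decide $s$ indirectly.} Each uses as anchor the first slot with round greater than $r+2$ that it marks \texttt{to-commit} or \texttt{undecided}. Since both decide, each anchor is \texttt{to-commit}. The key step is to show that both validators use the same anchor slot and commit the same anchor block there.

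For this we invoke the induction hypothesis on all slots with round in $(r+2, r_a]$. These slots are higher than $s$, so by induction $v_i$ and $v_j$ agree on each of them whenever both decide it. We must argue that both in fact decide every slot up to and including the anchor. Slots strictly before $v_i$'s anchor are skipped by $v_i$. If $v_j$ left such a slot undecided, it would itself serve as $v_j$'s anchor, and then $v_j$ would mark $s$ \texttt{undecided}, contradicting the assumption that $v_j$ decides $s$. So $v_j$ decides all those slots, and by induction it skips them too. The same reasoning shows that $v_j$ commits the same block $b_a$ at $v_i$'s anchor slot. Thus both validators share the anchor block $b_a$.

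Because blocks are identified by hashes and a block is inserted into a DAG only after its entire causal history, the causal history of $b_a$ is identical in both local DAGs. Both indirect commit conditions depend only on that history:
\begin{itemize}
\item the path-to-SP-certificate condition, and
\item the paths-to-$2f+p$ FP-evidence blocks condition, where being FP-evidence is a property of a block's own parents (including the exposes-equivocation test).
\end{itemize}
Both validators iterate over the candidate leader blocks in the same deterministic sorted order, so they reach the same commit or skip decision.

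The main obstacle is this last step. The set of $Proposals$ for slot $s$ in the two local DAGs may differ, because an equivocating leader can deliver different blocks to different validators. If each validator merely selects the first qualifying proposal from its own local set, they could pick different blocks. I would resolve this by noting that any block satisfying either condition lies in the causal history of $b_a$: it is voted for by a parent of an SP-certificate or FP-evidence block reachable from $b_a$. Hence both validators hold it, and restricting the sort to such qualifying blocks yields the same choice.

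A second subtlety is the base of the induction, which should be the highest slot that either validator decides. For that slot the anchor argument cannot apply, so any decision on it must have been direct and is therefore covered by Case 1.
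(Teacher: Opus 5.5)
Your proof is correct and follows the paper's argument: the paper also fixes the highest slot with inconsistent decisions, which is equivalent to your downward induction. It dispatches the direct cases with Lemma~\ref{lemma:direct-skip-a}, Lemma~\ref{lemma:direct-commit-safety-a} and Corollary~\ref{corollary:unique-commit-a}, and in the indirect case uses maximality to force a common committed anchor block whose causal history determines the decision. Your extra justifications fill in steps the paper only asserts: an undecided intermediate slot would become $v_j$'s anchor and leave $s$ undecided, and every qualifying proposal lies in the causal history of $b_a$, so differing local sets of equivocating proposals cannot change the deterministic choice.
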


\begin{proof}
Let $v_i$ and $v_j$ be two honest validators.
Assume, by contradiction, that they made inconsistent decisions for some slot.
Let $s$ be the latest leader slot for which such inconsistent decisions were made, and let $r$ be the corresponding round of $s$.
We consider two cases:

\begin{itemize}
    \item Either $v_i$ or $v_j$ decided the state of $s$ directly.
    
    Suppose one of them decided to directly skip slot $s$. Then, by Lemma~\ref{lemma:direct-skip-a}, the other validator cannot decide to commit any block of $s$.
    
    If one of them decided to directly commit a block $b$ of $s$, then by Lemma~\ref{lemma:direct-commit-safety-a}, the other validator cannot skip $s$.
    
    Moreover, by Corollary~\ref{corollary:unique-commit-a}, no conflicting block of $s$ can be committed.
    
    \item Both $v_i$ and $v_j$ decided indirectly.

    Validator $v_i$ decided according to a committed anchor block in round $r_i$, and validator $v_j$ decided according to a committed anchor block in round~$r_j$.

    By the definition of the anchor, $v_i$ decided to skip all slots in rounds $[r+3, r_i)$ and then committed a leader block in round $r_i$. Similarly, $v_j$ decided to skip all slots in rounds $[r+3, r_j)$ and then committed a leader block in round~$r_j$.

    By the maximality of $r$, the decisions of $v_i$ and $v_j$ are consistent for all slots greater than $r$. Therefore, $r_i = r_j$, and the same leader block was committed in that round.

    Since both validators use the same anchor block, they make the same decisions, as these depend only on the causal history of that block.
\end{itemize}
  In both cases, we arrive at a contradiction to the assumption that $v_i$ and $v_j$ made inconsistent decisions. Therefore, all honest validators decide consistently for every leader slot.  
\end{proof}

\begin{lemma}
\label{lemma:leader-consistent}
All honest validators commit a consistent sequence of leader blocks; that is, the committed leader sequence of any honest validator is a prefix of that of any other honest validator.
\end{lemma}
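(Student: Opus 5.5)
The plan is to derive Lemma~\ref{lemma:leader-consistent} from Lemma~\ref{lemma:consistent-slots-1} together with the way the commit sequence is built. Fix two honest validators $v_i$ and $v_j$ and consider their committed leader sequences at any point in time. By construction, each validator's commit sequence is obtained by scanning leader slots in ascending round order starting from the first slot. For each slot $s$ it appends the committed block if $s$ is marked \texttt{to-commit}, omits $s$ if it is marked \texttt{to-skip}, and stops at the first \texttt{undecided} slot. Hence a validator's committed sequence is determined by a prefix of slots $1,\dots,k_i$, every one of which is decided. Without loss of generality assume $k_i \le k_j$.

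First, for every slot $s \le k_i$, both validators have decided $s$. By Lemma~\ref{lemma:consistent-slots-1}, they either both skip $s$ or both commit the same leader block $b_s$. Second, the sequence of $v_i$ is exactly the list of the $b_s$ with $s \le k_i$ marked \texttt{to-commit}, in order of $s$. The sequence of $v_j$ is the same list extended by the committed blocks of slots in $(k_i, k_j]$. Therefore the sequence of $v_i$ is a prefix of that of $v_j$. The symmetric case $k_j \le k_i$ is identical.

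We must also ensure that this holds across time and not only for a snapshot. Decisions are non-reversible, and a validator only ever extends its sequence past the first undecided slot as new slots become decided. So each validator's committed sequence grows monotonically by appending. Applying the snapshot argument at any pair of times then yields the prefix relation.

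The main subtle point is that Lemma~\ref{lemma:consistent-slots-1} was proved by taking $s$ to be the latest inconsistently decided slot. It relies on the decisions of a single validator being fixed once made, even though an indirect decision depends on anchors that may later be decided. I will argue that the pseudocode only finalizes decisions in the ascending pass up to the first \texttt{undecided} slot. Any slot a validator includes in its sequence therefore had its anchor already committed and stable, so the decision it contributes is final and the lemma applies.
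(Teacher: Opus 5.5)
Your proposal is correct and takes essentially the same route as the paper. The paper simply combines Lemma~\ref{lemma:consistent-slots-1} with the fact that each commit sequence lists committed leader blocks in ascending round order up to the first \texttt{undecided} slot, and your $k_i \le k_j$ bookkeeping spells that step out. The paper does not discuss persistence over time, since it takes non-reversibility of slot decisions as a protocol property, so your last two paragraphs go beyond it. The final paragraph is also loosely argued as written: the anchor of an included slot can lie above the first \texttt{undecided} slot (as in the example of Section~\ref{sec:run_example}), so the ascending pass does not itself fix that anchor.
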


\begin{proof}
By Lemma~\ref{lemma:consistent-slots-1}, all honest validators make consistent decisions for every leader slot. The commit sequence of a validator contains all committed leader blocks up to the first \texttt{undecided} slot. Therefore, the committed leader sequence of any honest validator must be a prefix of that of any other honest validator.
\end{proof}

\begin{theorem}[Total Order]
    \label{thm:total-order}
    FinWhale satisfies the total order property of Byzantine Atomic Broadcast.
\end{theorem}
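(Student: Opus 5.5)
The plan is to reduce Total Order to the prefix property of committed leader sequences from Lemma~\ref{lemma:leader-consistent}, together with the fact that the delivery order is a deterministic function of that sequence and of the causal histories of its blocks.

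First I will show that the delivery sequence of an honest validator $v_i$ is a deterministic function of its committed leader sequence $\ell_1,\ell_2,\ldots,\ell_k$. By \textsc{OnCommitLeader}, for each $\ell_j$ in order, $v_i$ takes the causal history of $\ell_j$ minus the blocks already ordered by $\ell_1,\ldots,\ell_{j-1}$. It sorts this set deterministically and delivers the payload tuples not already in $Sequenced$.

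The key point is that the causal history of a block is fixed by the block itself. Blocks reference parents by hash, and a block is inserted only after its entire causal history is present. Since hashes cannot be forged, any two honest validators holding $\ell_j$ hold the identical set of blocks reachable from it. Therefore the segment of blocks ordered at step $j$, and the filtering against $Sequenced$, depend only on $\ell_1,\ldots,\ell_j$. I will state this as a small claim and prove it by induction on $j$. The induction shows that after processing $\ell_1,\ldots,\ell_j$, both the ordered block sequence and the set $Sequenced$ coincide for any two honest validators whose sequences agree up to $j$.

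Next, take honest $v_i$ and $v_j$ with committed sequences $S_i$ and $S_j$. By Lemma~\ref{lemma:leader-consistent}, one is a prefix of the other; say $S_i \preceq S_j$. By the claim, the delivery sequence of $v_i$ is a prefix of that of $v_j$. Delivery only ever appends, so this holds at every point in time, not just at the end.

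Now suppose $v_i$ delivers $(m,id,v_k)$ before $(m',id',v_{k'})$, and $v_j$ delivers $(m',id',v_{k'})$. Consider the moment at which $v_j$ delivers it. One of the two delivery sequences is a prefix of the other, and the position of $(m',id',v_{k'})$ is determined canonically. Hence the shared prefix up to that tuple is identical, and $(m,id,v_k)$ precedes it in both. So $v_j$ has already delivered $(m,id,v_k)$.

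The main obstacle is making the determinism claim rigorous against Byzantine equivocation. A Byzantine author may create two blocks carrying the same $(id,v_k)$ with different $m$. Two validators could pick different ones only if they deliver from different block sets, and the induction rules this out because the block sets are identical by hash. Slot equivocation is already handled by Lemma~\ref{lemma:consistent-slots-1}, since consistent decisions imply that the same leader block is committed.
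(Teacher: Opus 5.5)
Your proposal is correct and follows essentially the same route as the paper. Both arguments rest on Lemma~\ref{lemma:leader-consistent} together with the observation that the block ordering and the $(id, \mathit{author})$ duplicate filter are a deterministic function of the committed leader sequence. Your version is more careful than the paper's, which simply asserts that the delivery order is ``identical'' across validators. You instead justify that causal histories are fixed by hashes, prove prefix-monotonicity by induction on the leader sequence, and derive the pairwise Total Order condition from prefix-comparability of the delivery sequences at arbitrary times.
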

\begin{proof}
    Honest validators order the blocks in the DAG according to an identical deterministic sort induced by the causal histories of the committed leader sequence. By Lemma~\ref{lemma:leader-consistent}, all validators have a consistent sequence of committed leader blocks; therefore, all blocks are ordered identically across all validators.

Each ordered block $b$ contributes tuples $(m,id,b.author)$ derived from its payload. Each $(m,id,b.author)$ tuple is delivered exactly once, as duplicate occurrences are filtered out upon delivery. Hence, the delivery sequence is a well-defined projection of the block order onto tuples.

Since all validators apply the same block ordering and the same deterministic filtering rule, the induced tuple delivery order is identical across all validators.

\end{proof}

\begin{theorem}[Integrity]
    \label{thm:integrity}
    FinWhale satisfies the integrity property of Byzantine Atomic Broadcast.
\end{theorem}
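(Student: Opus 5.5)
The plan is to derive Integrity directly from the delivery rule (procedure \textsc{AppendPayload}) together with the consistency of the ordered block sequence established in Lemma~\ref{lemma:leader-consistent} and Theorem~\ref{thm:total-order}. Recall that Integrity requires that for each pair $(id, v_k)$ at most one message $m$ is delivered by honest validators, with the same $m$ at all of them. I would split this into a local claim (a single honest validator never delivers two messages for the same $(id,v_k)$) and a global claim (two honest validators cannot deliver different messages for the same $(id,v_k)$).

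For the local claim, I would argue directly from the pseudocode. Each honest validator $v_i$ maintains the set $Sequenced$ and invokes $\mathsf{a\_deliver}_i(m,id,b.author)$ only if $(id,b.author)\notin Sequenced$. It then immediately inserts $(id,b.author)$ into $Sequenced$. Since $Sequenced$ only grows, a second tuple with the same $(id,v_k)$ is never delivered by $v_i$, even if a Byzantine $v_k$ placed $(m,id)$ and $(m',id)$ in different blocks, or even in different equivocating versions of the same block.

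For the global claim, I would use that blocks are signed and the adversary cannot forge signatures. Hence a tuple attributed to $v_k$ comes from the payload of a block actually authored by $v_k$. By Lemma~\ref{lemma:leader-consistent}, all honest validators' committed leader sequences are prefixes of one another. As argued in Theorem~\ref{thm:total-order}, the blocks are then ordered by the same deterministic sort of the same causal histories, so the ordered block sequences are also prefix-related. Since hashes identify blocks uniquely, both validators process the same blocks with the same payloads. The first block in this common order containing a pair $(\cdot, id)$ authored by $v_k$ therefore determines the unique $m$ delivered for $(id,v_k)$ at every honest validator that reaches it. Within a single payload, I would fix a deterministic iteration order so that only the first occurrence counts.

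The main obstacle is the global step. I must argue carefully that the ordered sequence of blocks, and not just the leader sequence, is consistent. That requires the fact that the causal history of a committed leader block is identical in every DAG, which follows from hash-chained parent references. It also requires that a Byzantine author's conflicting versions of blocks are ordered identically everywhere. I expect to discharge this by appealing to the reasoning already given in Theorem~\ref{thm:total-order}, noting explicitly that the duplicate filter is applied deterministically along that common sequence.
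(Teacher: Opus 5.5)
Your proposal is correct, and it proves more than the paper's proof does. The paper's whole argument is your local claim: the $Sequenced$ filter in \textsc{AppendPayload}, applied along the causal histories of the committed leader sequence, means no validator delivers twice for the same pair. The paper stops there. It phrases this as each full tuple being delivered exactly once, whereas the filter is keyed on $(id,\mathit{author})$; your wording, at most one $m$ per $(id,v_k)$ at each validator, states more precisely what the filter gives.

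You also note that the Integrity definition, read literally (``there exists at most one message $m$ such that an honest validator outputs\ldots''), quantifies over all honest validators. So you add a global step. The committed leader sequences are prefix-related by Lemma~\ref{lemma:leader-consistent}, and causal histories are fixed by hash-linked parent references, so the ordered block sequences are prefix-related as well. The deterministic filter therefore selects the same first occurrence of $(\cdot,id)$ by $v_k$ at every honest validator.

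This buys a proof of the cross-validator reading directly, rather than leaving that part implicit in Total Order and Agreement, at the price of depending on the safety lemmas. The paper's version is purely local and needs no safety argument, which is enough under a per-validator reading of Integrity. Your point about fixing a deterministic iteration order within a single payload is a real detail the paper skips. It matters only when a Byzantine author puts two pairs with the same $id$ into one block.
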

\begin{proof}
   The filtering mechanism applied by each honest validator to the  the causal histories of blocks in the committed leader sequence ensures that each $(m,id,\textit{author})$ tuple is delivered exactly once.
\end{proof}

\section{FinWhale Liveness and Fast Termination} \label{sec:more-security}

In this section, we establish liveness under partial synchrony. We first show that any honest leader block is eventually committed via the slow path after GST. Since any block created by an honest validator is part of the causal history of some honest leader block, it follows that every honest block is eventually ordered and its payload is delivered.

In addition, we show that if at most $p$ validators behave Byzantine, an honest leader block can be committed via the fast path in just two rounds (fast termination).

Since liveness is established via the slow path, the proof follows~\cite{mysticeti}, which provides the original liveness argument, and~\cite{starfish}, which further refines and completes the argument, with minor adjustments required for the fast path.

\medskip
For the liveness proof, let $r_{\max}$ denote the largest round reached by any honest validator at GST.

\begin{lemma}[Round-Synchronization] \label{lemma:round synch}
    All honest validators enter any round $r > r_{\max}$ within $\Delta$ time of each other. In addition, they create their round-$r$ blocks within $\Delta$ time of each other.
\end{lemma}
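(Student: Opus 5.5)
We prove both claims by induction on rounds $r > r_{\max}$, following the Push Pacemaker argument of \cite{starfish}. The induction hypothesis is that all honest validators enter round $r$ within $\Delta$ of each other. The only genuinely new ingredient is that the advancement condition \textbf{A1'} replaces \textbf{A1}, so we must show that \textbf{A1'} never blocks an honest validator longer than \textbf{A1} would after GST.

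\emph{Entering rounds.} Let $v$ be the first honest validator to enter round $r$, at time $t$. By \textbf{B2}, $v$ broadcasts its unknown history upon advancing, and this includes every block it used to satisfy \textbf{A1'} and \textbf{A2}. After GST, every honest validator receives these blocks by $t+\Delta$. They are then causally available, since cordial dissemination sends the whole unknown causal history.

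A laggard in an earlier round then sees, by $t+\Delta$, at least $n-f$ blocks of each earlier round. It can therefore create its missing blocks immediately via \textbf{C3}, which satisfies \textbf{A2}. The key check is that it also satisfies \textbf{A1'} for round $r$. The set of $n-f$ round-$(r-1)$ blocks that $v$ used is either leader-consistent with respect to $L_{r-2}$, or it excludes $L_{r-2}$'s block. Either property is intrinsic to the set of blocks, not to the observer's view. Hence the same set witnesses \textbf{A1'} in the laggard's DAG, whatever additional conflicting blocks the laggard holds: condition (i) and condition (ii) are each checked on a chosen subset. We will note that \textsc{TryGetLeaderConsistentSet} searches for such a subset, or that more than $n-f$ authors trivially suffices. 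The same argument works for each intermediate round, so the laggard enters round $r$ by $t+\Delta$.

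\emph{Base case.} At GST, pending messages are delivered within $\Delta$. Applying the same argument to the fastest validator at round $r_{\max}+1$ gives the base of the induction.

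\emph{Creating blocks.} Suppose honest validators entered round $r$ within $\Delta$ of each other, and let $v$ be the first to create its round-$r$ block, at time $t'$. There are three cases, according to the trigger $v$ used.
\begin{itemize}
\item If \textbf{C1} triggered creation, then by \textbf{B1} $v$ broadcasts the blocks witnessing \textbf{L1} and \textbf{L2}. Every honest validator then has them by $t'+\Delta$ and satisfies \textbf{C1} too, provided it has already entered round $r$. Since $t'$ is at least $v$'s own entry time, every honest validator has entered by $t'+\Delta$.
\item If \textbf{C2} triggered creation, every honest validator's timer expires within $\Delta$ of $v$'s, because their entry times differ by at most $\Delta$.
\item If \textbf{C3} triggered creation, the $n-f$ round-$r$ blocks are forwarded by \textbf{B1} and arrive within $\Delta$.
\end{itemize}

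\emph{Main obstacle.} The step we expect to be delicate is a subtle interaction with parent selection. When exactly $n-f$ authors are present and $L_{r-2}$'s block is among them, a laggard must be able to build a leader-consistent parent set. Validity of the resulting blocks is not needed for timing, however: only the existence of the witnessing set matters, and we established that above. A second point needs care: a Byzantine $L_{r-2}$ may later inject conflicting votes into the laggard's view. We will argue that this only enlarges the laggard's DAG and never invalidates the witnessing subset. Consequently, waiting for \textbf{A1'} adds no delay beyond the receipt of $v$'s broadcast.
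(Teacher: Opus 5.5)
Your proposal is correct and follows the same skeleton as the paper's proof. You take the first honest validator to enter round $r$ and use its \textbf{B2} broadcast so that laggards catch up via \textbf{C3} and \textbf{A2} by $t+\Delta$. You then case-split on the trigger of the first honest round-$r$ block.

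The one genuinely different step is how you discharge \textbf{A1'} at a laggard:
\begin{itemize}
\item The paper counts. The laggard receives $n-f$ round-$r'$ blocks and adds its own \textbf{C3} block, so it sees $n-f+1$ distinct authors. Dropping $L_{r'-1}$'s block then leaves $n-f$ blocks satisfying option (ii).
\item You transfer the fast validator's own witnessing set. You observe that leader-consistency and exclusion of the leader's block are properties of the block set alone, so \textbf{A1'} is monotone in the DAG.
\end{itemize}
The paper's route is shorter and never has to reason about \textsc{TryGetLeaderConsistentSet}. Yours is more robust. It also covers a laggard that had already created its round-$r'$ block before receiving the history. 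That block may be among the $n-f$ it receives, so only $n-f$ distinct authors are guaranteed and the paper's $n-f+1$ count need not hold. Your witness-transfer argument is what closes that case.

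Two pieces of your scaffolding are unnecessary but harmless:
\begin{itemize}
\item The induction is not needed. Each round $r>r_{\max}$ is handled directly from its first entrant, who enters after GST by the definition of $r_{\max}$.
\item The \textbf{C3} case for the first honest creator is vacuous. The paper rules it out because that validator sees at most $f<n-f$ round-$r$ blocks.
\end{itemize}
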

\begin{proof} 
    Let $t_1$ be the time at which the first honest validator advances to round $r > r_{\max}$. By condition $\mathbf{B2}$, this validator broadcasts its unknown history to all validators. Within $\Delta$, all validators receive this history and have at least $n-f$ blocks (from distinct validators) in their DAGs for all rounds $r' < r$.

Any validator $v$ in round $r'$ creates a block according to \textbf{C3}. Consequently, there are at least $n-f+1$ blocks from distinct validators in round $r'$, which implies that \textbf{A1'} holds, while\textbf{A2} holds since $v$ creates its round-$r'$ block. Therefore, $v$ is enabled to advance to the next round.

Assuming local computation time is negligible compared to message delay, by time $t_1 + \Delta$, every honest validator has caught up with all rounds $< r$ and enters round $r$. It follows that all honest validators enter round $r$ within $\Delta$ time of each other.

Regarding synchronization of block creation in round $r$, assume the first honest validator $v_2$ creates a block at time $t_2$. Since $v_2$ is the first honest validator to create a block in round $r$, its block could not have been triggered by \textbf{C3}, as there could be at most $f$ round-$r$ blocks in its DAG (created by Byzantine validators).

If the block is created due to the timeout condition \textbf{C2}, then every other honest validator also creates its round-$r$ block within $[t_2, t_2 + \Delta]$, since all honest validators enter round $r$ within $\Delta$ time of each other and therefore their timeout expirations occur within the same interval.

If $v_2$'s block is created by condition \textbf{C1}, then according to condition \textbf{B1}, $v_2$ broadcasts its history upon block creation. All other validators receive this history by time $t_2 + \Delta$ and can therefore also create their round-$r$ blocks according to \textbf{C1} by that time.

Therefore, all validators create their round-$r$ blocks within $\Delta$ of each other.   
\end{proof}

\begin{lemma}[Timely Delivery] \label{lemma:timely delivery}
    For any round $r > r_{\max}$, all round-$(r-1)$ blocks created by honest validators are integrated into the DAG of every honest validator before its round-$r$ timeout expires.
\end{lemma}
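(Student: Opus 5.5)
The plan is a timing argument built on Lemma~\ref{lemma:round synch}. Fix $r > r_{\max}$ and an honest validator $v$. Let $t_v$ be the time $v$ enters round $r$. The goal is to show that every honest round-$(r-1)$ block $b$ is in $v$'s DAG before $t_v + \delta_{LT} = t_v + 2\Delta$. Since $r-1 \ge r_{\max}$, Lemma~\ref{lemma:round synch} applies to round $r-1$ as well; this needs a short check when $r-1 = r_{\max}$, which I would handle separately by noting that every block created after GST reaches all honest validators within $\Delta$.

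First I bound the creation time of $b$. Let $u$ be the honest author of $b$. The advancement condition \textbf{A2} requires $u$ to have created its round-$(r-1)$ block before entering round $r$. Let $t_1$ be the earliest time an honest validator enters round $r$. By Lemma~\ref{lemma:round synch}, $u$ enters round $r$ by $t_1 + \Delta$. Hence $b$ was created no later than $t_1 + \Delta$, and at creation $u$ broadcast its unknown history by \textbf{B1}, which includes $b$ itself. Since the send happens after GST, $v$ receives $b$ by $t_1 + 2\Delta$. Because $t_v \ge t_1$, this gives receipt by $t_v + 2\Delta$. The bound is tight, so I would sharpen it slightly. When $u$ advanced to round $r$ after creating $b$, it also broadcast via \textbf{B2}. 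More importantly, the first validator to enter round $r$ already had $u$'s round-$(r-1)$ block, or the blocks from the creation-synchrony clause of Lemma~\ref{lemma:round synch}. That clause gives a tighter statement: all honest round-$(r-1)$ blocks are created within $\Delta$ of each other, and none is created after its author enters round $r$. I would argue strictness either from $t_v > t_1$ or from the assumption that local computation is negligible, mirroring the convention already used in the proof of Lemma~\ref{lemma:round synch}.

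Second, I handle integration, which is the main obstacle: receipt of $b$ is not the same as insertion into the DAG. $b$ can only be added once all of its parents, including Byzantine blocks in its causal history, are present and it passes \textsc{IsValidBlock}. For this I would use cordial dissemination. When $u$ created $b$, the \textbf{B1} broadcast sent $v$ all of $\mathrm{unknown}_u(v)$. Every block in $b$'s causal history lies in $\mathrm{DAG}_u$, so each is either in that message or in $\mathrm{known}_u(v)$. A block in $\mathrm{known}_u(v)$ was either sent earlier by $u$ or lies in the causal history of a block by $v$, and so it is already in $v$'s DAG or arrives no later than $b$ on the reliable link. Thus $v$ can insert $b$ upon receipt. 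Validity also holds, because honest $u$ used the parent-selection procedure, which guarantees at least $n-f$ round-$(r-2)$ parents, one per author, and leader-consistency or exclusion of $L_{r-3}$'s block. That validity is judged identically in any DAG containing the parents, since it depends only on $b$'s parent set.

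Combining the two steps, $b$ is in $v$'s DAG by $t_v + 2\Delta$, i.e., before $v$'s round-$r$ timeout. I expect the delicate point to be the boundary accounting of the first step: obtaining a strict inequality, and justifying that an honest author's block sent after GST arrives within $\Delta$ even if the author itself was lagging. I would lean on Lemma~\ref{lemma:round synch}'s entry bound for this.
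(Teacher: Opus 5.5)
Your timing argument is the paper's. \textbf{A2}, together with the entry-synchronization part of Lemma~\ref{lemma:round synch} for round $r$, shows that every honest round-$(r-1)$ block is created by $t_1+\Delta$. Its \textbf{B1} broadcast then delivers it by $t_1+2\Delta\le t_v+2\Delta$.

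\textbf{The step that fails.} As written, ``since the send happens after GST'' does not hold in general. It is true for $r-1>r_{\max}$. When $r-1=r_{\max}$, however, an honest round-$(r-1)$ block may have been created and broadcast before GST. Your proposed boundary fix (blocks created after GST arrive within $\Delta$) does not cover that case. The missing observation, which the paper uses, is that $r>r_{\max}$ means no honest validator is in round $r$ at GST, so $t_1\ge t_{\mathit{GST}}$. A block sent at time $t$ then arrives by $\max(t_{\mathit{GST}},t)+\Delta\le t_1+2\Delta$. This also removes any need to apply Lemma~\ref{lemma:round synch} to round $r-1$, where it is not available in the boundary case.

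\textbf{The sharpening paragraph.} You can drop it. The paper only establishes integration by $t_1+2\Delta$ and reads that as being within the timeout. One of your supporting claims there is also false: under \textbf{A1'} the first validator to enter round $r$ needs only $n-f$ round-$(r-1)$ blocks, so it need not hold $u$'s block.

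\textbf{The integration step.} Your second step argues, via cordial dissemination and the fact that block validity depends only on the parent set, that receipt implies insertion into the DAG. The paper omits this (it simply asserts integration once the history is broadcast), and your argument is sound. One fix: links are not assumed FIFO. So bound previously sent history blocks by the same $\max(t_{\mathit{GST}},\cdot)+\Delta$, rather than claiming they arrive no later than $b$.
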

\begin{proof} 
Let $t_{\mathit{GST}}$ be the time of GST, and consider a round $r > r_{\max}$. Let $t_{\mathit{last}}$ be the creation time of the last honest round-$(r-1)$ block. By condition \textbf{B1}, whenever a validator creates a block, it broadcasts both the block and its history. Therefore, every honest validator integrates this block into its DAG by time $t_{\mathit{integrate}} \leq \max(t_{\mathit{GST}}, t_{\mathit{last}}) + \Delta$.

Let $t_v$ be the time at which the first honest validator enters round $r$. By definition of $r_{\max}$, we have $t_v \geq t_{\mathit{GST}}$. By Lemma~\ref{lemma:round synch}, all honest validators enter round $r$ within $\Delta$ time of each other. Hence, the last honest validator enters round $r$ no later than $t_v + \Delta$.

Since an honest validator advances to the next round only after creating its block for the current round (condition \textbf{A2}), it follows that $t_{\mathit{last}} \leq t_v + \Delta$, and therefore $t_v \geq t_{\mathit{last}} - \Delta$.

Combining the inequalities $t_v \geq t_{\mathit{GST}}$ and $t_v \geq t_{\mathit{last}} - \Delta$ with a timeout of $2\Delta$ yields $t_v + 2\Delta \geq \max(t_{\mathit{GST}}, t_{\mathit{last}}) + \Delta \geq t_{\mathit{integrate}}$.

 Thus, every honest round-$(r-1)$ block is integrated into the DAG of every honest validator before the expiration of the round-$r$ timeout.
\end{proof}

\begin{lemma}[leader references]
\label{lemma:leader references}
    If the leader of round $r-1$ is honest, where $r > r_{\max}$, then every round-$r$ block created by an honest validator references (i.e., votes for) the round-$(r-1)$ leader block.
\end{lemma}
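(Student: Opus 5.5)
The plan is to combine Lemma~\ref{lemma:timely delivery} with the parent-selection rules. The key point is that an honest leader block in round $r-1$ is present in every honest validator's DAG when that validator builds its round-$r$ block, and that the parent-selection procedure never drops it. Let $b_L$ be the round-$(r-1)$ block of the honest leader $L_{r-1}$. Since $L_{r-1}$ is honest, $b_L$ is the unique block of the slot, and no Byzantine validator can forge a conflicting version.

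First, I would show that $b_L$ is in the local DAG of every honest validator $v$ at the moment $v$ creates its round-$r$ block. I would split on the trigger. If creation is by \textbf{C1}, then requirement \textbf{L1} already asserts that a block of $L_{r-1}$ is present, and by uniqueness it is $b_L$. If creation is by \textbf{C2}, the round-$r$ timeout has expired, and Lemma~\ref{lemma:timely delivery} guarantees that all honest round-$(r-1)$ blocks, in particular $b_L$, were integrated before that expiry. If creation is by \textbf{C3}, then $v$ holds $n-f$ round-$r$ blocks, at least $n-2f \ge 1$ of them honest, so some honest validator already created a round-$r$ block. For this case I would lean on Lemma~\ref{lemma:round synch} together with the timeout argument of Lemma~\ref{lemma:timely delivery}: $b_L$ is broadcast via \textbf{B1} upon creation, and $L_{r-1}$ creates it no later than the other honest validators' round-$(r-1)$ blocks plus $\Delta$, so it arrives within the window. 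I expect \textbf{C3} to be the main obstacle, since a fast validator might see $n-f$ round-$r$ blocks before $b_L$ arrives. If the timing bound does not close directly, my fallback is to argue that the round-$r$ blocks satisfying \textbf{C3} include honest blocks that reference $b_L$, by induction on which honest validator creates first. Their causal histories contain $b_L$, and a block is inserted only once its parents are present, so $b_L$ must be in $v$'s DAG.

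Second, I would show that \textsc{SelectParents} retains $b_L$. In step~2, the candidate set may be restricted to a leader-consistent set with respect to $L_{r-2}$, which removes only conflicting votes for a version of $L_{r-2}$'s block. Since $b_L$ is authored by the honest $L_{r-1}$, I will check that it votes only for the unique version of $L_{r-2}$ it saw, so it is not excluded whenever the chosen proposal is the one it votes for. Step~3 picks exactly one block per author, and for $L_{r-1}$ that block is $b_L$ by uniqueness. Step~4 removes only the block authored by $L_{r-2}$, which is not $b_L$ unless $L_{r-2}=L_{r-1}$; the round-robin schedule excludes this for $n>1$.

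Combining the two steps, every honest round-$r$ block has $b_L$ among its parents, and therefore votes for it. The delicate point in the second step is step~2, because restricting to a leader-consistent set could drop $b_L$ if $b_L$ voted for a different version of $L_{r-2}$'s block than the one used to build the consistent set. To handle this I would use the fact that step~2 applies only when exactly $n-f$ validators are present in round $r-1$. In that case any consistent set must contain a block from every present author, including $L_{r-1}$, whose only block is $b_L$. So $b_L$ belongs to any consistent set returned.
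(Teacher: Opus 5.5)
Your proof is correct and shares the paper's skeleton: a case split on \textbf{C1}/\textbf{C2}/\textbf{C3}, with Lemma~\ref{lemma:timely delivery} handling \textbf{C2}. It departs from the paper in two places.

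\textbf{The \textbf{C3} case.} The paper does not use timing here. It takes the set $H$ of the $n-2f$ fastest honest round-$r$ creators and notes that none of them can have used \textbf{C3}. It then concludes by counting that at least one of the $\ge n-2f$ honest round-$r$ blocks held by $v_i$ lies in $H$; this implicitly needs $n-3f = 2p-1 \ge 1$. Your fallback instead inducts on creation order and uses the rule that a block is inserted only after its parents. It reaches the same conclusion, needs only $n-2f \ge 1$, and makes explicit why $b_L$ must already be in the DAG. You should present it as the main argument and drop the timing sketch, which does not close on its own. Nothing in the timing bounds stops $v$ from collecting $n-f$ round-$r$ blocks before $b_L$ arrives; it is the parent-availability rule that forces $b_L$ in.

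\textbf{Parent selection.} The paper's proof simply asserts that a validator holding $b_L$ ``references it'' and never examines \textsc{SelectParents}. Your verification fills that implicit step:
\begin{itemize}
\item in step~2 the returned set must contain a block from each of the exactly $n-f$ present authors, hence $b_L$;
\item step~3 must pick $b_L$ for $L_{r-1}$;
\item step~4 removes only $L_{r-2}$'s block, and $L_{r-2} \neq L_{r-1}$ under round robin.
\end{itemize}

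\textbf{Ordering.} State and prove the parent-selection argument first, as a claim about any honest round-$r$ block whose creator holds $b_L$ at creation time. Your \textbf{C3} induction hypothesis needs the earlier honest round-$r$ blocks to actually have $b_L$ as a parent, not merely to have had it in their creators' DAGs.
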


\begin{proof}
If a round-$r$ block created by an honest validator $v_i$ was triggered by \textbf{C1}, then by definition it receives the leader block and references it.

If the block was created by condition \textbf{C2}, then by Lemma~\ref{lemma:timely delivery} the honest leader block is received before the timeout, and the block references it.

If the block is created by condition \textbf{C3}, then $v_i$ receives $n-f$ round-$r$ blocks, at least $n-2f$ of which are from honest validators. Consider the set $H$ consisting of the fastest $n-2f$ honest validators to create their round-$r$ blocks. Their block creation is not triggered by $\mathbf{C3}$, as there are not enough round-$r$ blocks to trigger it. Therefore, these blocks are created either by \textbf{C1} or \textbf{C2}, and by the arguments above, they reference the leader block of round $r-1$.

Among the $n-2f$ honest blocks received by $v_i$, at least one belongs to $H$, and thus references the leader block of round $r-1$. Therefore, $v_i$ receives and references the round-$r-1$ leader block.

\end{proof}

\begin{lemma}[quorum of votes]\label{lemma:leader votes}
If the leader of round $r-2$ is honest, where $r > r_{\max}+1$, then every round-$r$ block created by an honest validator references a quorum ($2f+p$) of blocks voting for the round-$(r-2)$ leader block.
\end{lemma}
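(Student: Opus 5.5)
The proof follows the structure of Lemma~\ref{lemma:leader references}, lifted one round. Let $b_L$ be the honest round-$(r-2)$ leader block. Since $r-1 > r_{\max}$, Lemma~\ref{lemma:leader references} applied to round $r-1$ shows that every honest round-$(r-1)$ block votes for $b_L$. There are $n-f = 2f+2p-1 \geq 2f+p$ honest validators. The goal is to show that every honest round-$r$ block includes, among its round-$(r-1)$ parents, at least $2f+p$ such votes. We split on which condition triggered creation of the round-$r$ block by honest $v_i$.

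\textbf{Case C1.} Requirement \textbf{L2} holds at $v_i$: either (i) $v_i$ has a quorum of voters for $b_L$, or (ii) it has an SP-skip pattern for round $r-2$. We rule out (ii). Since $b_L$ is honest, there is a unique block for the slot. By Lemma~\ref{lemma:timely delivery} and the broadcast of $b_L$'s history, $b_L$ is present in $v_i$'s DAG by the time any honest round-$(r-1)$ block voting for it is present. An SP-skip would therefore need $2f+p$ round-$(r-1)$ non-voters for $b_L$, which must include at least $f+p \ge 1$ honest non-voters. This contradicts Lemma~\ref{lemma:leader references}.

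So (i) holds. The parent-selection rule requires that the blocks used to satisfy \textbf{L2} be included. We must check that step 2 and step 4 of parent selection do not discard them. Step 4 only removes the round-$(r-1)$ block authored by $L_{r-2}$, which is honest. That block votes consistently, so leader-consistency with respect to $L_{r-2}$ cannot fail: the only proposal is $b_L$. Hence no exclusion occurs. Step 2 restricts the candidates to a leader-consistent set. Because $L_{r-2}$ has a single block, $\textsc{TryGetLeaderConsistentSet}$ returns all of $DAG[r-1]$, so nothing is lost. Therefore $2f+p$ voters are referenced.

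\textbf{Cases C2 and C3.} For C2, Lemma~\ref{lemma:timely delivery} says that all honest round-$(r-1)$ blocks, which are $n-f \ge 2f+p$ voters for $b_L$, are in $v_i$'s DAG before the timeout. The parent set contains one block per validator available in $DAG[r-1]$. By the argument above, no exclusion applies, so all these honest voters are referenced.

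For C3, there are two routes. One is to argue that the parent set includes every round-$(r-1)$ block in the DAG, with one per author. The other is to argue as in Lemma~\ref{lemma:leader references}. By \textbf{A1'}/\textbf{A2} synchronisation (Lemma~\ref{lemma:round synch}), $v_i$ entered round $r$ only after its round-$(r-1)$ phase. Together with Lemma~\ref{lemma:timely delivery}, this gives that $v_i$ has all honest round-$(r-1)$ blocks; if the timing is not immediate, we use the fact that the received round-$r$ honest blocks carry them in their causal history, which $v_i$ must integrate before adding those blocks. Either way, the $n-f$ honest voters are present and referenced.

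\textbf{Main obstacle.} The delicate step is C3, and more generally showing that $v_i$'s parent set actually contains at least $2f+p$ honest voters rather than merely $n-f$ arbitrary round-$(r-1)$ blocks, many of which may be Byzantine non-voters. I would first try the causal-history argument: the fastest honest round-$r$ blocks were created by C1 or C2, and by the above they reference a quorum of voters. Since $v_i$ holds such a block, those voters are in its DAG, and the parent-selection rule (one block per available author, no exclusion because $L_{r-2}$ is honest) includes them.
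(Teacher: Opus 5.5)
Your C1 and C2 cases match the paper's proof. You rule out the SP-skip branch of \textbf{L2} because any $2f+p$ round-$(r-1)$ blocks include at least $f+p$ honest voters. You note that no parent-selection step discards anything, since the honest $L_{r-2}$ has a single block. You use Timely Delivery for C2. (In C1, $b_L$ is in $v_i$'s DAG by causal availability, because honest round-$(r-1)$ blocks reference it; Timely Delivery is not what gives this.)

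The gap is in C3 as you wrote it. Timely Delivery only guarantees arrival before $v_i$'s round-$r$ \emph{timeout}, and C3 may fire earlier. Also, each honest round-$r$ block you hold need only reference $n-f$ round-$(r-1)$ blocks, up to $f$ of them Byzantine. Their causal histories therefore need not cover all honest round-$(r-1)$ blocks: the blocks of $f$ slow honest validators may simply not have arrived. Hence ``either way, the $n-f$ honest voters are present'' is unsupported. Your first route only says what is referenced, not what is present.

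The argument that works is the one in your last paragraph, and it is exactly the paper's. You need to make it the C3 proof and supply the counting:
\begin{itemize}
\item Let $H$ be the first $n-2f$ honest validators to create round-$r$ blocks. Each saw at most $(n-2f-1)+f<n-f$ round-$r$ blocks when creating. So each used C1 or C2, and by your earlier cases references $2f+p$ voters.
\item Under C3, $v_i$ holds at least $n-2f$ honest round-$r$ blocks. Only $f<n-2f$ honest validators lie outside $H$, so some held block $b$ is from $H$. It brings its $2f+p$ voters into $v_i$'s DAG.
\end{itemize}

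One caution, which the paper also glosses over: ``parent selection includes them'' is automatic only for honest voters. A Byzantine voter may have a non-voting sibling in $v_i$'s DAG, and C3 alone does not force the voting version to be chosen. This is easy to close:
\begin{itemize}
\item If $b$ was created by C1, it carries an $L_{r-1}$ block and the $2f+p$ voters. Then \textbf{L1} and \textbf{L2} hold at $v_i$, and the C1 inclusion rule applies.
\item If $b$ was created by C2, it references all $n-f\ge 2f+p$ honest round-$(r-1)$ blocks. These are unique per author, so $v_i$ references all of them.
\end{itemize}
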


\begin{proof}
First, observe that if an honest validator creating a round-$r$ block has received $2f+p$ round-$(r-1)$ blocks voting for the leader block of round $r-2$, then by the parent-selection mechanism these blocks are referenced as parents. Indeed, since the leader of round $r-2$ is honest, all round-$(r-1)$ blocks are leader-consistent with respect to that leader. Therefore, no such block is excluded by parent selection.

By Lemma~\ref{lemma:leader references}, every round-$(r-1)$ block created by an honest validator votes for the round-$(r-2)$ leader block. The number of honest validators is $n-f = 2f+2p-1 \geq 2f+p$, for $p \geq 1$.

We now analyze the three block-creation conditions.
If a round-$r$ block created by an honest validator $v_i$ is triggered by \textbf{C1}, then by condition \textbf{L2}, $v_i$ has received $2f+p$ round-$(r-1)$ blocks from distinct validators voting for the round-$(r-2)$ leader block. The \textbf{L2} SP-skip condition cannot hold, since all blocks created by honest validators vote for the leader, so no quorum of non-voting blocks can be formed. Hence, the $2f+p$ voting blocks are selected and referenced as parents.

If the block is created by condition \textbf{C2}, then by Lemma~\ref{lemma:timely delivery}, all round-$(r-1)$ blocks created by honest validators are received by $v_i$ before its round-$r$ timeout expires. Hence, $v_i$ receives at least $2f+p$ round-$(r-1)$ blocks voting for the leader block before timeout expiration. Therefore, the created block references these blocks as parents.

Finally, suppose the block is created by condition \textbf{C3}. Then $v_i$ received $n-f$ round-$r$ blocks from distinct validators, at least $n-2f$ of which are from honest validators.

Consider the set $H$ consisting of the fastest $n-2f$ honest validators to create their round-$r$ blocks. Their block creation cannot be triggered by \textbf{C3}, since there are not yet enough round-$r$ blocks to satisfy that condition. Hence, these blocks are created either by \textbf{C1} or \textbf{C2}, and by the arguments above, each references $2f+p$ blocks voting for the round-$(r-2)$ leader block.

Among the $n-2f$ honest round-$r$ blocks received by $v_i$, at least one block $b$ belongs to $H$. Since $b$ references $2f+p$ blocks voting for the round-$(r-2)$ leader block, these voting blocks are received together with $b$. Therefore, $v_i$ receives and references these blocks as parents as well.
\end{proof}

\begin{lemma}[honest leaders committed]
    \label{lemma:honest leader committed}
    Every leader block created in round $r > r_{\max}$ by an honest validator is committed (i.e., its slot is marked \texttt{to-commit}).
\end{lemma}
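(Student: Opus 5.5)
To prove this, I would first establish that every honest validator directly commits the honest leader block $b$ of round $r>r_{\max}$ via the slow path, once its local DAG contains enough round-$(r+2)$ blocks. Indirect commit rules and anchors then only matter for validators whose views lag. Let $b$ be created by the honest leader $L_r$. Since $L_r$ is honest, it produces a single block for the slot, so there are no conflicting versions and no equivocation to handle.

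\textbf{Step 1.} Apply Lemma~\ref{lemma:leader votes} with the leader of round $(r+2)-2=r$. The hypothesis $r+2>r_{\max}+1$ holds since $r>r_{\max}$. The lemma gives that every round-$(r+2)$ block created by an honest validator references a quorum of $2f+p$ round-$(r+1)$ blocks voting for $b$. Each such block is therefore an SP-certificate for $b$.

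\textbf{Step 2.} There are $n-f=2f+2p-1\ge 2f+p$ honest validators. After GST, each of them creates its round-$(r+2)$ block, using Lemma~\ref{lemma:round synch} and the pacemaker guaranteeing progress. By \textbf{B1}, these blocks are broadcast together with their causal histories, so within $\Delta$ every honest validator holds at least $2f+p$ SP-certificates for $b$ from distinct validators in round $r+2$. The direct commit rule (case 1) then applies, and $b$ is marked \texttt{to-commit}.

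\textbf{Step 3.} I must rule out that some honest validator first marks the slot \texttt{to-skip}, since decisions are irreversible. Here I would invoke safety. Any honest validator that commits $b$ directly has, by Lemma~\ref{lemma:direct-commit-safety-a}, the consequence that no honest validator skips the slot. Moreover, by Lemma~\ref{lemma:consistent-slots-1} decisions are consistent, so every honest validator that decides the slot commits $b$. A more self-contained route for direct skip is also available. All honest round-$(r+1)$ blocks vote for $b$ by Lemma~\ref{lemma:leader references}, so at most $f<2f+p$ non-voters exist and the SP-skip pattern can never be observed. Direct skip is therefore impossible. Since Step 2 already shows that every honest validator eventually decides the slot directly, this also covers validators whose \texttt{TryDecide} pass reaches the slot.

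\textbf{Main obstacle.} The main obstacle is Step 3's interaction with the indirect rule. A validator might process the slot indirectly before having the round-$(r+2)$ quorum locally. This cannot happen, though, because the direct rule is always tried first for each round. Moreover, an indirect decision requires a committed anchor at round $>r+2$, whose causal history already contains at least $n-f$ round-$(r+2)$ blocks. By Step 1, at least $n-2f\ge 1$ of these are honest SP-certificates. Hence the indirect rule also yields \texttt{to-commit} for $b$. I would also check that the timing argument gives eventual, not only timely, reception; reliable links plus \textbf{B1} suffice for this.
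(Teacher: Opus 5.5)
Your proof is correct and follows the paper's route. Lemma~\ref{lemma:leader votes} makes every honest round-$(r+2)$ block an SP-certificate for $b$, and $n-f\ge 2f+p$ of them give the direct slow-path commit; the paper additionally splits off the case $p=f$ as a fast commit from the $n-f=n-p$ honest voters, which your uniform treatment shows is unnecessary because that lemma holds for every $p\ge 1$. Your Step~3 goes beyond the paper's proof by explicitly ruling out an earlier irreversible skip: only Byzantine blocks can be non-voters, so no SP-skip pattern forms, and any committed anchor in round $>r+2$ reaches an honest SP-certificate. That self-contained argument is what carries the step, not the appeal to Lemma~\ref{lemma:direct-commit-safety-a}, which presupposes that some honest validator actually commits $b$ directly.
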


\begin{proof}
By Lemma~\ref{lemma:leader references}, every block created by an honest validator in round $r+1$ votes for the honest leader block of round $r$. These $n-f$ voting blocks are eventually received by all honest validators.

If $p = f$, then these $n-f$ blocks from distinct validators are sufficient for a direct fast commit.

If $p < f$, then by Lemma~\ref{lemma:leader votes}, every round-$(r+2)$ block created by an honest validator references a quorum of $2f+p$ voters for the leader block, and thus constitutes an SP-certificate for it. Since $n-f \geq 2f+p$, every honest validator eventually receives at least $2f+p$ such SP-certificates. Consequently, each honest validator marks the slot as \texttt{to-commit}, and the leader block is \emph{committed}.
\end{proof}

\begin{theorem}[Fast Termination]
    \label{thm:fast-termination}
    FinWhale satisfies fast termination.
\end{theorem}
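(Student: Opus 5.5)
We argue directly from the Fast Termination definition: assume that during rounds $r$ and $r+1$ the network is synchronous (after GST, so $r > r_{\max}$ and the Round-Synchronization and Timely Delivery lemmas apply), that $L_r$ is honest, and that at least $n-p$ validators behave honestly. The goal is to show that every honest validator sees $n-p$ round-$(r+1)$ blocks from distinct validators voting for $b$, the unique round-$r$ block of $L_r$. The direct decision rule (case 2) then marks the slot \texttt{to-commit} as soon as these votes arrive. The first message delay is the dissemination of $b$; the second is the dissemination of the round-$(r+1)$ votes.

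\textbf{Step 1 (honest votes).} Apply Lemma~\ref{lemma:leader references} with round $r+1$. Every round-$(r+1)$ block created by an honest validator references $b$. The case analysis over \textbf{C1}, \textbf{C2}, \textbf{C3} carries over verbatim, since it uses only timely delivery and the honesty of the leader.

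\textbf{Step 2 (parent selection keeps the vote).} Since $L_r$ is honest there is exactly one version of $b$. Hence every set of round-$(r+1)$ blocks is leader-consistent with respect to $L_r$. Step~4 of parent selection therefore never strips a vote for $b$, and each such voting block is valid.

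\textbf{Step 3 (counting and timing).} There are at least $n-p$ momentarily honest validators, each producing a round-$(r+1)$ vote for $b$. By \textbf{B1} each such block is broadcast on creation. By Round-Synchronization, all honest validators create their round-$(r+1)$ blocks within $\Delta$ of each other. So within one further delay every honest validator holds $n-p$ voters for $b$ from distinct validators and directly commits $b$. Lemma~\ref{lemma:direct-skip-a} and Corollary~\ref{corollary:unique-commit-a} guarantee that this decision is not contradicted by a skip or by a conflicting commit.

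\textbf{Main obstacle.} The delicate point is the timing claim of ``two message delays'': the votes must be created as soon as $b$ arrives, not after a $2\Delta$ timeout. Here I would argue that in the favourable case condition \textbf{C1} is met promptly for round $r+1$. \textbf{L1} holds once $b$ is received. \textbf{L2} concerns $L_{r-1}$ and is already satisfied, either by a quorum of voters or by an SP-skip pattern, from the preceding synchronous round, because at least $n-p \ge 2f+p$ honest round-$r$ blocks are available. If \textbf{L2} cannot be assumed, I would restrict the claim to measuring latency from the moment the leader block is sent, counting the creation of the votes as local computation, as in Banyan's definition.
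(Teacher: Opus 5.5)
Your Steps~1 and~3 are the paper's proof. Lemma~\ref{lemma:leader references}, applied one round later, makes every honest round-$(r+1)$ block a vote for $b$. With at least $n-p$ honest validators this gives $n-p$ distinct voters, so the fast direct-commit rule fires. The paper stops there and reads ``two message delays'' as ``decided directly from the round-$(r+1)$ votes'' (its proof says ``after two rounds''). For that reading your argument is complete, but two of your additions are wrong and should be removed or repaired.

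\textbf{Step~2 is off by one round.} Validity of a round-$(r+1)$ block, and the exclusion in step~4 of its parent selection, are governed by leader-consistency with respect to $L_{r-1}$. The only block step~4 can drop is the round-$r$ block of $L_{r-1}$. Leader-consistency of round-$(r+1)$ blocks with respect to $L_r$ matters only when round-$(r+2)$ blocks are built, i.e., for FP-evidence. The correct reason $b$ is never stripped is that $L_{r-1}\neq L_r$ under round-robin, and Lemma~\ref{lemma:leader references} already covers this.

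\textbf{The claim in your ``main obstacle'' paragraph is false.} You assert that \textbf{L2} for $L_{r-1}$ already holds. But having $n-p\ge 2f+p$ honest round-$r$ blocks does not force either alternative of \textbf{L2}, because those blocks can split. A concrete run:
\begin{itemize}
\item $L_{r-1}$ is one of the at most $p$ Byzantine validators. It releases its round-$(r-1)$ block to only $f$ honest validators, timed (using delays up to $\Delta$) so that the other $2f+p-1$ honest validators create their round-$r$ blocks at their timeouts without it.
\item The $p$ Byzantine round-$r$ blocks vote for it.
\item Then there are at most $f+p<2f+p$ voters and at most $2f+p-1<2f+p$ non-voters, so neither a vote quorum nor an SP-skip pattern exists.
\item Hence \textbf{C1} fails for round $r+1$ at every honest validator. \textbf{C3} cannot fire either, since only $p<n-f$ Byzantine blocks can precede the honest ones.
\item So the votes for $b$ are created only at the $2\Delta$ timeout.
\end{itemize}
Your fallback of treating vote creation as local computation does not remove this timeout wait. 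Either state the guarantee in rounds, as the paper effectively does, or add a hypothesis on $L_{r-1}$ under which \textbf{L2} can actually be shown to hold promptly.
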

\begin{proof}
    By Lemma~\ref{lemma:leader references}, after GST every block created by an honest validator in round $r+1$ votes for the honest leader block of round $r$. 
    If at least $n-p$ validators behave honestly (equivalently, at most $p$ validators are Byzantine), then the leader's block eventually gathers $n-p$ honest voters, resulting in a direct fast commit after two rounds.
\end{proof}

\begin{lemma}
    \label{lemma:3-consecutive}
     The round-robin leader schedule ensures that in  any window of  $3f+3$ rounds, there are three consecutive rounds with honest leaders. 
\end{lemma}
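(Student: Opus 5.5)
We would treat this as a pure combinatorial statement about the schedule $\textsc{GetPredefinedLeader}(r) = (r \bmod n)+1$, independent of the DAG. Fix any window $W$ of $3f+3$ consecutive rounds. Consecutive rounds have cyclically consecutive leaders, so $W$ reads the cyclic order $1,2,\ldots,n$ starting at some offset. Since $n = 3f+2p-1$ with $p \ge 1$, we have $n$ odd relative to $3f$. In particular either $n = 3f+1$ (when $p=1$) or $n \ge 3f+3$ (when $p \ge 2$). We will split the argument on these two cases, since they behave differently with respect to repetitions inside $W$.

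\textbf{Case $n \ge 3f+3$.} Here $|W| \le n$, so every validator is the leader of at most one round in $W$. We partition $W$ into $f+1$ disjoint triples of consecutive rounds. Each of the at most $f$ Byzantine validators leads at most one round of $W$, so it lies in at most one triple. By pigeonhole, some triple contains no Byzantine leader. This gives three consecutive rounds with honest leaders.

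\textbf{Case $n = 3f+1$.} Now $|W| = n+2$, and the disjoint-triples argument breaks, because the leaders of the first two rounds of $W$ reappear in its last two rounds. A single Byzantine validator could therefore spoil two triples. Instead we argue on the cycle. Place the $n$ validators on a cycle in schedule order. The (at most) $f$ Byzantine validators split the remaining $n-f = 2f+1$ honest validators into at most $f$ maximal cyclic runs of consecutive honest validators. If every run had length at most $2$, there would be at most $2f < 2f+1$ honest validators, a contradiction. Hence some cyclic run of length at least $3$ exists. (If $f=0$, all leaders are honest and the claim is trivial.) It remains to check that any three cyclically consecutive validators $j, j+1, j+2$ (indices mod $n$) occur as leaders of three consecutive rounds inside $W$. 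This holds because a window of $n+2$ consecutive rounds contains, for every starting position in the cycle, a full stretch of three consecutive leaders: the window wraps around the cycle completely, with two extra positions to cover the wrap-around.

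The only step we expect to need care is this last one. It is the observation that $n = 3f+1$ is the sole value of $n$ below $3f+3$ permitted by $n = 3f+2p-1$, and that there a window of length exactly $n+2$ covers every cyclic triple. Both cases are short pigeonhole arguments. We would state explicitly that the claim only concerns the identities of leaders, not their behavior, so ``honest leader'' simply means the scheduled leader is an honest validator.
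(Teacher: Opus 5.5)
Your proof is correct. For $n=3f+1$ it is exactly the paper's argument: the at most $f$ Byzantine validators cut the cycle into at most $f$ honest runs, $2f<2f+1$ forces a run of length three, and a window of $n+2$ rounds realizes every cyclic triple.

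Where you genuinely differ is the case split, and it matters. The paper gives only the cyclic argument. It then asserts that a window of $3f+3$ rounds ``contains a full cycle of $n$ rounds plus the first two rounds of the next cycle''. That is true only when $n=3f+1$, i.e.\ $p=1$. For $p\ge 2$ one has $n\ge 3f+3$, so the window covers at most one cycle. A cyclic honest triple that wraps around the window boundary is then not three consecutive rounds inside $W$, and the paper's proof leaves this case open. Your disjoint-triples pigeonhole closes it: no leader repeats inside $W$, so each Byzantine validator spoils at most one of the $f+1$ triples. It is also the more elementary of the two arguments. The cyclic argument is needed only where leaders repeat inside $W$, which is exactly where the pigeonhole breaks, as you observe.

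Two minor remarks. First, your split relies on $n=3f+2p-1$ exactly, which matches the pseudocode and the paper's other proofs. The model section, however, literally allows $n\ge 3f+2p-1$, so $n=3f+2$ would fall between your cases. The claim still holds there by counting honest positions in $W$ against the runs cut out by Byzantine positions:
\begin{itemize}
\item if the first round's leader is Byzantine, it also leads the last round, leaving at least $2f+2$ honest positions in at most $f$ runs;
\item otherwise at least $2f+3$ honest positions lie in at most $f+1$ runs.
\end{itemize}
Either way some run has length at least three. Second, the degenerate case in your cyclic argument is ``no Byzantine validator at all'' rather than $f=0$, though it is equally trivial.
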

\begin{proof}
First note that the system consists of $n \ge 3f+1$ validators. In each cycle of $n$ rounds, every validator is scheduled exactly once as leader, and therefore there are at most $f$ Byzantine leaders in each cycle.

Suppose for contradiction that in the cyclic order (including wrap-around), there are no three consecutive rounds whose leaders are honest. Then every maximal block of consecutive honest leaders has length at most two.

In the cyclic order, the $f$ Byzantine leaders act as separators and therefore induce at most $f$ maximal blocks of consecutive honest leaders.
If none of these blocks has length at least three, then each has size at most two, and therefore the total number of honest leaders is at most $2f$, which contradicts the fact that the number of honest leaders is at least $n-f = 2f+1$.
Therefore, the cyclic order contains three consecutive honest leaders (including wrap-around). 

Now consider any window of $3f+3$ consecutive rounds. Such a window contains a full cycle of $n$ rounds plus the first two rounds of the next cycle. Hence it contains every cyclic triple of the leader schedule, including one consisting entirely of honest leaders. Therefore, every such window contains three consecutive rounds with honest leaders. 
\end{proof}

\begin{lemma}[All leader slots decided]
\label{lemma:decision-liveness}
After GST any undecided slot eventually gets decided.
\end{lemma}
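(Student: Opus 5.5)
The plan is to follow the Mysticeti/Starfish argument and use Lemma~\ref{lemma:3-consecutive} to find a good anchor. Fix an undecided slot $s$ in round $r$ and consider rounds after $\max(r, r_{\max}+1)$. By Lemma~\ref{lemma:3-consecutive}, some window of $3f+3$ rounds contains three consecutive rounds $r_a, r_a+1, r_a+2$ with honest leaders. I will choose this window so that $r_a > r+2$ and $r_a > r_{\max}+1$, so that Lemmas~\ref{lemma:leader references} and~\ref{lemma:leader votes} apply to all three. By Lemma~\ref{lemma:honest leader committed}, each of these honest leader blocks is eventually marked \texttt{to-commit} by every honest validator via the direct rule. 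Here I would note that the slot cannot be directly skipped, since no quorum of non-voters exists, and that either the fast or the slow direct commit pattern eventually appears.

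Next I show that every slot between $r$ and $r_a$ eventually leaves the \texttt{undecided} state. I argue by reverse induction on slots $r' < r_a$ with $r' \ge r$. For each such slot, the anchor search looks at slots in rounds greater than $r'+2$. If the earliest such non-skipped slot is already decided as \texttt{to-commit}, then the indirect rule either commits $b$ (if one of its two conditions holds) or skips it; in both cases the slot is decided. If instead the earliest candidate is \texttt{undecided}, it lies strictly between $r'+2$ and $r_a$, so by the induction hypothesis it eventually becomes decided. Once it is decided it is either skipped, in which case the search moves on, or committed, in which case it serves as the anchor.

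To make this induction well founded, I must ensure that the slots closest to $r_a$ have a decided anchor. For any slot $r' \in [r_a-3, r_a-1]$, the candidate anchors are in rounds $\ge r'+3 \ge r_a$. The first such candidate is one of the three honest committed slots $r_a, r_a+1, r_a+2$, which is \texttt{to-commit}. This is exactly why three consecutive honest leaders are needed: every slot whose anchor window starts inside them finds a committed anchor. For slots in $[r_a, r_a+2]$, the direct commit applies. The main obstacle is bookkeeping: I must check that the indirect rule always outputs a decision once a committed anchor exists. I also need to justify that an anchor's causal history, and hence its SP-certificate and FP-evidence paths, is fully present in the DAG of each honest validator. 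That follows because blocks are inserted only with their causal histories and, after GST, are disseminated via \textbf{B1}/\textbf{B2}.

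Finally, I combine these steps. Every slot in $[r, r_a+2]$ is eventually decided at every honest validator, so $s$ is decided. Consistency of the decisions across validators is already given by Lemma~\ref{lemma:consistent-slots-1}, so only eventual decision needs to be proven here.
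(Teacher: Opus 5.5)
Your proposal is correct and follows essentially the same route as the paper. Both use Lemma~\ref{lemma:3-consecutive} and Lemma~\ref{lemma:honest leader committed} to obtain three consecutive committed honest-leader slots above the undecided slot. Slots in the three rounds just below them find one of these as anchor, and every lower slot is resolved top-down. Your reverse induction is the paper's ``highest undecided round'' contradiction argument written in unrolled form.
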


\begin{proof}
Assume, for the sake of contradiction, that some slot remains undecided.
Let $r_0$ be a round containing an undecided slot. By
Lemma~\ref{lemma:3-consecutive}, there eventually exist three consecutive
rounds $r',r'+1,r'+2$, with $r'>r_0$, whose leaders are honest. By
Lemma~\ref{lemma:honest leader committed}, the corresponding leader slots
$s',s'_1,s'_2$ are marked \texttt{to-commit}.

Since $r_0<r'$, there is an undecided slot among the rounds up to
$r'+2$. Let $r$ be the highest round up to $r'+2$ containing an
undecided leader slot, and let $s$ denote its slot.

If $r \in {r'-3,r'-2,r'-1}$, then one of the slots
$s',s'_1,s'_2$ is the first committed slot above round $r$ and serves as
the committed anchor for round $r$. Hence, slot $s$ is decided.

If $r < r'-3$, then, by the choice of $r$, all leader slots in rounds
greater than $r$ and up to $r'+2$ are decided. Thus, there is a committed
block in round $r+3$ or higher that can serve as the committed anchor
for round $r$; if no earlier committed slot serves as the anchor, then
the block corresponding to $s'$ serves as the anchor. Hence, slot $s$
is decided.

In both cases, slot $s$ is decided, contradicting the choice of $r$.
\end{proof}

\begin{theorem}[Agreement]
    \label{thm:agreement}
    FinWhale satisfies the agreement property of Byzantine Atomic Broadcast.
\end{theorem}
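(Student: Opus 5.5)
The plan is to show that whenever some honest validator $v_i$ outputs $\mathsf{a\_deliver}_i(m,id,v_k)$, every other honest validator $v_j$ eventually delivers the same tuple. I would proceed in three steps.

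First, trace the delivery back to a committed leader. Since $v_i$ delivers only inside \textsc{OnCommitLeader}, the tuple comes from a block $b$ with $b.author=v_k$ and $(m,id)\in b.payload$. The block $b$ lies in the causal history of some leader block $b_\ell$ in $v_i$'s commit sequence.

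Second, show that $v_j$'s commit sequence eventually reaches $b_\ell$. By Lemma~\ref{lemma:decision-liveness}, after GST every slot in $v_j$'s DAG eventually gets decided. This requires that $v_j$ eventually receives all relevant blocks, which I would justify using reliable links together with the cordial dissemination triggers \textbf{B1} and \textbf{B2}: any block referenced by an honest block is eventually delivered to all honest validators. Hence $v_j$'s sequence of decided slots eventually extends past the slot of $b_\ell$. By Lemma~\ref{lemma:consistent-slots-1} and Lemma~\ref{lemma:leader-consistent}, $v_j$ decides every slot up to $b_\ell$'s slot the same way $v_i$ does. In particular, $v_j$ commits $b_\ell$ and runs \textsc{OnCommitLeader} on it.

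Third, conclude delivery of the tuple. Block hashes bind causal histories, so the causal history of $b_\ell$ at $v_j$ is identical to that at $v_i$, and in particular contains $b$. By Theorem~\ref{thm:total-order}, both validators process the same sequence of blocks in the same deterministic order and apply the same duplicate-filtering rule. The filter depends only on that prefix, so $v_j$ delivers exactly the tuple $(m,id,v_k)$ that $v_i$ delivered. Integrity (Theorem~\ref{thm:integrity}) ensures no other message is delivered for the pair $(id,v_k)$.

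The main obstacle is the second step: showing that decisions eventually happen at $v_j$ even if $v_i$ decided before GST. Lemma~\ref{lemma:decision-liveness} is stated for slots after GST, so I need that once the undecided prefix is resolved from a later committed anchor, the earlier slots are decided too. I would handle this by taking an honest committed triple in rounds above $r_{\max}$ (Lemma~\ref{lemma:3-consecutive} and Lemma~\ref{lemma:honest leader committed}). This triple provides committed anchors, and the indirect rule then decides all earlier slots by backward induction, exactly as in the proof of Lemma~\ref{lemma:decision-liveness}.
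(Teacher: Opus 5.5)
Your proposal is correct and follows essentially the same route as the paper. Both trace the delivery back to a committed leader, use Lemma~\ref{lemma:leader-consistent} and Lemma~\ref{lemma:decision-liveness} to show that $v_j$'s commit sequence eventually contains that leader, and conclude from the determinism of the sort and the duplicate filter; you additionally make explicit points the paper leaves implicit (propagation of the needed blocks, hash-binding of causal histories, and resolution of pre-GST undecided slots through a later honest committed triple, which Lemma~\ref{lemma:decision-liveness} as stated already covers).
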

\begin{proof}
Suppose an honest validator $v_i$
 ordered a block $b$ and delivered its payload. This is the result of $v_i$ applying a deterministic sort to the causal histories of a committed leader sequence $L_0, L_1, \dots, L_n$. Let $v_j$ be any other honest validator.

By Lemma~\ref{lemma:leader-consistent}, the committed leader sequence is consistent across all honest validators. Moreover, by Lemma~\ref{lemma:decision-liveness}, all leader slots are eventually decided, and therefore $L_0, L_1, \dots, L_n$ become part of $v_j$'s commit sequence.

Since $v_j$ applies the same deterministic algorithm to the same committed leader sequence, it will also order $b$ and deliver its payload.
\end{proof}

\begin{lemma}
\label{lemma:commit sequence}
    Every leader block created in round $r > r_{\max}$ by an honest validator is included in the commit sequence.
\end{lemma}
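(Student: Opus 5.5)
The plan is to combine Lemma~\ref{lemma:honest leader committed} with Lemma~\ref{lemma:decision-liveness}, using the definition of the commit sequence. Fix an honest leader block $b$ of round $r > r_{\max}$ and an arbitrary honest validator $v_i$. The commit sequence of $v_i$ is extended in ascending round order through all slots marked \texttt{to-commit} or \texttt{to-skip}, and it stops only at the first \texttt{undecided} slot. So it suffices to show two things. First, the slot of $b$ is eventually marked \texttt{to-commit} with $b$ as the committed block. Second, every slot in rounds below $r$ is eventually decided at $v_i$.

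For the first claim, I invoke Lemma~\ref{lemma:honest leader committed}. Every honest validator eventually receives either $n-p$ voters for $b$ (when $p=f$) or $2f+p$ SP-certificates for $b$ in round $r+2$. Either way the direct commit rule applies, so $v_i$ marks the slot \texttt{to-commit}.

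I also check that the committed block is $b$ itself and not some other block of the slot. Because $b$'s author is honest, it produces no conflicting block, so $b$ is the only block of the slot in any DAG.

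I also check that a direct skip cannot preempt the commit. \textsc{TryDirectDecide} tests the skip rule first, so this matters. By Lemma~\ref{lemma:leader references}, all $n-f$ honest round-$(r+1)$ blocks vote for $b$. Hence at most $f < 2f+p$ non-voters exist and the SP-skip pattern never arises. Alternatively, this follows from Lemma~\ref{lemma:direct-skip-a} applied to the commit at any validator.

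The slot also cannot be indirectly skipped. This follows from Lemma~\ref{lemma:direct-commit-safety-a}, or from Lemma~\ref{lemma:consistent-slots-1} given that some validator directly commits $b$.

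For the second claim, I apply Lemma~\ref{lemma:decision-liveness} to each of the finitely many slots in rounds below $r$ that are undecided at $v_i$. Each eventually gets decided. Decisions are non-reversible, so after a finite time all slots up to and including $r$ are decided at $v_i$.

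The commit sequence then extends past round $r$ and includes $b$. By Lemma~\ref{lemma:leader-consistent}, this happens consistently at every honest validator.

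The main obstacle is the interplay with the decision loop. \textsc{TryDecide} only iterates down to $r_{decided}+1$, so I must make sure that the slot of $b$ is actually re-examined once the required blocks arrive. Since $r_{decided}$ only advances past decided slots, the slot of $b$ stays in the examined range until it is decided. That decision happens once $v_i$'s DAG contains the voters or certificates guaranteed above, and these arrive eventually by reliable delivery and the \textbf{B1}/\textbf{B2} broadcasts.
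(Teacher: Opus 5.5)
Your proposal is correct and takes the same approach as the paper. The paper's proof combines Lemma~\ref{lemma:honest leader committed} (the slot of $b$ is marked \texttt{to-commit}) with Lemma~\ref{lemma:decision-liveness} (every earlier undecided slot eventually gets decided) and the definition of the commit sequence; your extra checks, namely that $b$ is the only block of its slot, that no direct or indirect skip can preempt the commit, and that the slot stays within the range scanned by \textsc{TryDecide}, are sound and just spell out details the paper leaves implicit.
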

\begin{proof}
 By Lemma~\ref{lemma:honest leader committed}, 
 any leader block $b$ by an honest validator is committed.
 By Lemma~\ref{lemma:decision-liveness} any undecided slot preceding the slot of $b$ will be decided.
Consequently, $b$ will be part of the commit sequence.
\end{proof}

\begin{theorem}[Validity]
    \label{thm:validity}
    FinWhale satisfies the validity property of Byzantine Atomic Broadcast.
\end{theorem}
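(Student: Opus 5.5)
To prove Validity, I would follow the standard Mysticeti/Starfish argument and chain the liveness lemmas already established. Fix an honest validator $v_k$ that invokes $\mathsf{a\_bcast}_k(m,id)$. By the protocol, the pair $(m,id)$ is enqueued in $blocksToPropose$ and included in the payload of the next block $b_k$ that $v_k$ creates. $v_k$ keeps executing rounds: the pacemaker guarantees progress after GST by Lemma~\ref{lemma:round synch}, and \textbf{C2} ensures a block is eventually created in every round. So $b_k$ exists, and I may take it to be created in some round $r_k$. Since invocations may occur before GST, I will note that if $r_k \le r_{\max}$ the same argument still applies to ancestors later than $r_{\max}$, because the relevant leaders are chosen after GST.

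The central step is to show that $b_k$ enters the causal history of some committed honest leader block $b_L$ created in a round $r > \max(r_k, r_{\max})+1$. By Lemma~\ref{lemma:3-consecutive}, there are infinitely many rounds with honest leaders, so I can pick such an $r$ with honest leader $L_r$. I then argue that $b_k$ lies in the causal history of $b_L$ in one of two ways.
\begin{itemize}
\item If some honest validator references $b_k$ (or a descendant of it) in round $r_k+1$, then Lemma~\ref{lemma:timely delivery} and Lemma~\ref{lemma:leader references} give this. Honest round-$(r-1)$ blocks are all delivered before timeouts, so $L_r$ references at least $n-f$ round-$(r-1)$ blocks including honest ones. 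Those blocks in turn reach back to $b_k$ via the honest chain of own-block edges and $n-f$ references per round.
\item Otherwise, I rely on the ``weak link'' rule. Each honest validator includes, in its next block, the latest block of every author not yet referenced. Since $b_k$ is broadcast via \textbf{B1} and received by $L_r$ after GST within $\Delta$, the block $b_L$, or an ancestor of it, links to $b_k$ or to a later block of $v_k$, which contains $b_k$ in its history through the self-edge.
\end{itemize}

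With $b_k$ in the history of $b_L$, Lemma~\ref{lemma:commit sequence} places $b_L$ in the commit sequence of every honest validator, as Lemma~\ref{lemma:decision-liveness} makes all earlier slots decided. $\textsc{OnCommitLeader}(b_L)$ then orders every reachable block, so $b_k$ is ordered, either here or by an earlier committed leader, at every honest $v_i$. The tuple $(m,id,v_k)$ is delivered unless a tuple with the same $(id,v_k)$ was delivered before. Since $v_k$ is honest and ids are unique, any such earlier tuple would have to be $(m,id)$ itself, so $\mathsf{a\_deliver}_i(m,id,v_k)$ occurs in either case.

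The main obstacle is the second step: rigorously showing that $b_k$ is reachable from a committed honest leader block, particularly when $b_k$ was created before GST and was ignored by round-$(r_k+1)$ parent sets. I would handle this mainly through the self-edge of $v_k$'s later blocks, all of which are created after GST and must be referenced by honest round-successors via Lemma~\ref{lemma:timely delivery}, combined with the weak-link inclusion rule.
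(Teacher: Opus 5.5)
Your proposal follows essentially the same route as the paper's proof: $(m,id)$ lands in an honest block $b_k$, $b_k$ lies in the causal history of an honest leader block from a round after $r_{\max}$, and Lemma~\ref{lemma:commit sequence} together with the duplicate filter gives delivery. The paper only asserts the middle step by appeal to the weak-link parent rule, and your second bullet is what spells that step out.

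Your first bullet does not work as written. A single honest reference to $b_k$ at round $r_k+1$ does not by itself propagate into every later block via $n-f$ references; quorum intersection only forces that once at least $f+1$ honest blocks of some round carry $b_k$. The bullet is unnecessary, though, because the weak-link argument never uses the ``otherwise'' hypothesis and so covers both cases. Apply it unconditionally, choosing the honest leader round $r$ late enough that $L_r$ has actually received $b_k$ before creating $b_L$. Do not rely on ``within $\Delta$'' for this, since rounds may advance faster than $\Delta$.
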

\begin{proof}
  If an honest validator $v_k$ calls $\mathsf{a\_bcast}_k(m, id)$, then the tuple $(m,id)$ is integrated into the payload of some block $b$ created by $v_k$.

Since every block created by an honest validator becomes part of the causal history of some honest leader block, and by Lemma~\ref{lemma:commit sequence} such a leader block is included in the commit sequence, every honest validator eventually orders $b$ and delivers $(m,id,v_k)$.
    
\end{proof}
\section{Related Work}
\label{sec:related-work}
\subsection{DAG-based Protocols}
DAG-based BFT protocols can be broadly classified into \emph{certified} and \emph{uncertified} DAG protocols, according to the mechanism used to disseminate blocks.
In certified DAG protocols, validators disseminate blocks using a Byzantine Reliable Broadcast (BRB) primitive. 
BRB prevents equivocation and ensures block availability by guaranteeing that blocks are consistently delivered to all honest validators. 
Typically, each block must first be certified by a quorum of validators before being included in the DAG. 
However, BRB-based dissemination may incur an additional latency of $2$--$4$ message delays.
In contrast, uncertified DAG protocols aim to improve latency by using best-effort broadcast (BEB) instead of BRB, where the uniqueness of a block is not guaranteed. 
Thus, these protocols must explicitly handle~equivocations.

We first consider certified DAG-based consensus protocols.
Aleph is an asynchronous protocol that constructs a round-based DAG and uses an efficient binary agreement protocol for ordering.
DAG-Rider~\cite{dag-rider} is an asynchronous DAG-based BFT protocol that progresses in waves consisting of four rounds, with a single leader in each wave.
Tusk~\cite{narwhal-tusk} refines DAG-Rider by replacing conventional BRB with quorum-based BRB, modifying the commit rules to improve latency in common-case executions, and enabling efficient garbage collection.

Bullshark~\cite{bullshark} builds upon DAG-Rider to improve commit latency during synchronous periods. 
Its partially synchronous variant elects one leader every two rounds.
Shoal~\cite{shoal} interleaves two instances of Bullshark, thereby introducing a leader in every round.
Sailfish~\cite{sailfish} and Shoal++~\cite{shoal++} similarly optimize the commit rules for leader blocks by using the first messages of the BRB to count votes for the leader. In addition, they support multiple leaders per round, further improving commit latency.

%Uncertified DAG protocols aim to improve latency by using best-effort broadcast (BEB) instead of BRB, where the uniqueness of a block is not guaranteed. 
%Thus, these protocols must explicitly handle equivocations.

We next discuss uncertified DAG-based protocols. HashGraph~\cite{HashGraph} builds an unstructured DAG through validators disseminating blocks containing two references to previous blocks. It uses virtual voting by interpreting the Hashgraph structure itself, together with the notion of ``strongly seeing'' (similar to certificates), to handle equivocations. A binary agreement protocol is then used to order transactions.

Cordial Miners~\cite{Cordial-Miners} introduce both asynchronous and partially synchronous DAG-based protocols. 
In these constructions, a single leader is assigned in the first round of each wave, where waves consist of five rounds in the asynchronous setting and three rounds in the partially synchronous setting.

Mysticeti~\cite{mysticeti} is a partially synchronous DAG protocol built upon Cordial Miners. 
It pipelines waves and supports multiple leaders per round in order to improve commit latency.
A followup work~\cite{starfish} introduced four DAG protocols:
The first revisits Mysticeti with an improved push-based pacemaker that resolves its liveness issues.
Starfish further decouples payload dissemination from metadata dissemination, improving communication complexity. 
Finally, Starfish-L and Mysticeti-L combine multi-signatures with a Lazy-Push pacemaker to further reduce communication overhead.
In Finwhale, we build on and adapt the revised Mysticeti protocol introduced in~\cite{starfish}, as well as their approach of using a single leader per round.
BBCA-Chain~\cite{bbca-chain} introduces a design in which leader blocks are broadcast using a BBCA primitive built on top of BRB, while non-leader blocks are disseminated using BEB, where leaders are assigned in every round.

\subsection{Fast Consensus}
Kursawe\cite{optimistic-agreement}
was the first to implement a fast (two-step) Byzantine consensus protocol in which  two-step fast path is paired with a subprotocol in the slow path. 
It is able to run with $3f+1$ validators and the fast path is taken when all validators  are honest and the network is synchronous.
Otherwise, a fallback subprotocol is used to ensure liveness.
An asynchronous binary Byzantine consensus protocol with fast-path termination is presented in
\cite{FMR}. The randomized protocol, assisted by a random oracle, tolerates $n>5f$ validators and decides within one communication round in favorable executions. %The paper also proposes a failure-detector-based protocol that tolerate $n>6f$ validators and achieve the same one-round termination property under favorable conditions.\\
%MMR doesnt have a fast path, does it?

Bosco~\cite{bosco} is a Byzantine consensus protocol that achieves a two-step fast path with $n>5f$ validators when all validators propose the same value. 
The fast path can also be achieved with $n>7f$ when all honest validators are in pre-agreement.

FaB Paxos~\cite{fab} is a fast Byzantine consensus protocol requiring $n \ge 5f+1$ validators. 
The paper also presents a parameterized variant with $n \ge 3f+2p+1$, which achieves a two-step fast path in the common case when the leader is correct, the network is synchronous, and at most $p$ validators are Byzantine.

The lower bound of FaB Paxos was later improved to $n \ge 5f-1$ by~\cite{good-case-broadcast}. 
An independent work~\cite{revisiting-optimial-resilience} proved a lower bound of $n \ge 3f+2p-1$ for the parameterized setting. 
The key idea is that equivocating leaders can be detected, allowing validators to wait for $n-f$ votes while excluding the vote of the Byzantine leader.

Banyan~\cite{banyan} is a rotating-leader protocol based on ICC~\cite{ICC}. 
It achieves two-step termination in the partially synchronous model with $n \ge 3f+2p-1$ validators. 
As long as at most $p$ replicas are unresponsive, termination occurs within two message delays. 
The dual-mode mechanism enables two-step finalization latency in the fast path, and ensures that no penalties are incurred when the fast path is not taken.

Kudzu~\cite{Kudzu} is an atomic broadcast protocol with an integrated fast path. 
It achieves finality in just two rounds of communication for $n = 3f + 2p + 1$ validators, provided that at least $n - p$ replicas behave honestly. 
Building on DispersedSimplex~\cite{DispersedSimplex}, Kudzu introduces minimal modifications to incorporate a fast path. 
Furthermore, the protocol uses erasure codes to ensure that the leader can disseminate large blocks with low and well-balanced communication complexity.

In a concurrent work, BlueBottle~\cite{bluebottle} extends Mysticeti with a fast path under the assumption of $n = 5f + 1$ validators. Unlike FinWhale, which targets the parameterized setting and integrates both fast and slow paths while operating under the lower bound of $n = 3f + 2p - 1$, BlueBottle focuses on the non-parameterized setting and the fast path alone.

\section{Conclusions} \label{sec:conclusion}

In this work, we presented FinWhale, the first DAG-based Byzantine Fault Tolerant protocol to integrate a fast-path mechanism that achieves termination in just two message delays when favorable conditions occur. 
By carefully combining fast-path commit patterns with the existing slow-path mechanisms of Mysticeti, FinWhale enables validators to commit transactions more quickly while maintaining safety and liveness under partial synchrony with $n = 3f + 2p - 1$ validators. 

Our protocol tolerates up to $f$ Byzantine faults and ensures fast termination whenever the number of faults does not exceed $p$, all while preserving the foundational guarantees and high-throughput benefits of Mysticeti. 
Beyond achieving the theoretical lower bound for latency in BFT consensus based protocols, FinWhale demonstrates that fast-path techniques traditionally applied to leader-based protocols can be effectively adapted to DAG structures. 
%
%Overall, FinWhale advances the state of the art in DAG-based consensus by providing a formally proved mechanism for low-latency transaction commitment.
This helps bridge the gap between high-throughput DAG designs and optimal fast-path decision-making.

Starfish, Starfish-L, and Mysticeti-L~\cite{starfish} are protocols closely related to Mysticeti and designed to achieve improved communication complexity by combining multi-signatures, lazy dissemination of history, and efficient information dispersal techniques. Exploring how these techniques can be incorporated into FinWhale is an interesting direction for future work.

\paragraph*{Acknowledgements.}
We thank Alberto Sonnino and Sebastian Müller for insightful discussions that helped us better understand Mysticeti and Starfish.

%%
%% Bibliography
%%

%% Please use bibtex, 

\bibliography{references}

%\appendix

\end{document}